\newcommand{\Pij}{ { {\tt Pj}_{\sigma^{T^i,i}_\ell } }}
\newtheorem{theorem}{Theorem}[section]
\newtheorem{fact}[theorem]{Fact}
\newtheorem{lemma}[theorem]{Lemma}
\newtheorem{corollary}[theorem]{Corollary}
\newtheorem{claim}[theorem]{Claim}
\newtheorem{definition}[theorem]{Definition}
\theoremstyle{remark} \newtheorem{remark}[theorem]{Remark}}
\theoremstyle{definition}
\newcommand{\bg}[1]{\medskip\noindent{\bf #1}}
\def\QQ{\mathbb Q}
\def\RR{\mathbb R}
\def\ZZ{\mathbb Z}
\def\cB{\mathcal B}
\def\cC{\mathcal C}
\def\cG{\mathcal G}
\def\cH{\mathcal H}
\def\cL{\mathcal L}
\def\cT{\mathcal T}
\def\cU{\mathcal U}
\def\cX{\mathcal X}
\def\bzero{\mathbf 0}
\newcommand{\hide}[1]{}
\newcommand{\raf}[1]{(\ref{#1})}
\newcommand{\cP}{\ensuremath{\mathcal{P}}}
\newcommand{\cQ}{\ensuremath{\mathcal{Q}}}
\newcommand{\cR}{\ensuremath{\mathcal{R}}}
\newcommand{\cV}{\ensuremath{\mathcal{V}}}
\newcommand{\Pc}{\ensuremath{\mathcal P}}
\newcommand{\U}{\ensuremath{\mathcal U}}
\newcommand{\OPT}{\ensuremath{\textsc{Opt}}}
\newcommand{\rank}{\ensuremath{\mathrm{rank}}}
\newcommand{\cpr}{\ensuremath{\mathrm{cp\text{-}rank}}}
\newcommand{\conv}{\operatorname{conv}}
\newcommand{\poly}{\operatorname{poly}}
\newcommand{\polylog}{\operatorname{polylog}}
\newcommand{\bone}{\ensuremath{\boldsymbol{1}}}
\newcommand{\opt}{\ensuremath{\mathsf{opt}}}
\newcommand{\argmax}{\operatorname{argmax}}
\def\PQC{{\sc Packing-BQC}}
\def\PL{{\sc Pack-Lin}}
\def\PS{{\sc Pack-Sub}}
\def\PQ{{\sc Pack-Quad}}
\def\PLPTAS{{\sc Pack-Lin-PTAS}}
\def\PSAPPROX{{\sc Pack-Sub-Approx}}
\def\CQC{{\sc Covering-BQC}}
\def\CL{{\sc Cover-Lin}}
\def\CS{{\sc Cover-Sub}}
\def\CQ{{\sc Cover-Quad-QC}}
\def\GRC{{\sc Greedy-Cover}}
\def\CLQPTAS{{\sc Cover-Lin-QPTAS}}
\newenvironment{changemargin}[2]{%
\begin{list}{}{%
\setlength{\topsep}{0pt}%
\setlength{\leftmargin}{#1}%
\setlength{\rightmargin}{#2}%
\setlength{\listparindent}{\parindent}%
\setlength{\itemindent}{\parindent}%
\setlength{\parsep}{\parskip}%
}%
\item[]}{\end{list}}
\newcommand{\fullonly}[1]{#1}
\newcommand{\sodaonly}[1]{}
\title{Approximation Schemes for Binary Quadratic Programming Problems with Low cp-Rank Decompositions}  
\author{Khaled Elbassioni\thanks{Department of Electrical Engineering and Computer Science, Masdar Institute of Science and Technology, Abu Dhabi, UAE}
\qquad Trung Thanh Nguyen\thanks{Department of Electrical Engineering and Computer Science, Masdar Institute of Science and Technology, Abu Dhabi, UAE}}
\affil{}
\begin{document}

\maketitle
\begin{abstract}
Binary quadratic programming problems have attracted much attention in the last few decades due to their potential applications. This type of problems are NP-hard in general, and still considered a challenge in the design of efficient approximation algorithms for their solutions. The purpose of this paper is to investigate the approximability for a class of such problems where the constraint matrices are {\it completely positive} and have low {\it cp-rank}. In the first part of the paper, we show that a completely positive rational factorization of such matrices can be computed in polynomial time, within any desired accuracy. 
We next consider binary quadratic programming problems of the following form: Given matrices $Q_1,\ldots,Q_n\in\RR_+^{n\times n}$, and a system of $m$ constrains $x^TQ_ix\le C_i^2$ ($x^TQ_ix\ge C_i^2$), $i=1,\ldots,m$, we seek to find a vector $x^*\in \{0,1\}^n$ that maximizes (minimizes) a given function $f$. This class of problems generalizes many fundamental problems in discrete optimization such as packing and covering integer programs/knapsack problems, quadratic knapsack problems, submodular maximization, etc. We consider the case when $m$ and the cp-ranks of the matrices $Q_i$ are bounded by a constant.

Our approximation results for the maximization problem are as follows. For the case when the objective function is nonnegative submodular, we give an $(1/4-\epsilon)$-approximation algorithm, for any $\epsilon>0$; 
when the function $f$ is linear, we present a PTAS. We next extend our PTAS result to a wider class of non-linear objective functions including quadratic functions, multiplicative functions, and sum-of-ratio functions. 
The minimization problem seems to be much harder due to the fact that the relaxation is {\it  not} convex. For this case, we give a QPTAS for $m=1$.
\end{abstract}

\newpage

\pagenumbering{arabic}
\normalsize

\section{Introduction}
Binary quadratic programming is a classical class of combinatorial optimization problems containing a quadratic function either in its objective or in its constraints.
The study of this problem has been of great theoretical interest in the last few decades due to its wide range of applications, for example, capital budgeting and financial analysis \cite{lau:j:quadratic-binary-programming-with-application-to-capital-budgeting-problem,mc-yor:j:an-implicit-enumeration-algorithm-for-quadratic-integer-programming}, machine scheduling \cite{ali-koc-ahm:j:0-1-quadratic-programming-approach-for-the-optimal-solution-of-two-scheduling-problems}, and traffic message management problems \cite{gal-ham-sim:j:quadratic-knapsack-problems}.
One can mainly distinguish two types of special classes of binary quadratic programming that have been studied extensively in the literature. The first class involves maximizing a quadratic function without any structural
constraints (a.k.a {\em unconstrained $0$-$1$ quadratic problem}) which is is known to be APX-hard since it generalizes, for example, the maximum cut problem. The second class deals with the {\em quadratic knapsack problem} (QKP) arising from the classical knapsack problem by maximizing a quadratic objective function subject to (linear non-negative) knapsack constraint(s) (see. e.g., \cite{pis:j:the-quadratic-knapsack-problem-a-survey}). Very recently, Yu and Chau \cite{yu-cha:c:complex-knapsack-problem-and-incentive-in-AC-power-systems} introduced a new variant of the knapsack problem,  called {\em the complex-demand knapsack problem} (CKP), that arose from the allocation of power in AC (alternating current) electrical systems. In this setting, the constraint is exactly a sum of two squares of linear non-negative functions, and thus is a generalization of the classical knapsack constraint.  

Binary programming with quadratic objective function and/or quadratic constraints is hard to approximate in general. For example, quadratic knapsack problem is NP-hard to approximate to within any finite worst case factor \cite{rad-woe:j:the-quadratic-0-1-knapsack-problem-with-series-parallel-support}. Most algorithms for this kind of
problems focus on handling various special cases of the general problem
 \cite{rad-woe:j:the-quadratic-0-1-knapsack-problem-with-series-parallel-support,pfe-sch:c:approximating-the-quadratic-knapsack-problem-on-special-graph-classes,kel-str:j:fully-polynomial-approximation-schemes-for-a-symmetric-quadratic-knapsack,kel-str:j:the-symmetric-quadratic-problem-approximation-and-scheduling-applications,xu:j:a-strongly-polynomial-fptas-for-the-symmetric-quadratic-knapsack-problem}. Those papers investigate the existence of approximation schemes by exploiting the graph-theoretic structure of QKPs or taking into account the special multiplicative structure of the coefficients of the objective function. Another remarkable direction in studying the approximability of binary quadratic programming is to restrict the objective function to certain special types, in particular, those having {\em low-rank}\fullonly{\footnote{A function $f : \RR^n \rightarrow \RR$ is said to be of rank $k$, if there exists a continuous function $g : \RR^k \rightarrow  \RR $ and linearly independent vectors $a_1,\ldots , a_k \in \RR^n$ such that $f (x) = g(a^T_1 x, \ldots , a^T_k x)$, (see \cite{kel-nik:c:on-the-hardness-and-smoothed-complexity-of-quasi-concave-minimization}).}} \cite{kel-nik:c:on-the-hardness-and-smoothed-complexity-of-quasi-concave-minimization}. Intuitively, a function of low-rank can be expressed as a combination of a (fixed) number of linear functions. Allemand et al. \cite{all-fuk-lie-ste:j:a-polynomial-case-unconstrained-quadratic-optimization}
study the unconstrained 0-1 quadratic maximization problem and propose a polynomial-time algorithm when the symmetric matrix describing the coefficients of the quadratic objective function is positive semidefinite and has fixed rank\fullonly{\footnote{Note that in this case the objective function can be written as a sum of the squares of not more than $k$ linear functions, where $k$ is the rank of the coefficient matrix.}}. Kern and Woeginger \cite{ker-woe:j:quadratic-programming-and-combinatorial-minimum-weight-product-problems}
and Goyal et al. \cite{goy-kay-rav:j:an-fptas-for-minimizing-product-two-non-negative-linear-cost-function} give an FPTAS for minimizing the product of two non-negative linear
functions when the convex hull of feasible integer solutions is given in terms of linear inequalities. Mittal and Schulz \cite{mit-sch:j:an-fptas-optimizing-a-class-of-lower-rank-functions-over-a-polytope,mit-sch:j:a-general-framework-for-designing-approximation-scheme} and Goyal and Ravi \cite{goy-rav:j:an-fptas-for-minimizing-class-of-low-rank-quasi-concave-function} extend this result to the more general class of low-rank functions.

In this paper, we consider a class of 0-1 quadratic programming, where a (nonnegative) linear or submodular objective function has to be minimized or maximized subject to a system of quadratic inequalities involving nonnegative semidefinite matrices with low {\em completely positive ranks}. Intuitively, the completely positive rank (cp-rank) of a semidefinite matrix $Q$ is the smallest positive integer $r$ such that $Q$ can be decomposed into a sum of $r$ non-negative rank-one matrices. As a consequence, every quadratic constraint involving such matrices can be written as the sum of $r$ squares of (non-negative) linear functions which we call {\em cp-decomposition}. Hence, our problem can be seen as a generalization of the submodular maximization with (linear) knapsack constraints which has been studied extensively in the literature \cite{svi:c:note-on-maximizing-submodular-function-knapsack-constraint,kul-sha-tam:c:maximizing-submodular-set-function-subject-to-multiple-linear-constraint,lee-mir-nag-svi:c:non-monotone-submodular-maximization-under-matroid-and-knapsack-constraints}. Herein, we would like to emphasize that the cp-decomposition plays an important role not only in binary quadratic programming \cite{bur:j:on-the-copositive-representation-of-binary-and-continuous-non-convex-quadratic-programs}, but also in in many other areas such as block designs in combinatorial analysis \cite{hall:b:combinatorial-theory}, data mining \cite{zas-sha:c:a-unifying-approach-to-hard-and-probabilistic-clustering,he-xie-zdu-zho-cic:j:symmetric-nonnegative-matrix-factorization-application-probabilistic-clustering} and economics \cite{van-ho-van-doo:j:on-the-parameterization-of-the-creditrisk+-model}.
Several attempts have been devoted in the last decades to settling the complexity of determining if a given semidefinite positive matrix has finite cp-rank.
Surprisingly, however, this has been settled only very recently in \cite{dic-gij:j:on-the-computational-complexity-of-membership-problem-for-cp-matrix-cone}, where it was shown that this problem is NP-hard. In this paper we restrict out attention to the case of low cp-rank and study the polynomial-time algorithm for finding a cp-decomposition of a matrix whose cp-rank is fixed.
 
\subsection{Our Contribution.}
We summarize our contributions in this paper as follows:
\begin{itemize}
	\item We give an $n^{O(r^2)}$ deterministic algorithm that given a positive semidefinite matrix $Q$ of size $n$ decides if a nonnegative factorization of inner dimension at most $r$ exists. Furthermore, a rational approximation to the solution to within an additive accuracy of $\delta$ can be computed in time $\text{poly}(\mathcal{L},n^{O(r^2)},\log\frac{1}{\delta})$, where $\cL$ is the bit length of the input. 
	\item We present a PTAS for maximizing a linear function subject to a fixed number of quadratic packing constraints involving positive semidefinite matrices with low cp-ranks. We provide an $(1-\epsilon)(\frac{1}{4} - \epsilon)$-approximation algorithm for the problem of maximizing a submodular function, for any $\epsilon>0$. For the case when the submodular function is monotone, the approximation factor is $(1-\epsilon)(1-\frac{1}{e}-\epsilon)$, for any $\epsilon > 0$. Our results generalize the ones of Sviridenko \cite{svi:c:note-on-maximizing-submodular-function-knapsack-constraint} and of Kulik et al. \cite{kul-sha-tam:c:maximizing-submodular-set-function-subject-to-multiple-linear-constraint}.
	\item We extend our PTAS result to a wider class of non-linear objective functions including quadratic functions, multiplicative functions, and sum-of-ratio functions.
      \item We give a quasi-PTAS for minimizing a linear function subject to one quadratic covering constraint involving a positive semidefinite matrix with low cp-rank.
\end{itemize}

\subsection{Previous Work.}

The problem of maximizing a nonnegative submodular function under knapsack constraints has been studied extensively in the literature. Sviridenko \cite{svi:c:note-on-maximizing-submodular-function-knapsack-constraint} studied the problem for monotone submodular functions and gave a $(1 - 1/e)$-approximation for the case of one knapsack constraint. Kulik et al. \cite{kul-sha-tam:c:maximizing-submodular-set-function-subject-to-multiple-linear-constraint} considered the problem with any constant number of knapsack constraints, and gave a $(1 - 1/e -\epsilon)$-approximation, for any $\epsilon>0$. Lee et al. \cite{lee-mir-nag-svi:c:non-monotone-submodular-maximization-under-matroid-and-knapsack-constraints} investigated the problem of maximizing a general submodular function under a constant number of knapsack constraints, and presented algorithms that achieve approximation ratio of $1/5-\epsilon$. A better approximation factor of $1/4-\epsilon$ was obtained in \cite{kul-sha-tam:c:maximizing-submodular-set-function-subject-to-multiple-linear-constraint}. In this paper, we will extend these results to the case with {\em quadratic} knapsack constraints having low cp-rank.

Chau et al. \cite{CEK14} considered the problem CKP with linear objective function and gave a PTAS
, which settles the complexity of the problem given the strong NP-hardness in \cite{yu-cha:c:complex-knapsack-problem-and-incentive-in-AC-power-systems}. As CKP is a special case of the problem of maximizing a linear function subject to a single quadratic inequality whose constraint matrix has cp-rank 2, our second result above can be thought of as a generalization of the  PTAS in \cite{CEK14} in several directions: we allow the cp-rank to be any constant, we consider any constant number of inequalities, and consider more general objective functions. In fact, our result for submodular functions is based essentially on the same geometric idea used in \cite{CEK14}.

\section{Completely Positive Rank Factorizations}
\label{sec:cp-rank}

A positive semi-definite matrix $Q\succeq 0$ is said to have {\it completely positive rank} $r$, denoted $\cpr(Q)=r$, if $r$ is the smallest number of non-negative rank-one matrices 
into which the matrix $Q$ can be decomposed additively, 
i.e., $Q=\sum_{j=1}^{r}q^{j}(q^{j})^T$, where $q^{j}\in\RR^n_+$ are non-negative vectors. If no such $r$ exists then $\cpr(Q)=+\infty$. A notion related to the cp-rank is the {\em nonnegative rank} of a nonnegative (not necessarily symmetric) matrix: a positive integer $r$ is called the nonnegative rank of a matrix $P\in\mathbb{R}_+^{n\times m}$ if and only if it is the smallest number such that there exist two matrices $U\in \mathbb{R}_+^{n\times r},V\in\mathbb{R}_+^{r\times m}$ such that $P=UV$. It is worth noting that, in general, the nonnegative rank and the cp-rank can be different from each other. In fact, for a completely positive matrix $Q$, it is known that the nonnegative rank is always less than or equal to the cp-rank. We refer to the textbook \cite{ber-sha:b:completely-positive-matrices} and the references therein for more discussion about the cp-rank. 
The complexity of deciding the existence of such a cp-decomposition for a given matrix remained open for a long time. Very recently, Dickinson and Gijben \cite{dic-gij:j:on-the-computational-complexity-of-membership-problem-for-cp-matrix-cone} proved that this problem is NP-hard but left open the question whether or not it belongs to NP. The similar hardness result was also proved for the case of nonengative rank by Vavasis \cite{vav:j:on-the-complexity-of-nonnegative-matrix-factorization}. Therefore, it is natural to pay attention to special cases in which the factorization problem can be solved efficiently. Arora et al. \cite{aro-ge-kan-moi:c:computing-non-negative-rank-matrix} and Moitra \cite{moi:c:an-almost-optimal-algorithm-for-computing-nonnegative-rank} proposed exact polynomial-time algorithms for computing a nonnegative factorization when the nonnegative rank of the input matrix is constant. Their basic idea is to transform the factorization problem into a problem of finding a solution of a system of polynomial equations which is known to be solved by an algorithm (e.g, \cite{R92,BPR96}) whose running-time is polynomial in the number of equations but is exponential in the number of variables. Based on this idea, Arora et al. \cite{aro-ge-kan-moi:c:computing-non-negative-rank-matrix} gave a transformation to a system
with $nm$ equations and $2r^22^r$ variables. As a result, checking if the nonnegative rank of $P$ is at most $r$ can be done in $O((nm)^{2r^22^r})$ time. This has then been improved by Moitra \cite{moi:c:an-almost-optimal-algorithm-for-computing-nonnegative-rank} who gave a method to exponentially reduce the number of variables to $2r^2$, thus yielding a faster algorithm with running-time $(nm)^{O(r^2)}$. It was also shown by Arora et al. \cite{aro-ge-kan-moi:c:computing-non-negative-rank-matrix} that achieving an exact $(nm)^{o(r)}$-algorithm is impossible under the Exponential Time Hypothesis \cite{imp-pat:j:on-the-complexity-k-sat}. In the following, we will show that one can obtain an exact algorithm for the cp-rank case by employing the same idea as for the nonnegative rank. Assume $Q$ is rational and let $\cL$ be the maximum bit length of any element in $Q$. The result is stated in Theorem~\ref{cp-rank} below.

\begin{theorem}
\label{cp-rank}
There is an $n^{O(r^2)}$ time deterministic algorithm that given a positive semi-definite matrix $Q$ of size $n$ produces a nonnegative factorization $Q=UU^T$ of inner dimension at most $r$ if such a factorization exists. Furthermore, 
a rational approximation to the solution to within an additive accuracy of $\delta$ can be computed in time $\poly(\cL,n^{O(r^2)},\log\frac{1}{\delta})$. 
\end{theorem}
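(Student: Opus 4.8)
\begin{proofsketch}
The plan is to follow the template of Arora et al.\ and of Moitra for nonnegative matrix factorization: encode the existence of a factorization as a satisfiability question in the existential theory of the reals over \emph{few} variables, and then invoke the algorithms of Renegar / Basu--Pollack--Roy for that theory, whose running time is polynomial in the number of polynomials and their bit lengths but only singly exponential in the number of variables.

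First I would dispose of trivial cases: if $Q$ is not positive semidefinite or has a negative entry, then no factorization $Q=UU^T$ with $U\ge 0$ exists, and the algorithm rejects; both tests take polynomial time. Otherwise set $d=\rank(Q)$, computed in polynomial time, and reject if $d>r$ (since $\rank(UU^T)=\rank(U)\le r$ would force $\rank(Q)\le r$). The crucial observation --- the analogue of Moitra's dimension-reduction trick, and in fact simpler here since there is only one factor --- is that in \emph{any} factorization $Q=UU^T$ with $U\in\RR_+^{n\times r}$ one has $\operatorname{col}(U)=\operatorname{col}(Q)$: indeed $\ker(UU^T)=\ker(U^T)$ gives $\rank(U)=\rank(Q)=d$, and $\operatorname{col}(Q)=\operatorname{col}(UU^T)\subseteq\operatorname{col}(U)$, so these two $d$-dimensional spaces coincide. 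Hence, fixing once and for all a rational matrix $B\in\RR^{n\times d}$ whose columns are $d$ linearly independent columns of $Q$ (a basis of $\operatorname{col}(Q)$), every admissible $U$ can be written $U=BT$ for a unique $T\in\RR^{d\times r}$, and conversely $BT$ is an admissible factor iff $BT\ge 0$ and $BTT^TB^T=Q$.

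This turns the problem into deciding satisfiability of the system $\exists\,T\in\RR^{d\times r}:\ (BT)_{ij}\ge 0\ (\forall i,j),\ \ (BTT^TB^T)_{k\ell}=Q_{k\ell}\ (\forall\,k\le\ell)$. It has $dr\le r^2$ real variables, $O(n^2)$ polynomial constraints of degree at most $2$, and rational coefficients of bit length $O(\cL+\log n)$ (each coefficient is a short sum of products of two entries of $B$, which are themselves entries of $Q$); moreover its solution set, viewed in the $T$-coordinates, is closed and bounded --- bounded because $B$ has full column rank and $0\le (BT)_{ij}\le\sqrt{Q_{ii}}$ follows from $Q_{ii}=\sum_j (BT)_{ij}^2$ --- hence compact, and we may if we wish append an explicit bounding box. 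Applying the decision procedure for the existential theory of the reals, which runs in time $\tau^{O(1)}(sD)^{O(k)}$ for $s$ polynomials of degree $\le D$ in $k$ variables with coefficients of bit length $\le\tau$, with $s=O(n^2)$, $D=2$, $k=r^2$, $\tau=O(\cL+\log n)$, yields running time $\poly(\cL)\cdot n^{O(r^2)}$. If the system is infeasible we report that no factorization of inner dimension at most $r$ exists.

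To actually \emph{produce} a factorization, I would instead use the sample-point version of the same machinery, which returns a point of the (nonempty, compact) solution set whose $\le r^2$ coordinates are real algebraic numbers of degree $2^{O(r^2)}$, each given by a defining univariate polynomial of bit length $\poly(\cL)\cdot n^{O(r^2)}$ together with an isolating interval, all within the same $n^{O(r^2)}$ time bound; then $U=BT$. For the rational-approximation claim I would approximate each algebraic coordinate of $T$ to additive error $\delta'$ by bisection inside its isolating interval --- time polynomial in the bit length of the defining polynomial and in $\log(1/\delta')$ --- and propagate the rounded matrix through the fixed linear map $T\mapsto BT$, choosing $\delta'=\delta/\poly(n,\cL)$ so that $\|\widetilde{U}-U\|_\infty\le\delta$; the total time is $\poly(\cL,\,n^{O(r^2)},\,\log\tfrac1\delta)$. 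I expect the main obstacles to be (i) the dimension-reduction step, which is precisely what keeps the number of variables at $O(r^2)$ rather than $\Theta(nr)$ and hence the whole procedure polynomial for fixed $r$; and (ii) the bookkeeping in the approximation part --- invoking a version of the real-algebra algorithms that outputs sample points together with explicit, quantitatively controlled algebraic-number representations, and bounding their degrees and bit lengths --- rather than the (by now standard) decision step itself.
\end{proofsketch}
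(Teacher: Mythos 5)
Your proposal is correct and follows essentially the same route as the paper: express the factor as $U=BT$ for a basis $B$ of $\operatorname{col}(Q)$ (the paper writes the equivalent $U^T=HQ_V$), reduce to a degree-$2$ polynomial system in at most $r^2$ real variables with $O(nr)$ linear inequalities and $O(n^2)$ quadratic equations, and invoke quantifier-elimination/sample-point algorithms for the existential theory of the reals. Your justification of the dimension-reduction step via $\ker(UU^T)=\ker(U^T)$ is in fact a bit more direct than the paper's SVD/pseudo-inverse argument, but the approach is the same.
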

\begin{proof}
We proceed essentially along the same lines as in the proof of Theorem 2.1 in \cite{aro-ge-kan-moi:c:computing-non-negative-rank-matrix} (for finding a simplicial factorization).  
Let $Q$ be a nonnegative positive semi-definite matrix of size $n$. Assume that $Q$ 
has a decomposition $Q=UU^T$, where $U$ is a nonnegative matrix of size 
$n\times r$. A basic fact from linear algebra is that $Q,U$ and $U^T$ have the same rank: $\rank(Q)=\rank(U)=\rank(U^T)=s\leq r$.  

Fix an arbitrary basis $V$ of the column vectors of $Q$, let $B$ be an $n\times s$ matrix corresponding to this basis and let $Q_V$ be the matrix of size $s\times n$ corresponding to the (unique) representation of $Q$ in the basis $V$, that is, $BQ_V=Q$. 
To prove the claim of the theorem, it will suffice to prove the following.

\begin{claim}\label{cl1}
 $Q$ has a non-negative factorization $Q=UU^T$ of inner-dimension at most $r$ if and only if there is an $r'\times s$ matrix $H$, with $r'\le r$, satisfying the following two conditions: 

\begin{description}
	\item (i) $HQ_V$ is a nonnegative matrix,
	\item (ii) $(HQ_V)^T(HQ_V)=Q$.
\end{description}
\end{claim}
\fullonly{
\begin{proof}
$(\Leftarrow)$ Suppose that the conditions (i) and (ii) are satisfied. Let
$U=(HQ_V)^T$.
This matrix is nonnegative and has size of $n\times r$ and thus, $Q$ has cp-rank at most $r'\le r$. 

$(\Rightarrow)$ Now suppose that there is a non-negative factorization $Q=UU^T$, where $U$ is an $n\times r'$-matrix, with $r'\le r$.  
The singular value decomposition of $U$ has the form $U=LSR^T$, where $L,R$ are orthogonal matrices ( $L^{-1}=L^T$ and $R^{-1}=R^T$) of size $n\times n$ and $r'\times r'$, respectively, and $S$ is an $n\times r'$ diagonal matrix. The first $s$ (non-zero) diagonal entries of $S$ are exactly the singular values of $U$ (corresponding to the first $s$ (non-zero) diagonal entries of $Q$). 
We can rewrite $Q$ in the form $Q=(LSR^T)(LSR^T)^T=(LSR^T)(RS^TL^T)$.
Let $N=RS^+L^T$ (an $r'\times n$ matrix), where $S^+$ is the {\em pseudo-inverse} 
of matrix $S$. More precisely, $S^+$ is an $r'\times n$ matrix such that:
\[SS^+={\left( {\begin{array}{*{20}{c}}
   {{I_{s \times s}}} & 0  \\
   0 & 0  \\
 \end{array} } \right)_{n \times n}}
\qquad \text{and}\qquad 
 S^+S={\left( {\begin{array}{*{20}{c}}
   {{I_{s \times s}}} & 0  \\
   0 & 0  \\
 \end{array} } \right)_{r' \times r'}},
\]
where $I_{s\times s}$ is an identity matrix of size $s$.

Define an $r'\times s$ matrix $H=NB$. We now prove that this matrix $T$ satisfies the two conditions (i) and (ii). Indeed, we have:
\begin{align*}
HQ_V=NBQ_V=(RS^+L^T)BQ_V= (RS^+L^T)Q=&\,\, (RS^+L^T)(LSR^T)(RS^TL^T)\\
= & \,\, RS^+SS^TL^T= RS^TL^T = U^T,
\end{align*}
from which both (i) and (ii) follow.
\end{proof}
}
By the claim, to check if $\cpr(Q)\le r$ is equivalent to determining if a system of at most $nr$ linear inequalities and $n^2$ quadratic equations on at most $r^2$ variables is feasible. This decision problem can be solved in $n^{O(r^2)}$ time using quantifier elimination algorithms \cite{BPR96}. Furthermore, a rational approximation to the solution to within an additive accuracy of $\delta$ can be computed in time $\poly(\cL,n^{O(r^2)},\log\frac{1}{\delta})$; see \cite{GV88,R92}. 
\end{proof}

\begin{corollary}\label{cp-rank-approx}
For a real matrix $M$ let $\|M\|_\infty:=\max_{i,j}|M_{ij}|$. Given a rational positive semi-definite matrix $Q$ of size $n$ such that $\cpr(Q)=r$, and $\epsilon>0$, one can compute a rational nonnegative $n\times r$-matrix $\widetilde{U}$ such that $\widetilde{U}\widetilde{U}^T\ge Q$ and $\|Q-\widetilde{U}\widetilde{U}^T\|_\infty\le \epsilon$, in time $\poly(\cL,n^{O(r^2)},\log\frac{1}{\epsilon})$. 
\end{corollary}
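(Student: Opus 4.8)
The plan is to call the algorithm of Theorem~\ref{cp-rank} to obtain a rational matrix that is entrywise close to an \emph{exact} nonnegative factorization of $Q$, and then to correct it by a tiny \emph{uniform positive shift} applied to all of its entries. The shift turns the unavoidable two-sided approximation error into a one-sided (dominating) error, and crucially it leaves the inner dimension equal to $r$: we perturb the entries of the $n\times r$ matrix, we do \emph{not} append a new column.

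Concretely, since $\cpr(Q)=r$, running the algorithm of Theorem~\ref{cp-rank} on $Q$ with target accuracy $\delta$ returns (it cannot report infeasibility, as $\cpr(Q)=r<\infty$), in time $\poly(\cL,n^{O(r^2)},\log\tfrac1\delta)$, a rational $n\times r$ matrix $\bar U$ with $\|\bar U-U\|_\infty\le\delta$ for some exact nonnegative $U$ satisfying $Q=UU^T$. Replacing each negative entry of $\bar U$ by $0$ yields a rational nonnegative $\widehat U$ that still satisfies $\|\widehat U-U\|_\infty\le\delta$. From $Q=UU^T$ we get $U_{ik}^2\le\sum_\ell U_{i\ell}^2=Q_{ii}\le\|Q\|_\infty\le 2^{\cL}$, so $\|U\|_\infty\le 2^{\cL/2}$ and $\|\widehat U\|_\infty\le 2^{\cL/2}+\delta$; expanding $\widehat U_{ik}\widehat U_{jk}-U_{ik}U_{jk}$ and summing over $k$ gives the entrywise estimate $\|\widehat U\widehat U^T-Q\|_\infty\le r\bigl(2\|U\|_\infty+\delta\bigr)\delta$. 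Hence, for a parameter $\epsilon'$ to be fixed, choosing $\delta$ polynomially small so that $r(2\|U\|_\infty+\delta)\delta\le\epsilon'$ guarantees $\|\widehat U\widehat U^T-Q\|_\infty\le\epsilon'$ while keeping $\log\tfrac1\delta=\poly(\cL,\log\tfrac1\epsilon)$.

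Now pick a rational $\eta$ with $2\epsilon'/r\le\eta^2\le 4\epsilon'/r$ and set $\widetilde U:=\widehat U+\eta J$, where $J$ is the all-ones $n\times r$ matrix; then $\widetilde U$ is rational, nonnegative and of size $n\times r$. With $\rho_i:=\sum_{k}\widehat U_{ik}\in[0,\,r\|\widehat U\|_\infty]$ one has
\[
(\widetilde U\widetilde U^T)_{ij}-Q_{ij}=\bigl[(\widehat U\widehat U^T)_{ij}-Q_{ij}\bigr]+\eta(\rho_i+\rho_j)+r\eta^{2}.
\]
The bracket lies in $[-\epsilon',\epsilon']$, $\eta(\rho_i+\rho_j)\ge 0$, and $r\eta^2\ge 2\epsilon'$, so the right-hand side is $\ge\epsilon'\ge 0$; this is exactly $\widetilde U\widetilde U^T\ge Q$. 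For the upper bound, $\eta(\rho_i+\rho_j)\le 2\eta r\|\widehat U\|_\infty$ and $r\eta^2\le 4\epsilon'$, so
\[
\|\widetilde U\widetilde U^T-Q\|_\infty\le 5\epsilon'+2\eta r\|\widehat U\|_\infty\le 5\epsilon'+4\sqrt{r\epsilon'}\,\bigl(2^{\cL/2}+\delta\bigr).
\]
It therefore suffices to fix $\epsilon'$ as any value at most $\epsilon/10$ and at most $\epsilon^{2}/(256\,r\,2^{\cL})$ — e.g.\ $\epsilon':=\epsilon^{2}/\bigl(256\,r\,2^{\cL}(1+\epsilon)\bigr)$, for which $\log\tfrac1{\epsilon'}=\poly(\cL,\log\tfrac1\epsilon)$ — to obtain $\|\widetilde U\widetilde U^T-Q\|_\infty\le\epsilon$. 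The running time is dominated by the single invocation of the algorithm of Theorem~\ref{cp-rank}, i.e.\ $\poly(\cL,n^{O(r^2)},\log\tfrac1\epsilon)$.

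The one genuinely delicate point is balancing the two error sources: the shift must be large enough ($\eta^{2}=\Theta(\epsilon'/r)$) for its self-interaction term $r\eta^{2}$ to dominate the two-sided error $\epsilon'$ inherited from Theorem~\ref{cp-rank}, yet small enough that its cross terms $\eta(\rho_i+\rho_j)$ — which may be as large as $\Theta(\eta\,r\,2^{\cL/2})$ — never push an entry of $\widetilde U\widetilde U^T$ more than $\epsilon$ above the corresponding entry of $Q$. This is what forces $\epsilon'$ (hence $\eta$), and then $\delta$, to be polynomially small; but since they enter the running time only through $\log\tfrac1\delta$, the bound is unaffected. Note that the cruder fix of adjoining $\sqrt{\epsilon'}\,\bone$ as an $(r+1)$st column also dominates $Q$ and stays $\epsilon$-close, but produces inner dimension $r+1$; shifting the existing $n\times r$ block is what preserves the required inner dimension $r$.
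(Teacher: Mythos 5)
Your proof is correct, and it reaches the same conclusion by a route that differs from the paper's in a meaningful technical respect. The paper does not treat Theorem~\ref{cp-rank} as a black box: it re-enters that theorem's proof, approximates the low-dimensional parameter matrix $H$ by a rational $\widehat H$, and then shifts $\widehat H$ by $\delta' E_{r\times s}B'$ so that the induced perturbation of $U^T=HQ_V=H(B')^{-1}Q'$ has the form $\Delta Q'$ with $\Delta\ge 0$ entrywise; domination $\widetilde U\widetilde U^T\ge Q$ then follows because every correction term is a product of nonnegative matrices. You instead perturb the factor matrix itself by a uniform scalar shift $\eta J$ and obtain domination from the balance $r\eta^2\ge 2\epsilon'$ against the two-sided error $\epsilon'$, with the cross terms $\eta(\rho_i+\rho_j)$ harmlessly nonnegative. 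Your version is more modular (it only needs the \emph{statement} of Theorem~\ref{cp-rank}, not its internals) and the error accounting is more transparent; the paper's version avoids the clipping step and the square-root balancing by making the perturbation nonnegative already at the level of $H$. Two small remarks. First, the ``solution'' that Theorem~\ref{cp-rank} approximates to accuracy $\delta$ is, strictly speaking, the matrix $H$ of the polynomial system rather than $U$ itself; converting a $\delta$-approximation of $H$ into one of $U=(HQ_V)^T$ costs a factor of $s\|Q_V\|_\infty=\poly(2^{\cL},n)$, which is absorbed into your choice of $\delta$ since only $\log\frac{1}{\delta}$ enters the running time --- worth one sentence, but not a gap. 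Second, you could have shortened your own argument: since $\eta=\Theta(\sqrt{\epsilon'/r})$ eventually exceeds your $\delta=O(\epsilon'/\|U\|_\infty)$, you in fact have $\widetilde U\ge U\ge 0$ entrywise, which yields $\widetilde U\widetilde U^T\ge UU^T=Q$ directly, exactly as in the paper's mechanism.
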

\fullonly{
\begin{proof}
Let $s=\rank(Q)$ and $B$ and $Q_V$ be the matrices corresponding to basis $V$ of $Q$, as in the proof of Theorem~\ref{cp-rank}.  Then from this proof, it follows that we can discover in $n^{O(r^2)}$ time that there is an $r\times s$-real matrix $H$ such that conditions (i) and (ii) of claim~\ref{cl1} hold. Furthermore, in time $\poly(\cL,n^{O(r^2)},\log\frac{1}{\delta})$, we can find a rational matrix $\widehat H$, such that $\|H-\widehat H\|_\infty\le \delta$, for any desired accuracy $\delta>0$. Let $B'$ be a non-singular $s\times s$-submatrix of $B$, obtained by selecting $s$ linearly independent rows of $B$, and let $Q'$ be the corresponding $s\times n$ submatrix of $Q$. Then $Q_V=(B')^{-1}Q'$. 
Let us choose 
\begin{equation}\label{delta}
\delta:=\min\left\{\frac{\epsilon}{\|(B')^{-1}\|_\infty\|Q\|_\infty^{3/2}rs^2(4+3s\|Q\|_\infty^{1/2})},\frac{1}{2\cdot\max\{s\|(B')^{-1}\|_\infty,1\}} \right\},
\end{equation}
and set $\delta':=s\delta\| (B')^{-1}\|_\infty$ and $\widetilde H:=\widehat H+\delta'E_{r\times s}B'$, where $E_{r\times s}$ is the $r\times s$-matrix with all-ones. Define $U:=(HQ_V)^T$, $\widehat U:=(\widehat{H}Q_V)^T$, and $\widetilde U:=(\widetilde{H}Q_V)^T$. Then, it follows that 
\begin{eqnarray*}\label{diff1}
\widetilde{U}^T&=&U^T+[\delta'E_{r\times s}+(\widetilde{H}-\widehat H)(B')^{-1}]Q'
\ge U^T+(\delta'-s\delta\| (B')^{-1}\|_\infty) E_{r\times s}Q'=U^T,
\end{eqnarray*}
where the first inequality follows from the fact that $Q'\ge 0$, and the second equality follows from our choice of $\delta'$.  Let us further note that
\begin{eqnarray}\label{diff2}
\widetilde{U}\widetilde{U}^T-UU^T&=&(Q')^T\Delta^TU^T+U\Delta Q'+(Q')^T\Delta^T\Delta Q',
\end{eqnarray}
where $\Delta:=\delta'E_{r\times s}+(\widetilde{H}-\widehat H)(B')^{-1}$. To bound $\|Q-\widetilde{U}\widetilde{U}^T\|_\infty$, we bound the norm of each term in \raf{diff2}. We first note that $\|\Delta\|_\infty\le \delta'+\delta s\|(B')^{-1}\|_\infty=2\delta'$, and thus 
\begin{equation}\label{diff3}
\|U\Delta Q'\|_\infty\leq 2\delta'\|U E_{r\times s}Q'\|_\infty\le 2\delta' r s\|U\|_\infty\|Q\|_\infty\le 2\delta'rs\|Q\|_\infty^{3/2},
\end{equation}
where the last inequality follows from the fact that $\|U\|_\infty\leq \|Q\|_\infty^{1/2}$. Now we write
\begin{eqnarray}\label{diff4}
\Delta^T\Delta&=&(\delta')^2E_{s\times r}E_{r\times s}+\delta' (E_{s\times r}X+X^TE_{r\times s})+X^TX,
\end{eqnarray}
where $X:=(\widetilde H-\widehat H)(B')^{-1}$. From \raf{diff4} and $\|X\|_\infty\le \delta'$ follows the inequality $\|\Delta^T\Delta\|_\infty\le 3r(\delta')^2$, and then the inequality
\begin{equation}\label{diff5}
\|(Q')^T\Delta^T\Delta Q'\|_\infty\leq 3r(\delta')^2\|(Q')^T E_{s\times s}Q'\|_\infty\le 3r(s\delta')^2\|Q\|_\infty^2.
\end{equation}
Using~\raf{diff3} and \raf{diff5} in \raf{diff2}, we get
\begin{equation*}\label{diff6}
\|\widetilde{U}\widetilde{U}^T-Q\|_\infty\le 4\delta'rs\|Q\|_\infty^{3/2}+3r(s\delta')^2\|Q\|_\infty^2\le \epsilon,
\end{equation*}
by our selection~\raf{delta} of $\delta$. Finally note that $\log\frac{1}{\delta}\le \poly(\cL,r\log \frac{1}{\epsilon})$. 
\end{proof}
}
\section{Approximation Algorithms for Binary Quadratic Programming}
\label{sec:qp}

Binary programming with quadratic constraints (BQC) is the problem of maximizing (or minimizing) a function of a set of binary variables, subject to a finite set of quadratic constraints. A BQC problem takes the form:

{\centering \hspace*{-18pt}
\begin{minipage}[t]{.47\textwidth}
\begin{alignat}{3}
\textsc{(Packing-BQC)} \quad& \displaystyle \max f(x) \label{PQC}\\
\text{subject to}\quad & \displaystyle x^TQ_ix \le C_i^2,\,\,i\in[m]\nonumber\\
\qquad &x\in\{0,1\}^n,\nonumber
\end{alignat}
\end{minipage}
\,\,\, \rule[-14ex]{1pt}{14ex}
\begin{minipage}[t]{0.47\textwidth}
\begin{alignat}{3}
\textsc{(Covering-BQC)} \quad& \displaystyle \min f(x) \label{CQC}\\
\text{subject to}\quad & \displaystyle x^TQ_ix \ge C_i^2,\,\,i\in[m]\nonumber\\
\qquad &x\in\{0,1\}^n,\nonumber
\end{alignat}
\end{minipage}}

\noindent where $f$ is a nonnegative function, and $Q_i$ is (without loss of generality) an $n\times n$ nonnegative symmetric matrix, for all $i\in[m]:= \{1,2,\ldots,m\}$. 
For simplicity, we fix some useful notation. For any integer $n\ge 1$, we denote by $\mathcal{CP}_n$ the set of completely positive matrices (i.e., with finite cp-rank), and by $\mathcal{CP}^*_n$  the subset of $\mathcal{CP}_n$ that includes matrices of fixed cp-rank. Furthermore, we identify a vector $x\in \{0,1\}^n$ with a subset $S\subseteq [n]$, i.e, write $S=S(x)=\{i\in [n]~|~x_i=1\}$. Hence, for a function $f$ defined on the power set $2^{[n]}$, $f(x)\equiv f(S)$. We shall consider non-negative functions $f$ with a special structure: $f$ is {\it linear} if $f(x):=u^Tx$ for some vector $u\in\RR_+^n$;
$f$ is {\it submodular} if $f(S\cup T)+f(S\cap T)\le f(S)+f(T)$ for all subsets $S,T\subseteq[n]$; $f$ is {\it quadratic}\footnote{Note that a quadratic function $f:2^{[n]}\to\RR_+$ is {\it supermodular}, that is, $f(S\cup T)+f(S\cap T)\ge f(S)+f(T)$ for all subsets $S,T\subseteq[n]$.} if $f(x)=x^TQx+u^Tx$, for some $Q\in\mathcal{CP}_n^*$ and $u\in\RR_+^n$; we refer to the corresponding packing (covering) problems  as \PL, \PS, and \PQ\ (\CL, \CS, and \CQ), respectively. As standard in the literature, we assume that the submodular function $f$ is given by a value oracle that always returns the value $f(S)$ for any given set $S\subseteq [n]$. 

For $\alpha>0$, a vector $x\in\{0,1\}^n$ (or the corresponding set $S\subseteq[n]$) is said to be {\it $\alpha$-approximate solution} for problem \PQC\ (resp., \CQC), if $x$ is a feasible solution satisfying $f(x)\ge \alpha\cdot\OPT$ (resp., $f(x)\le \alpha\cdot\OPT$), where $\OPT$ is the value of an optimal solution; for simplicity, if $\alpha=1-\epsilon$ (resp., $1+\epsilon$), for $\epsilon>0$, $x$ will be called {\it $\epsilon$-optimal}.

It can be seen that the problem (BQC) is NP-hard as it includes among others the {\it multi-dimensional Knapsack packing and covering} problems as special cases, when we set $f$ to be a linear function and set each matrix $Q_i$ to be a rank-one non-negative matrix. A {\it fully polynomial-time approximation scheme} (FPTAS)\footnote{A PTAS is an algorithm that runs in time polynomial in the input size $n$, for every fixed $\epsilon$, and outputs an $\epsilon$-optimal solution; an FPTAS is a PTAS where the running time is polynomial in $\frac{1}{\epsilon}$; a QPTAS is similar to a PTAS but the running time is quasi-polynomial (i.e., of the form $n^{\polylog n}$), for every fixed $\epsilon$.} for the knapsack problem was given in \cite{iba-kim:j:fast-approximation-algorithm-for-knapsack-subset-problem}, and a PTAS for the $m$-dimensional knapsack problem was given in \cite{cha-hir-won:j:approximation-algorithm-for-some-generalized-knapsack-problem,fri-cla:j:approximation-algorithm-for-m-dimensional-knapsack}, and these results are best possible, assuming P$\neq$NP. When $f$ is submodular, it was shown that there exist constant factor approximation algorithms for the packing problem, subject to a constant number of knapsack constraints (e.g., \cite{svi:c:note-on-maximizing-submodular-function-knapsack-constraint,kul-sha-tam:c:maximizing-submodular-set-function-subject-to-multiple-linear-constraint}).
We extend these results as follows. 
%
%

\begin{theorem}\label{thm:lin}
{\PL} admits a PTAS when $m$ and each $\cpr(Q_i)$ is fixed. 
\end{theorem}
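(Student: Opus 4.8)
The plan is, for a fixed $\epsilon>0$, to enumerate a polynomially bounded family of ``guesses'' and, for each, reduce $\PL$ to a linear program with $O(1)$ knapsack‑type constraints; the output is the best integral solution found. First I would use Section~\ref{sec:cp-rank} to put the constraints into geometric form. Write $r_i:=\cpr(Q_i)$ and $R:=\sum_{i=1}^m r_i$ (constants). By Corollary~\ref{cp-rank-approx} I can compute, in time $\poly(\cL,n^{O(R^2)})$, nonnegative rational matrices $\widetilde U_i\in\RR_+^{n\times r_i}$ with $\widetilde U_i\widetilde U_i^T\ge Q_i$ and $\|\widetilde U_i\widetilde U_i^T-Q_i\|_\infty$ below any prescribed inverse‑exponential threshold. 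For integral $x$ one has $\|x\|_1^2\le n^2$, while the distinct possible values of $x^TQ_ix$ are rationals separated by at least $2^{-\poly(\cL,n)}$; choosing the accuracy below this separation makes the constraint $x^T\widetilde U_i\widetilde U_i^Tx\le\widehat C_i^2$, with $\widehat C_i^2:=C_i^2+\|\widetilde U_i\widetilde U_i^T-Q_i\|_\infty n^2$ (rational of polynomial bit length), \emph{equivalent} to $x^TQ_ix\le C_i^2$ over $\{0,1\}^n$. Hence it suffices to solve $\max\{u^Tx:\ x\in\{0,1\}^n,\ \|y_i(x)\|_2\le\widehat C_i\ (i\in[m])\}$, where $y_i(x):=(a_{i1}^Tx,\dots,a_{ir_i}^Tx)\in\RR_{\ge0}^{r_i}$ and $a_{ij}\in\RR_+^n$ is the $j$‑th column of $\widetilde U_i$: each constraint asks that the nonnegative vector $y_i(x)$ lie in a fixed Euclidean ball.

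Let $x^*$ be optimal with support $S^*$, and set $t:=\lceil R/\epsilon\rceil$ (if $|S^*|\le t$, enumerating $S^*$ outright and testing feasibility already solves the problem, so assume $|S^*|>t$). I would enumerate (i) a guess $T\subseteq[n]$, $|T|\le t$, for the $t$ highest‑utility items of $S^*$ — $n^{O(R/\epsilon)}$ choices; and (ii) for each $i$ a ``profile'' $p_{i\cdot}=y_i(\mathbf{1}_T)+g_{i\cdot}$ in which each $g_{ij}$ is taken from a geometric grid $\{0\}\cup\{\widehat C_i(1+\epsilon)^{-k}:0\le k\le K_i\}$, $K_i=\poly(\cL,n)/\epsilon$, subject to $\|p_{i\cdot}\|_2\le\widehat C_i$ — $\poly(\cL,n)$ choices overall for fixed $R$. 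For a fixed pair of guesses, put $\tau:=\min_{e\in T}u_e$, let $A$ be the items outside $T$ of utility $\le\tau$, and solve the linear program
\[
\max\ u^Tx\quad\text{s.t.}\quad x_e=1\ (e\in T),\ \ x_e\in[0,1]\ (e\in A),\ \ x_e=0\ \text{otherwise},\ \ a_{ij}^Tx\le p_{ij}\ \ (i\in[m],\ j\in[r_i]).
\]
Any feasible point here has $\|y_i(x)\|_2\le\|p_{i\cdot}\|_2\le\widehat C_i$, so it satisfies all original constraints; and, since this LP has only $R$ nontrivial inequalities beyond the box, a basic optimum $\tilde x$ has at most $R$ fractional coordinates, necessarily in $A$, and rounding them to $0$ loses at most $R\tau\le R\cdot\frac1t\sum_{e\in T}u_e\le\epsilon\,\OPT$ (using that $f$ is linear). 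I would output the best integral solution obtained over all guesses; it is always feasible for the original problem.

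It remains — and this is the step I expect to demand the most care — to show that for the \emph{correct} guess of $T$ there is a grid profile $p$ for which the fractional point $w:=\mathbf{1}_T+(1-\epsilon)(x^*-\mathbf{1}_T)$ is LP‑feasible; then $\OPT_{\mathrm{LP}}\ge u^Tw=(1-\epsilon)\OPT+\epsilon\,u^T\mathbf{1}_T\ge(1-\epsilon)\OPT$, and combined with the previous paragraph the output is a feasible $(1-2\epsilon)$‑approximation (rescale $\epsilon$). Feasibility of $w$ needs $p_{ij}\ge(y_i(\mathbf{1}_T))_j+(1-\epsilon)(y_i(x^*-\mathbf{1}_T))_j$ for all $i,j$ \emph{and} $\|p_{i\cdot}\|_2\le\widehat C_i$, and nonnegativity of the $a_{ij}$ is exactly what reconciles these two demands: $T\subseteq S^*$ gives $y_i(\mathbf{1}_T)\le y_i(x^*)$ coordinatewise; the excess $(1-\epsilon)(y_i(x^*-\mathbf{1}_T))_j\ge0$ is either $0$ or (by the polynomial bit length of the cp‑factors) at least $2^{-\poly(\cL,n)}$, hence rounds up on the geometric grid to some $g_{ij}\le(1+\epsilon)(1-\epsilon)(y_i(x^*-\mathbf{1}_T))_j\le(y_i(x^*-\mathbf{1}_T))_j$; therefore $p_{i\cdot}\le y_i(x^*)$ coordinatewise, so $\|p_{i\cdot}\|_2\le\|y_i(x^*)\|_2\le\widehat C_i$, while $p_{ij}$ still dominates the excess. (The degenerate situation in which $T$ already (nearly) saturates some constraint $i$ is subsumed: there the excess is forced to be $0$, so $p_{i\cdot}=y_i(\mathbf{1}_T)$ and the construction still applies.) This ``round the profile up so that the $(1-\epsilon)$‑shrunk optimum survives, yet keep it inside the ball so that every LP solution is genuinely feasible'' balancing act — together with calibrating the cp‑decomposition accuracy and the grid cutoff $K_i$ — is the geometric core of the argument; it is precisely what generalizes the PTAS of Chau et al.~\cite{CEK14} (the case $\cpr=2$, $m=1$) to arbitrary constant $\cpr$ and constant $m$. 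The total running time is $\poly(\cL)\cdot n^{O(R/\epsilon)}$, polynomial for each fixed $\epsilon$ and fixed $R$, so this is a PTAS.
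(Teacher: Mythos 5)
Your proof is essentially correct, but it takes a genuinely different route from the paper's. The paper (Algorithm~\ref{Lin-QC-PTAS} and Lemma~\ref{theo:ptas-lin-obj}) also guesses the $O(\bar r/\epsilon)$ highest-utility items $\cU$ of the optimum, but it then \emph{solves the convex relaxation} (CP1$[\cU]$) of the quadratic program to $\epsilon$-accuracy and reads the per-constraint profile $t_i=U_i^T[*;N]x^*_N$ (plus a cross-term bound $t_i'=\bone_\cU^TQ_i[\cU;N]x^*_N$) directly off the fractional optimum $x^*$; the polytope $\cP(\cU)$ built from these profiles contains $x^*_N$, so a BFS with $u^Ty\ge u^Tx^*_N$ exists and rounding its $\le\bar r$ fractional entries to $0$ loses only $\epsilon\, u^T\bone_\cU$. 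You instead \emph{enumerate} the profile over a geometric grid and prove, via the $(1-\epsilon)$-shrunk point $w=\mathbf{1}_T+(1-\epsilon)(x^*-\mathbf{1}_T)$, that some grid profile simultaneously dominates $w$ coordinatewise and stays inside the ball of radius $\widehat C_i$ — this is the step the paper gets for free from convex programming, and it is where your argument needs the most care (the bit-separation bounds justifying the grid cutoff $K_i$ and the claim that nonzero excesses are at least $2^{-\poly(\cL,n)}$ should be spelled out, but they are routine given that Corollary~\ref{cp-rank-approx} outputs rationals of polynomial bit length). What each approach buys: the paper's needs only one convex program per guess but relies on polynomial-time convex optimization and defers the decomposition-error issue to a separate argument (Section~\ref{err}, which drops an item with positive diagonal); yours avoids convex programming entirely, uses only LPs with $\sum_i r_i$ nontrivial rows, handles the error term up front by a clean separation argument (at the price of requiring $\delta=2^{-\poly(\cL,n)}$, which Corollary~\ref{cp-rank-approx} permits), and pays an extra $(\poly(\cL,n)/\epsilon)^{\sum_i r_i}$ factor for the profile enumeration. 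Your route is in fact closer in spirit to the geometry-based method the paper uses for the submodular case and to the PTAS of \cite{CEK14} that Theorem~\ref{thm:lin} generalizes. One presentational caveat: your reduction as written does not accommodate the additive linear terms $q_i^Tx$ in the constraints mentioned in Remark~\ref{rem:extension}, whereas the convex-programming method extends to them immediately; this does not affect Theorem~\ref{thm:lin} itself.
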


\begin{theorem}\label{thm:sub}
For any $\epsilon>0$, there is a $(1-\epsilon)\alpha$-approximation algorithm for {\PS} when $m$ and each $\cpr(Q_i)$ is fixed, where 
$\alpha=(1-\frac{1}{e}-\epsilon)$, if the objective $f$ is monotone, and $\alpha=\frac{1}{4} - \epsilon$, otherwise. 
\end{theorem}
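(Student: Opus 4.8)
The plan is to reduce \PS\ to the well‑studied problem of maximizing a nonnegative submodular function subject to a \emph{constant} number of ordinary (linear) knapsack constraints, and then to invoke the known algorithms for that problem: Sviridenko~\cite{svi:c:note-on-maximizing-submodular-function-knapsack-constraint} and Kulik et al.~\cite{kul-sha-tam:c:maximizing-submodular-set-function-subject-to-multiple-linear-constraint} in the monotone case, and Lee et al.~\cite{lee-mir-nag-svi:c:non-monotone-submodular-maximization-under-matroid-and-knapsack-constraints} together with the improvement in~\cite{kul-sha-tam:c:maximizing-submodular-set-function-subject-to-multiple-linear-constraint} in the general case. The one genuinely new ingredient, following the geometric idea of~\cite{CEK14}, is the replacement of each low‑rank quadratic constraint by a constant‑size bundle of linear knapsack inequalities. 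First I would fix the factorizations: for each $i\in[m]$ set $r_i=\cpr(Q_i)=O(1)$ and, using Corollary~\ref{cp-rank-approx} with an accuracy $\epsilon'>0$ to be chosen at the end, compute a nonnegative $U_i\in\RR_+^{n\times r_i}$ with $U_iU_i^T\ge Q_i$ entrywise and $\|U_iU_i^T-Q_i\|_\infty\le\epsilon'$; put $A_i:=U_i^T\ge 0$. Since $A_i\ge 0$, every $x\in\{0,1\}^n$ satisfies $x^TQ_ix\le x^TU_iU_i^Tx=\|A_ix\|_2^2\le x^TQ_ix+\epsilon'(\bone^Tx)^2\le x^TQ_ix+\epsilon' n^2$. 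Hence, with $\epsilon'$ polynomially small (of order $\epsilon\,C_{\min}^2/n^2$ where $C_{\min}=\min_iC_i$; constraints with $C_i=0$ are handled by a trivial preprocessing deleting every variable in the support of a row of the corresponding $A_i$), we get (a) any $x$ with $\|A_ix\|_2\le C_i$ for all $i$ is feasible for \PS, and (b) the optimal solution $x^*$ of \PS\ satisfies $\|A_ix^*\|_2\le(1+\epsilon)C_i$ for all $i$; by Theorem~\ref{cp-rank}, each $U_i$ is produced in $\poly(\cL,n^{O(r^2)},\log\frac1\epsilon)$ time, $r:=\max_ir_i$.

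Next comes the linearization. Fix $\delta=\Theta(\epsilon)$ and, for each $i$, let $N_i$ be a $\delta$‑net (in Euclidean distance) of the unit vectors of $\RR_+^{r_i}$; by the usual covering bound one can take $|N_i|=(1/\delta)^{O(r_i)}=O(1)$. The two elementary facts I use are, for $v\in\RR_+^{r_i}$ and $B\ge 0$: (i) if $\|v\|_2\le B$ then $\xi^Tv\le B$ for every $\xi\in N_i$ (Cauchy--Schwarz), so passing from the ball constraint to the net of linear constraints never removes a ball‑feasible point; and (ii) if $\xi^Tv\le B$ for \emph{all} $\xi\in N_i$, then choosing $\xi$ within $\delta$ of $v/\|v\|_2$ yields $\xi^Tv\ge(1-\delta)\|v\|_2$, hence $\|v\|_2\le B/(1-\delta)$. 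I would then form the relaxed problem $(\mathcal R)$: maximize $f(x)$ over $x\in\{0,1\}^n$ subject to $\xi^TA_ix\le(1-\delta)C_i$ for all $i\in[m]$ and all $\xi\in N_i$. Because every coefficient vector $\xi^TA_i$ is nonnegative, $(\mathcal R)$ is exactly an instance of maximizing the nonnegative submodular $f$ under $K:=\sum_i|N_i|=O(1)$ knapsack constraints, and this $f$ is monotone whenever the original one is.

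The heart of the argument is that the two relaxations compose. By (b) and fact (i), $x^*$ satisfies $\xi^TA_ix^*\le(1+\epsilon)C_i$ for all $i,\xi$, so $(\mathcal R)$ is obtained from a problem on which $x^*$ is feasible by scaling every budget down by the factor $\frac{1-\delta}{1+\epsilon}=1-\Theta(\epsilon)$. The algorithms of~\cite{kul-sha-tam:c:maximizing-submodular-set-function-subject-to-multiple-linear-constraint,lee-mir-nag-svi:c:non-monotone-submodular-maximization-under-matroid-and-knapsack-constraints} have precisely the robustness this needs: their internal step ``guess the $O(1/\epsilon)$ heavy elements, then solve the residual light instance with budgets reduced by $(1-\epsilon)$'' shows that, run with a given vector of budgets, they return a solution feasible for those budgets whose value is at least $(\alpha-\epsilon)$ times the optimum over solutions feasible for the budgets \emph{inflated} by a $(1+\Theta(\epsilon))$ factor, where $\alpha=1-\frac1e$ (monotone $f$) or $\alpha=\frac14$ (general $f$). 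Applying this to $(\mathcal R)$ yields, in $\poly(n)$ time for fixed $\epsilon,m,r$, a set $\tilde x$ with $\xi^TA_i\tilde x\le(1-\delta)C_i$ for all $i,\xi$ and $f(\tilde x)\ge(1-\epsilon)(\alpha-\epsilon)\cdot f(x^*)=(1-\epsilon)(\alpha-\epsilon)\cdot\OPT$, which is the claimed bound. Finally, fact (ii) gives $\|A_i\tilde x\|_2\le(1-\delta)C_i/(1-\delta)=C_i$, so by (a), $\tilde x^TQ_i\tilde x\le\|A_i\tilde x\|_2^2\le C_i^2$ for every $i$; thus $\tilde x$ is genuinely feasible for \PS.

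The main obstacle is exactly this composition. The geometric step itself is short, but it delivers only \emph{approximate} feasibility of the rounded point; to turn it into an \emph{exactly} feasible point while losing at most a $(1-\epsilon)$ factor one must pre‑shrink the budgets in $(\mathcal R)$, and then the benchmark $x^*$ is no longer feasible for $(\mathcal R)$ itself --- it is feasible only for a $(1+\Theta(\epsilon))$‑inflation of it. So the submodular/knapsack machinery has to be used in its budget‑robust form, and one must calibrate the net parameter $\delta$, the inflation caused by working with an \emph{approximate} cp‑factorization, and the budget reduction absorbed inside that machinery so that all three are mutually consistent; this is also what pins down the polynomially small accuracy $\epsilon'$ and forces the (easy) separate treatment of constraints with $C_i$ zero or exponentially small. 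Everything else --- the net size bound, the two Cauchy--Schwarz facts, and the bookkeeping of the various $\epsilon$'s --- is routine.
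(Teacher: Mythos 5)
Your geometric linearization (an outer $\delta$-net of linear constraints with budgets pre-shrunk to $(1-\delta)C_i$, plus the two Cauchy--Schwarz facts) is correct and is a genuinely different, arguably cleaner, route than the paper's construction, which instead builds an \emph{inscribed} polytope $\cP^i_T(\epsilon)$ anchored at a guessed set $T$ of large vectors and shows (Lemma~\ref{qp-lem-proj}) that packing into it is \emph{exactly} a multi-knapsack problem. The handling of the approximate cp-factorization also matches the paper's treatment. So the reduction to ``nonnegative submodular maximization under $O(1)$ knapsack constraints'' is fine.

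The gap is in how you discharge that subproblem. Because your budgets are shrunk, the benchmark $x^*$ is not feasible for $(\mathcal R)$, and you bridge this by asserting that the algorithms of Kulik et al.\ and Lee et al.\ ``have precisely the robustness this needs,'' i.e.\ return a $B$-feasible solution of value at least $(\alpha-\epsilon)$ times the optimum over solutions feasible for budgets inflated by $(1+\Theta(\epsilon))$. That is not the stated guarantee of those works (they compare against $\OPT$ for the \emph{given} budgets), and it does not follow from quoting their internal heavy/light split, whose purpose is to control LP-rounding error, not resource augmentation. What the claim actually requires is a proof that shrinking all budgets by a $(1-\Theta(\epsilon))$ factor costs only a $(1-O(\epsilon))$ factor in the submodular optimum; for non-monotone $f$ you cannot simply discard low-density elements, and the argument needs (a) separate enumeration of the $O(1/\epsilon)$ elements that are heavy in some constraint, (b) a partition of the remaining light elements of $S^*$ into at least $1/\epsilon$ groups each carrying enough weight in the violated constraint, and (c) the submodular partition lemma guaranteeing that dropping one group loses at most a $1/k$ fraction of $f(S^*)$ --- this is exactly the content of the paper's Lemmas~\ref{qp-main-lem}, \ref{qp-lem-pack}, and~\ref{sub-pro}, iterated once per constraint to give the $(1-\epsilon)^m$ factor. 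You correctly identify this composition as ``the main obstacle,'' but identifying it and then attributing its resolution to a black box that does not state it leaves the central technical step of the theorem unproved. To repair the argument, either prove the budget-robust guarantee for the linearized instance along the lines just sketched, or switch to an inner approximation as the paper does so that the black box can be invoked with its literal guarantee.
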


\begin{theorem}\label{thm:cov}
{\CL}  admits a QPTAS when $m=1$, and $\cpr(Q)$ is fixed. 
\end{theorem}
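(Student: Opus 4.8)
The plan is to reduce \CL\ with $m=1$ to a quasi-polynomial family of one-dimensional min-knapsack covering problems, each solved by a classical FPTAS, by exploiting that a single quadratic constraint with a fixed-cp-rank matrix depends on $x$ through only $r:=\cpr(Q)$ linear functionals. First, I would apply Corollary~\ref{cp-rank-approx} to $Q$, obtaining in time $\poly(\cL,n^{O(r^2)},\log\frac{1}{\delta})$ nonnegative rational vectors $a_1,\dots,a_r$ with $\sum_{j}a_ja_j^T\succeq Q$ and $\|Q-\sum_j a_ja_j^T\|_\infty\le\delta$. Since the case $C=0$ is trivial and otherwise $C^2\ge 2^{-O(\cL)}$, choosing $\delta$ polynomially small (so that $\delta n^2\le\epsilon C^2$) lets me replace $x^TQx\ge C^2$ by $\sum_{j=1}^r(a_j^Tx)^2\ge C^2$ at the cost of relaxing feasibility to $x^TQx\ge(1-\epsilon)C^2$ --- an essentially unavoidable concession here, since the exact cp-factorization of $Q$ need not be rational. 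After rescaling assume $C=1$ and let $A$ be the $r\times n$ matrix with rows $a_j^T$; because $A\ge 0$ and $x\ge 0$ we have $Ax\ge 0$, so $\|Ax\|_2=\max\{\theta^TAx:\theta\in\RR^r_+,\ \|\theta\|_2=1\}$ with the maximum attained at $\theta=Ax/\|Ax\|_2$. Hence $x\in\{0,1\}^n$ satisfies the constraint iff $(\theta^TA)x\ge 1$ for some unit vector $\theta\ge 0$, and conversely any such $x$ is feasible.

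Next, I would enumerate the direction $\theta$. Build a finite $\gamma$-net $\cN$ of the spherical simplex $\{\theta\in\RR^r_+:\|\theta\|_2=1\}$ (geodesic metric), of size $\gamma^{-O(r)}$ and with rational entries of polynomially bounded denominators, for $\gamma=\poly(\epsilon)$; for each $\theta\in\cN$ solve $\min\{u^Tx:(\theta^TA)x\ge 1,\ x\in\{0,1\}^n\}$ to within $1+\epsilon$ by the FPTAS for min-knapsack, and output the cheapest solution found. Every solution $x$ produced satisfies $\|Ax\|_2\ge\theta^TAx\ge 1$, so it is feasible for the linearized problem. For correctness, let $x^*$ be optimal, $v^*=Ax^*$, $\theta^*=v^*/\|v^*\|_2$, and pick $\theta\in\cN$ within geodesic distance $\gamma$ of $\theta^*$; then $(\theta^TA)x^*=\theta^Tv^*\ge\|v^*\|_2\cos\gamma$. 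If $\|v^*\|_2\ge 1/\cos\gamma$, then $x^*$ is feasible for the $\theta$-knapsack, so the FPTAS returns a feasible $x$ with $u^Tx\le(1+\epsilon)\OPT$, and we are done up to the $(1\pm\epsilon)$ factors already incurred.

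The hard part will be the non-convexity of the constraint set, which shows up precisely when the optimal solution is ``ball-tight'', i.e.\ $1\le\|v^*\|_2<1/\cos\gamma$: then no net direction certifies $x^*$'s feasibility through a single halfspace, and naively lowering the knapsack target to $\cos\gamma$ returns a solution that is only $(1-O(\gamma^2))$-feasible --- possibly worse than the relaxation already allows, and not obviously fixable by adding a few items without blowing up the cost. To handle this case I would add a second, finer layer of enumeration that locates the optimum's profile more accurately than a fixed net can: after a polynomial-size guess of $\OPT$, guess the $O(1/\epsilon)$ costliest items of $x^*$ and $\poly(\log n,1/\epsilon)$ additional items that dominate its profile, force them into the solution, delete them, and solve a residual covering knapsack (now with items of both small cost and small profile-contribution) by the FPTAS, rounding up the lone fractional coordinate of an LP vertex with only a further $(1+\epsilon)$ cost loss while keeping $\|Ax\|_2$ above the required level. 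There are $n^{\poly(\log n,1/\epsilon)}=n^{\polylog n}$ such guesses, and this is exactly what makes the overall running time quasi-polynomial rather than polynomial. So the crux is this boundary case: because the feasible region is the \emph{complement} of an ellipsoid, the standard ``guess a constant number of big items, then LP-round'' paradigm does not by itself yield a feasible rounding near the bounding sphere, and one must enumerate a quasi-polynomial number of candidate profiles while carefully reconciling the three sources of $(1\pm\epsilon)$ error --- the approximate cp-factorization (Corollary~\ref{cp-rank-approx}), the direction net, and the residual rounding --- so that the output is simultaneously (near-)feasible and within $(1+\epsilon)$ of $\OPT$.
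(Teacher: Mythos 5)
Your reduction correctly isolates the real obstruction, but it does not overcome it. The ``ball-tight'' case you flag is not a corner case to be patched --- it is the whole problem, and your second layer of enumeration does not resolve it. Consider $r=2$, $C=1$, and $n$ identical items with $q_k=(a/n,\,b/n)$, $a^2+b^2=1$, each of cost $1/n$, together with a few very expensive items pointing elsewhere. The optimum takes all $n$ cheap items, lands exactly on the sphere at $v^*=(a,b)$, and costs $1$. For every direction $\theta$ in your net other than $(a,b)$ itself, $\theta^TAx^*<1$, so the $\theta$-knapsack with target $1$ forces an expensive item and its optimum is unbounded relative to $\OPT$; lowering the target to $\cos\gamma$ restores comparability but outputs an infeasible point, exactly as you note. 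Your proposed fix --- guess the costliest items and $\polylog$ ``profile-dominating'' items, then LP-round a residual one-dimensional covering knapsack --- gains nothing here: all items are identical, the residual instance is isomorphic to the original, and the LP vertex you round lives in a single halfspace $\theta^TAx\ge 1$, so you are back to choosing between an infeasible target and an incomparable one. ``Items that dominate its profile'' is never defined, and no mechanism is given for certifying $\|Ax\|_2\ge C$ \emph{exactly} while comparing cost against $\OPT$; that is precisely the step a proof must supply. (A secondary point: your concession to $(1-\epsilon)$-approximate feasibility from the inexact factorization is avoidable --- the paper recovers exact feasibility by inflating the target to $C^2+\delta n^2$ with $\delta:=\min_k Q^{kk}/n^2$ and adding one guessed item.)

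The paper's route is genuinely different and shows what information must be enumerated. Instead of a net over directions of the \emph{aggregate} $Ax$, it angularly partitions the \emph{individual} item vectors $q_k$ into $h=r(\sqrt{r}/\epsilon)^{r-1}$ space-classes (all vectors in a class within angle $O(\epsilon)$ of a common ray) and $O(\log_{1+\epsilon}(n/\epsilon))$ utility classes. Within one (space, utility) cell, near-parallelism means a subset is characterized up to $(1+O(\epsilon))$ by its cardinality under the greedy length order, so the algorithm enumerates one integer $n_{s,l}$ per cell --- $n^{O(h\ell)}=n^{O(\log n)}$ choices, which is exactly where quasi-polynomiality comes from --- and stitches the cells together with an $\ell_1$-versus-$\ell_2$ estimate (costing the $\sqrt{r}$ inside the final $1+O(\sqrt{\epsilon r})$ ratio). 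In your identical-items instance this enumeration trivially finds the right count. If your ``profile'' guess were made precise as ``the number of optimal items per angular cell of the \emph{item} directions, plus their greedy prefix,'' you would essentially be reconstructing the paper's argument; as written, the proposal identifies the difficulty but leaves the decisive step unproved.
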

 
The next sections are devoted to present two methods for designing approximation algorithms for \textsc{Packing-BQC}.
The first method relies on the polynomial solvability of the convex relaxation, whereas the second one takes into account the geometric properties of the region of feasible solutions. The first method works for the case of linear objective function and allows also for an additive linear term in each quadratic constraint; it can also be extended to handle more general classes of objective functions such as quadratic functions with fixed cp-rank, multiplicative functions, and sums of ratios of linear functions, as well as to packing constraints involving the $\ell_p$-norm for $p\ge 1$. However, we are not able to show that this method works for submodular objective functions. The second method works for the latter case, but it does not seem to have the flexibility of the first method in handling more general objective functions and/or additional linear terms in the quadratic constraints. 

\sodaonly{
For the minimization problem \textsc{\CL}, we use the idea of geometric partitioning combined together with a greedy covering approach to arrive at the result of Theorem~\ref{thm:cov}; see Section \ref{sec:min}. We prove Theorem~\ref{thm:sub} in Section~\ref{sec:sub} and Theorem~\ref{thm:lin} for the linear objective case in Section~\ref{lin-convex}.} 
\fullonly{
For the minimization problem \textsc{\CL}, we use the idea of geometric partitioning combined together with a greedy covering approach to arrive at the result of Theorem~\ref{thm:cov}; see Section \ref{sec:min}. We prove Theorem~\ref{thm:sub} in Section~\ref{sec:sub} and Theorem~\ref{thm:lin} for the linear objective case in Section~\ref{lin-convex}; we extend the result to the quadratic objective case in Section~\ref{sec:appl}.} 

\fullonly{As a technical remark, we note that the decomposition of each $Q_i$ as given by Corollary~\ref{cp-rank-approx} involves an error term $\Delta_i$. To simplify the presentation, we will assume first that an exact decomposition $Q_i=U_iU_i^T$ is given, then show in Section~\ref{err} to see how to deal with the error term.  }

\sodaonly{As a technical remark, we note that the decomposition of each $Q_i$ as given by Corollary~\ref{cp-rank-approx} involves an error term $\Delta_i$. To simplify the presentation, we will assume that an exact decomposition $Q_i=U_iU_i^T$ is given (see Appendix \ref{apx:error} to see how to deal with the error term).}

\subsection{Linear Objective Function and Convex Programming-based Method}\label{lin-convex}
In this section, we present a polynomial-time approximation scheme for \textsc{Packing-BQC} with linear objective function. This problem has the form \raf{PQC} with $f(x)=u^Tx$,
where $Q_i\in\mathcal{CP}^*_n$ and $u\in\RR_+^n$. For convenience, we will sometimes write $u(S):=\sum_{k\in S} u_k$, for a set $S\subseteq[n]$.
The method we use here is based on solving the convex relaxation obtained by replacing the constraint $x\in \{0,1\}^n$ by the weaker one $x\in[0,1]^n$. The crucial point of our method is that an upper bound on the optimal value can be computed easily by solving (approximately) the relaxed problem. The obtained solution to the convex program defines a point in a certain polytope, which is then rounded to an integral solution without losing much on the quality of the solution. 
The details of our method is described in Algorithm \ref{Lin-QC-PTAS}.

\fullonly{
\begin{algorithm}[!htb]
	\caption{ \PLPTAS$(u,\{Q_i,C_i\}_{i\in[m]},\epsilon)$} \label{Lin-QC-PTAS}
\begin{algorithmic}[1]
\Require Utility vector $u\in\RR_+^n$; matrices $Q_i\in\mathcal{CP}_n^*$, capacities $C_i\in\RR_+$, for $i\in[m]$; and accuracy parameter $\epsilon$
\Ensure An $\epsilon$-optimal solution $S$ to \PL
\State $v\leftarrow (0,\ldots,0)$
\State For all $i\in [m]$, decompose $Q_i$ as $Q_i=U_iU_i^T$, where $U_i\in\QQ^{n\times r_i}$ \Comment{{\em $r_i=\text{cp-rank}(Q_i)$}}\label{s0}
\State $\bar r\leftarrow\sum_{i=1}^m(r_i+1);\, \lambda\leftarrow\frac{\bar r}{\epsilon}$
\For{each subset $\mathcal{U}$ of $[n]$ of cardinality at most $\lambda$}
\State $\mathcal{V}\leftarrow\{k\in [n]\setminus \mathcal{U}~|~u_{k}>\min\{u_{k'}|k'\in \mathcal{U}\}\}$
\If{$\bone_\cU^TQ_i[\cU;\cU]\bone_\cU\le  C_i^2$ for all $i\in[m]$}\label{condition} \Comment{(CP1$[\cU]$) is feasible}
\State Let $x^*$ be an $\epsilon$-optimal solution to the convex program (CP1$[\cU]$)\label{s11}
\State $t_i\leftarrow U_i^T[*;N]x^*_N$; $t_i'\leftarrow \bone_\cU^TQ_i[\cU;N]x^*_N$ for all $i\in [m]$\Comment{{\em $t_i$ is a vector of dimension $r_i$}}\label{s2}
\State Find a BFS $y$ of the polytope $\cP(\cU)$ such that $u^Ty\ge u^Tx^*_N$:
\State $\overline x\leftarrow \{(\overline x_1,\ldots, \overline x_n)|\overline x_k=\left\lfloor y_k\right\rfloor\text { for }k\in [n]\setminus(\cU\cup\cV),~ x_k=1, \text{ for }k\in\cU,~ x_k=0, \text{ for }k\in\cV\}$ ~~~~~~~~~
\Comment{{\em rounding down $y_k$}}\label{s3}
\If{$u^T\overline x>u^Tv$}
\State $v\leftarrow \overline x$
\EndIf
\EndIf
\EndFor
\State \Return $S(v)$
\end{algorithmic}
\end{algorithm}}

Let $\epsilon>0$ be a fixed constant, and define $\lambda=\frac{\bar r}{\epsilon}$, where $\bar r=\sum_{i=1}^m(r_i+1)$. The idea, extending the one in \cite{fri-cla:j:approximation-algorithm-for-m-dimensional-knapsack}, is to try to guess $\lambda$ items of highest utility in the optimal solution. This can be done by considering all possibilities for choosing a set of cardinality at most $\lambda$. We denote by $\mathcal{X}\subseteq 2^{[n]}$ the set of all such sets. Note that the size of $\mathcal{X}$ is bounded by $O(n^{\lambda})$ and thus is polynomial in the size of input for every constant $\lambda$. 

For each $\cU\in \cX$, we define a set $\cV$, which contains all items that are {\it not} in the optimal solution, given that $\cU$ is a subset of the optimal solution (these are the items $k\in[n]\setminus\cU$, with $u_k>u_{k'}$ for some $u_{k'}\in\cU$). Let $N:=[n]\setminus(\cU\cup\cV)$.
Using polynomial-time algorithms for convex programming (see, e.g., \cite{NT08}), we can find an $\epsilon$-optimal solution\fullonly{\footnote{In fact, such algorithms can find a feasible solution $x^*$ to the convex relaxation such that $u^Tx^*\ge \opt-\delta$, in time polynomial in the input size (including the bit complexity) and $\log\frac{1}{\delta}$, where $\opt$ is the value of the fractional optimal solution. We may assume that $\bone_k^TQ_i\bone_k\le C_i^2$, for all $i\in[m]$, where $\bone_k$ is the $k$th unit vector; otherwise item $k$ can be removed form the (\PL) problem. This implies that $\opt\ge\OPT\ge\bar u:=\max_k{u_k}$. Now setting $\delta$ to $\epsilon \cdot\bar u$ assures that $u^Tx^*\geq(1-\epsilon)\opt$.}} to the convex relaxation (CP1$[\cU]$) defined as follows:
\begin{align*}
\textsc{(CP1$[\cU]$)}    & \qquad \displaystyle \max u^Tx\\
\text{subject to} & \qquad\displaystyle x^TQ_ix \le C_i^2,\,\,i\in[m],\\
&\qquad x_k=1  \text{ for } k\in \mathcal{U}; x_k=0 \text{ for } k\in \mathcal{V},\\
&\qquad x_k\in[0,1] \text{ for } k\in N.
\end{align*}

We then define a polytope $\cP(\cU)\subseteq[0,1]^{N}$ by replacing each constraint $x^TQ_ix \le C_i^2$ by the two constraints $U_i^T[*;N]y\le t_i$ and $\bone_\cU^TQ_i[\cU;N]y\le t_i'$, for all $i\in[m]$, where $t_i:= U_i^T[*;N]x^*_N$; $t_i':= \bone_\cU^TQ_i[\cU;N]x^*_N$; here, $x_\cU$ denotes the restriction of the vector $x\in[0,1]^n$ to the set of components indexed by $\cU$, and $Q_i[\cU;N]$ denotes the restriction of the matrix $Q_i$ to the columns and rows indexed by the sets $\cU$ and $N$, respectively; similarly, $U_i^T[*;N]$ means the restriction of $U_i^T$ to the set of columns defined by $N$:
$$
\cP(\cU):=\{y\in[0,1]^{N}|~U_i^T[*;N]y \le t_i,~\bone_\cU^TQ_i[\cU;N]y\le t_i',\text{ for }i\in[m]\}.
$$ 
 We can find a basic feasible solution (BFS) $y$ in this polytope  such that $u^Ty\ge u^Tx^*_N$, by solving at most $n$ linear systems (see standard texts on Linear Programming, e.g., \cite{S86}). 
Next, an integral solution $\overline x$ is obtained from $y$ by dropping all fractional components to $0$ and setting $x_k\in\{0,1\}$ according to the assumption $k\in\cU\cup\cV$. For each set $\U\in \cX$ 
such that (CP1$[\cU]$) is feasible, the algorithm outputs an integral solution $\overline x$. Let $v$ be the solution of maximum value amongst all such solutions $\overline x$. The algorithm produces a set $S(v)$  of items that corresponds to $v$.
\sodaonly{
\begin{lemma}
\label{theo:ptas-lin-obj}
Assume $Q_i\in\mathcal{CP}_n^*$ for all $i\in[m]$ and $m=O(1)$. Then, for any fixed $\epsilon>0$, the algorithm described above runs in polynomial time and produces an $2\epsilon$-optimal solution to the input instance.
\end{lemma}
}
\fullonly{
\begin{lemma}
\label{theo:ptas-lin-obj}
Assume $Q_i\in\mathcal{CP}_n^*$ for all $i\in[m]$ and $m=O(1)$. Then, for any fixed $\epsilon>0$, Algorithm \ref{Lin-QC-PTAS} runs in polynomial time and produces an $2\epsilon$-optimal solution to the input instance.
\end{lemma}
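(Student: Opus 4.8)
The plan is to establish two things: that Algorithm~\ref{Lin-QC-PTAS} runs in polynomial time, and that among the candidate solutions $\overline x$ it examines, at least one has value $\ge (1-2\epsilon)\OPT$ while all of them are feasible for (\PL). The running time is the easy part: $|\mathcal{X}| = O(n^\lambda)$ is polynomial for fixed $\epsilon$ (since $\lambda = \bar r/\epsilon$ and $\bar r = \sum_i (r_i+1)$ is constant), the cp-decompositions of each $Q_i$ are computed once via Corollary~\ref{cp-rank-approx} (or assumed exact as per the technical remark), the convex program (CP1$[\cU]$) is solved $\epsilon$-approximately in polynomial time by \cite{NT08}, and finding a BFS of $\cP(\cU)$ with $u^Ty \ge u^Tx^*_N$ costs at most $n$ linear-system solves. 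So the whole loop body is polynomial and is executed polynomially many times.

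For correctness, first I would argue feasibility of the output $\overline x$. Fix the iteration where $\cU$ equals the set guessed correctly. The key observation is that $\cP(\cU)$ is a \emph{linearized} relaxation: for $y \in \cP(\cU)$, writing $z$ for the full vector that is $1$ on $\cU$, $0$ on $\cV$, and $y$ on $N$, we have $z^T Q_i z = \bone_\cU^T Q_i[\cU;\cU]\bone_\cU + 2\bone_\cU^T Q_i[\cU;N] y + y^T Q_i[N;N] y$. Since rounding down only decreases each $y_k$, and all entries of $Q_i$ and the relevant submatrices are nonnegative, $\overline x^T Q_i \overline x \le z^T Q_i z$ with $z$ evaluated at $y$. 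Then bound the latter: the constant term is $\le C_i^2 - 2t_i' - \|t_i\|^2 \cdots$ — more precisely, using $\bone_\cU^T Q_i[\cU;N]y \le t_i'$ and the factorization $Q_i[N;N] = U_i[N;*]U_i[N;*]^T$ so that $y^T Q_i[N;N] y = \|U_i^T[*;N]y\|_2^2 \le \|t_i\|_2^2 = (x^*_N)^T Q_i[N;N] x^*_N$, together with $t_i' = \bone_\cU^T Q_i[\cU;N] x^*_N$ and the check at line~\ref{condition}, one reassembles $\overline x^T Q_i \overline x \le (x^*)^T Q_i x^* \le C_i^2$. This is the step I expect to be the main obstacle, since it requires carefully tracking which cross-terms are controlled by the $t_i$ constraints versus the $t_i'$ constraints, and using nonnegativity of $Q_i$ in exactly the right places; one must also handle the case where the correctly-guessed $\cU$ has cardinality strictly less than $\lambda$ (then $\mathcal{V}$ captures the structure and $N$ contains only low-utility items).

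For the value guarantee, I would use the standard argument of \cite{fri-cla:j:approximation-algorithm-for-m-dimensional-knapsack}: let $O^*$ be an optimal integral solution, and take $\cU$ to be its $\lambda$ highest-utility items (or all of $O^*$ if $|O^*| \le \lambda$). Then $x^* $ (the convex-program solution) satisfies $u^T x^* \ge (1-\epsilon)\,u(O^*_N)$ where $O^*_N = O^* \cap N$, since the restriction of $O^*$ to $N$ is feasible for (CP1$[\cU]$). Passing to the BFS $y \in \cP(\cU)$ with $u^T y \ge u^T x^*_N$, the polytope $\cP(\cU)$ lives in dimension $|N|$ and has $\bar r$ nontrivial inequalities, so a BFS has at most $\bar r$ fractional coordinates; dropping them loses at most $\bar r \cdot \bar u$ where $\bar u = \max_k u_k \le u(\cU)$ (every item of $\cU$ has utility at least that of any item in $N$, by definition of $\cV$). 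Hence $u^T\overline x \ge u(\cU) + u^T x^*_N - \bar r\,\bar u \ge u(\cU) + (1-\epsilon)u(O^*_N) - \epsilon\, u(\cU) \ge (1-\epsilon)(u(\cU) + u(O^*_N)) - \epsilon\,u(\cU) \ge (1-2\epsilon)\OPT$, using $\bar r\,\bar u \le \bar r \cdot \frac{u(\cU)}{\lambda} = \epsilon\, u(\cU)$ when $|\cU| = \lambda$ (and the trivial exact case when $|\cU| < \lambda$). Combining feasibility and this bound over all iterations gives that $v$ is a $(1-2\epsilon)$-approximate, i.e.\ $2\epsilon$-optimal, feasible solution.
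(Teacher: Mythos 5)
Your proposal follows essentially the same route as the paper's proof: guess the $\lambda$ highest-utility items of the optimum, solve (CP1$[\cU]$) approximately, pass to a BFS of the linearized polytope $\cP(\cU)$, round down losing at most $\bar r$ fractional coordinates each of utility at most $u(\cU)/\lambda$, and verify feasibility via $\|t_i\|_2^2+2t_i'+\bone_\cU^TQ_i[\cU;\cU]\bone_\cU\le C_i^2$. The only quibble is a bookkeeping slip in the value chain (the convex program gives $u^Tx^*_N\ge(1-\epsilon)u(O^*_N)-\epsilon\,u(\cU)$ rather than $(1-\epsilon)u(O^*_N)$), but the $(1-2\epsilon)$ bound survives this correction.
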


\begin{proof}
From the argument above, it can be easily seen  that the running time of the Algorithm \ref{Lin-QC-PTAS} is polynomial in size of the input, for any fixed constant $\epsilon$. We now argue that the solution $S$ returned by the algorithm is $2\epsilon$-optimal. Indeed, let $S^*$ be the optimal solution to {\PL} of utility $u(S^*)=\OPT$. If $\lambda\ge |S^*|$, then $S$ is an exact optimum solution to the {\PL} instance. Suppose that $\lambda< |S^*|$. Let $\cU$ be the set of highest-utility $\lambda$ items in $S^*$. Then, the feasibility of $S^*$ for (CP1$[\cU]$) guarantees that $\bone_\cU^TQ_i[\cU;\cU]\bone_\cU\le  C_i^2$, and hence an $\epsilon$-optimal solution $x^*$ for (CP1$[\cU]$) is found in step~\ref{s11} of the algorithm. Assume w.l.o.g. that $\bone_\cU^TQ_i[\cU;\cU]\bone_\cU\leq  C_i^2$. From the feasibility of $x^*$ for (CP1$[\cU]$) and the factorization of $Q_i$, it follows that, for all $i\in[m]$,  
\begin{equation*}\label{con1}
(x^*_N)^TU_i[N;*]U_i^T[*;N]x_N^*+2\cdot\bone^T_\cU Q_i[\cU;N]x^*_N\le C_i^2-\bone_\cU^TQ_i[\cU;\cU]\bone_\cU,
\end{equation*}
implying that
\begin{equation}\label{con2}
t_i^2+2t_i'\le C_i^2-\bone_\cU^TQ_i[\cU;\cU]\bone_\cU.
\end{equation}
The definition of $t_i,t_i'$ implies that $x_N^*\in\cP(\cU)$. Thus, there is a BFS $y$ of $\cP(\cU)$ such that $u^Ty\ge u^Tx^*_N$. Let $\overline x\in\{0,1\}^n$ be the rounded solution obtained from $y$ in step~\ref{s3}. Since any BFS of $\cP(\U)$ has at most $\bar r$ fractional components and $u_k\le \min_{k'\in\cU}u_{k'}$ for all $k\in N$, it follows that 
\begin{eqnarray*}
u^T\overline x&=&u^T\overline x_N+u^T\bone_{\cU}\ge u^Ty-\frac{\bar r\cdot u^T\bone_{\cU}}{|\cU|}+u^T\bone_{\cU}\\
&\ge&u^Tx_N^*+(1-\dfrac{\bar r}{\lambda})u^T\bone_{\cU}\ge (1-\epsilon)u^Tx^*\ge(1-\epsilon)^2\OPT.
\end{eqnarray*}
It remains to argue that $\overline x$ is feasible for \PL. 
Since $\overline x_N\in\cP(\cU)$, we have
\begin{eqnarray*}
\overline x^TQ_i\overline x&=&\overline x_N^TU_i[N;*]U_i^T[*;N]\overline x_N+2\cdot\bone^T_\cU Q_i[\cU;N]\overline x_N+\bone_\cU^TQ_i[\cU;\cU]\bone_\cU\\
&\le& t_i^2+2t_i'+\bone_\cU^TQ_i[\cU;\cU]\bone_\cU\le C_i^2,
\end{eqnarray*}  
where the last inequality follows from \raf{con2}.
This completes the proof. 
\end{proof}
}
%
%
\fullonly{\begin{remark}
\label{rem:extension}
Note that the result presented in this section can be extended easily to the following problem:
\begin{align*}
\textsc{} \qquad& \displaystyle \max u^Tx \label{}\\
\text{subject to}\qquad & \displaystyle x^TQ_ix +q_i^Tx\le C_i^2,\,\,i\in[m]\\
\qquad & \displaystyle Ax\le b,\\
\qquad & x\in\{0,1\}^n,
\end{align*}
where $Q_i\in \mathcal{CP}^*_n$, $q_i\in \RR_+^n$, for all $i\in[m]$; $A\in\RR_+^{d\times n}$, $b\in\RR_+^d$; $m,d$ are constants. 
\end{remark}
}

\subsection{Submodular Objective Function and Geometry-based Method}\label{sec:sub}
In this section, we present a constant factor approximation algorithm for \PS. This problem has the form \raf{PQC} where $f(x)$ is a nonnegative submodular function, $Q_i\in \mathcal{CP}^*_n$, and $m$ is constant. 

If one attempts to use the technique of the previous section, (s)he runs into the difficulty of how to relax the objective function; the two well-known relaxations (see, e.g., \cite{D09} for a recent survey) are: the {\it Lov\'{a}sz extension} which is known to be {\it convex} \cite{GLS88}, and the {\it multi-linear extension} \cite{V08} which is in general {\it neither} convex {\it nor} concave.
While it is not known how to solve the maximization problems for the relaxations corresponding to these two types of functions over a polytope in polynomial time, it is known \cite{V08,VCZ11} how to approximate within a constant factor the maximum of the multi-linear extension over a packing polytope. It is not clear if this result can be extended to the case when the feasible region is described by convex quadratic constraints of the form $x^TQ_ix\le C_i^2$, where $Q_i\in\mathcal{CP}_n^*$. While this remains an interesting open question, we will rely here, instead, on a geometric-based approach that uses the results in \cite{VCZ11,kul-sha-tam:c:maximizing-submodular-set-function-subject-to-multiple-linear-constraint} as a black-box.  
     The key idea is to make use of the geometry of the problem to reduce it into a multi-dimensional knapsack problem, which  can be solved using enumeration and dynamic programming for the linear objective case, or LP-rounding for the submodular case. 

Let $I=(f,\{Q_i,C_i\}_{i\in[m]})$ be an instance of the problem \PS, and $\epsilon>0$ be a fixed constant. We will construct a polynomial-time algorithm which produces a constant factor-approximate solution to the instance $I$. Write $Q_i=U_iU_i^T$, for some $U_i\in\QQ^{n\times r_i}_+$, where $r_i=\cpr(Q_i)$. For $k\in[n]$, define the vector $q^i_k\in\QQ_+^{r_i}$ to be the $k$th column of $U_i^T$. 
Since $\bar r\triangleq\max_{i\in[m]}{r_i}$ is bounded by a constant, we may assume without loss of generality that  $\epsilon<\frac{1}{4\bar r}$.
For $r\in\ZZ_+$ and $C\in\RR_+$, define $\cB(r,C)\triangleq\{\nu\in\RR^r_+:~\|\nu\|_2\le C\}$ to be the non-negative sector of the ball in $\RR^r$ of radius $C$ centered at the origin. 
Then problem \PS\ amounts to finding an $S\subseteq[n]$, maximizing $f(S)$, such that $\sum_{k\in S}q^i_k\in\cB(r_i,C_i)$ for all $i\in[m]$.

Given a feasible set $T \subseteq [n]$ (that is, $\|d_T\|_2\le C_i$ for all $i\in[m]$), we define $\cR_T^i$ as the conic region bounded as the following:
\begin{equation}\label{region}
\cR_T^i \triangleq \Big\{\nu \in\RR^{r_i}_+ :~  \|\nu\|_2 \le C_i^2,~ \nu \ge q_T^{i}\Big\},
\end{equation} 
where $q_T^i\triangleq\sum_{k \in T} q^i_k$. Write $q^i_k\triangleq(q^{ij}_k:~j\in[r_i])\in\RR^{r_i}$. Given $\cR_T^i$, we define an $r_i$-dimensional grid in the region $\cR_T^i$ by interlacing equidistant $(r_i-1)$-dimensional parallel hyperplanes with inter-separation $\frac{\epsilon}{2r_i}w^{ij}_T$, for each $j\in[r_i]$, and with the $j$th principal axis as their normal, where 
\begin{equation}\label{sqrt}
w^{ij}_T \triangleq \sqrt{C_i^2 -\sum_{j'\neq j} (q^{ij'}_T)^2} - q^{ij}_T
\end{equation} 
is the distance along the $j$th axis from $q_T^i$ to the boundary of the ball\fullonly{\footnote{To maintain {\it finite precision}, we need to {\it round down} $w_T^{ij}$ in \raf{sqrt} to the nearest integer multiple of $\frac{1}{D}$, where $D$ is the common denominator of all the rational number $q_k^{ij}$; this should be done also for any computations involving square roots in the sequel. This introduces an error that can be dealt with in a similar way as we deal with the error caused by the $\Delta_i$'s. To keep the presentation simple, we will assume infinite precision.}}. 
	For $j\in[r_i]$, let $H^{ij}$ be such hyperplane that is furthest from the origin, perpendicular to the $j$th principal direction, and let $H^{ij}_+$ be the half-space defined by $H^{ij}$ that includes the origin. This grid defines at most $r_i(\frac{2r_i}{\epsilon})^{r_i-1}$ lines that intersect the spherical boundary of region $\cR_T^i$ at a set of points $P_T^i(\epsilon)$. Let us call a cell (that, is a hypercube) of the grid {\it boundary} if it overlaps with the surface of the ball $\cB(r_i,C_i)$ and denote by $\cB_T^i$ the set of all such boundary cells. The convex hull of the set of points $P_T^i(\epsilon)$ defines a polytope $\cQ_T^i$ with the following properties; the first two are easily verifiable; the last one follows form the {\it upper bound Theorem} \cite{M70} and the algorithms for convex hull computation in fixed dimension (e.g. \cite{B93}):
\begin{itemize}
\item[(I)] All points in $P_T^i(\epsilon)$ are vertices of $\cQ_T^i$.
\item[(II)] Let $\cH_T^i$ be the set of half-spaces including the origin and defined by the  facets of $\cQ_T^i$. Then for each $H_+\in\cH_T^i$, the facet defining $H_+$ is the convex hull of a set of vertices from $P_T^i(\epsilon)$ that are contained completely inside a boundary cell, and its normal is a {\it nonnegative} vector.   
\item[(III)] The number of facets of $\cQ_T^i$ is at most $(\frac{2r_i}{\epsilon})^{r_i^2/2}$ and they can be found in time $O((\frac{2r_i}{\epsilon})^{r_i-1}(r_i^2\log\frac{r_i}{\epsilon}+(\frac{2r_i}{\epsilon})^{((r_i-1)^2+1)/2}))$.  
\end{itemize}
Let $\cP_T^i(\epsilon)$ be the polytope defined by the intersection
$$
\left(\bigcap_{H_+\in\cH_T^i}H_+\right)\cap\left(\bigcap_{j\in[r_i]}H_+^{ij}\right)\cap\{\nu\in\RR^{r_i}:~\nu\ge 0\},
$$ and denote by $m_T^i(\epsilon)$ the number of facets of $\cP_T^i(\epsilon)$. By construction, $m_T^i(\epsilon)\le(\frac{2r_i}{\epsilon})^{r_i^2/2}+2r_i$. 

Consider a collection of feasible subsets $\cT\triangleq\{T^1,\ldots,T^m\}$, $T^i\subseteq[n]$, to \PS. We define the following approximate version of problem \PS, which is obtained by approximating the ball sectors $\cB(r_i,C_i)$ by inscribed polytopes.
\begin{eqnarray*}
\textsc{(PQP$_\cT$)} \qquad& \displaystyle \max_{x \in \{0, 1 \}^n} f(x) \label{PQP}\\
\text{subject to}\qquad & \displaystyle \sum_{k\in [n]}q^i_k x_k \in \cP_{T^i}^i(\epsilon),~~~\text{ for all } i\in[m]. \label{CPQP}
\end{eqnarray*}

Given two vectors $\mu,\nu\in\RR^{r_i}$, we denote the (vector) projection of $\mu$ on $\nu$ by ${\tt Pj}_\nu(\mu) \triangleq \frac{\nu}{\|\nu\|_2^2}\mu^{T}\nu$. Given the polytope $\cP_T^i(\epsilon)$, we define a set of $m_T^i(\epsilon)$ vectors $\{ \sigma^{T,i}_\ell\}$ in $\RR^{r_i}$, each of which is perpendicular to a facet of $\cP^i_T(\epsilon)$, starting at the origin, and ending at the facet.

For a collection of feasible subsets $\cT\triangleq\{T^1,\ldots,T^m\}$ to \PS, we define a submodular function maximization problem subject to $\bar m$ knapsack constraints
based on $\{ \sigma^{T^i,i}_\ell\}$, where $\bar m\triangleq\sum_{i\in[m]}m_{T^i}^i(\epsilon)$:
\begin{eqnarray*}
\textsc{($\bar m$DKS-Sub$\{ \sigma^{T^i,i}_\ell\}$)} \qquad & \displaystyle \max_{x \in \{0, 1 \}^n}  f(x) \label{(mDKS)}\\
\text{ subject to }\qquad & \displaystyle \sum_{k\in[n]} \|\Pij(q^{ij}_k) \|_2 x_k \le  \|\sigma^{T^i,i}_\ell\|_2, \quad \text{for all }\ell= 1, \ldots, m_{T^i}^i(\epsilon). \label{qp-Cm-1}
\end{eqnarray*}
The following lemma follows straightforwardly from the convexity of the polytopes $\cP_{T^i}^i(\epsilon)$.
 
\begin{lemma}\label{qp-lem-proj}
Given a collection of feasible solutions $\cT\triangleq\{T^1,\ldots,T^m\}$ to \PS,  problems {\sc PQP}$_\cT$ and {\sc $\bar m$DKS-Sub}$\{ \sigma^{T^i,i}_\ell\}$ are equivalent.
\end{lemma}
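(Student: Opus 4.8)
The plan is to prove the lemma by establishing that \textsc{PQP}$_\cT$ and \textsc{$\bar m$DKS-Sub}$\{\sigma^{T^i,i}_\ell\}$ have \emph{exactly the same} set of feasible vectors $x\in\{0,1\}^n$; since both programs maximize the same objective $f(x)$, their equivalence is then immediate. Fix $i\in[m]$, write $q_k:=q^i_k$ and $\nu:=\sum_{k\in[n]}q_k x_k$, and let $\sigma_\ell:=\sigma^{T^i,i}_\ell$ for $\ell=1,\dots,m_{T^i}^i(\epsilon)$, so that $\Pij$ is the orthogonal projection onto the ray through $\sigma_\ell$. I would first recall the two structural facts already in place above: each facet of $\cP^i_{T^i}(\epsilon)$ that avoids the origin lies on the hyperplane $\{\mu\in\RR^{r_i}:~\mu^T\sigma_\ell=\|\sigma_\ell\|_2^2\}$ with $\cP^i_{T^i}(\epsilon)$ lying on the origin side (since $\sigma_\ell$ is defined to be the foot of the perpendicular from the origin to that facet), and, by property (II) together with the fact that the inner bounding hyperplanes $H^{ij}$ have coordinate unit normals, every $\sigma_\ell$ is a \emph{nonnegative} vector. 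Since $q_k\ge 0$ and $x\ge 0$ force $\nu\ge 0$, the coordinate constraints in the definition of $\cP^i_{T^i}(\epsilon)$ hold automatically, so $\nu\in\cP^i_{T^i}(\epsilon)$ if and only if $\nu^T\sigma_\ell\le\|\sigma_\ell\|_2^2$ for all $\ell$.

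The next step is to recast each inequality $\nu^T\sigma_\ell\le\|\sigma_\ell\|_2^2$ in the knapsack form of \textsc{$\bar m$DKS-Sub}. From the definition $\Pij(\mu)=\frac{\sigma_\ell}{\|\sigma_\ell\|_2^2}(\mu^T\sigma_\ell)$ we get $\|\Pij(\mu)\|_2=\mu^T\sigma_\ell/\|\sigma_\ell\|_2$ whenever $\mu^T\sigma_\ell\ge 0$, so (using $\nu\ge 0$, $\sigma_\ell\ge 0$) the inequality is equivalent to $\|\Pij(\nu)\|_2\le\|\sigma_\ell\|_2$. By linearity of $\Pij$, $\Pij(\nu)=\sum_{k\in[n]}x_k\,\Pij(q_k)$, and since $\sigma_\ell\ge 0$ and $q_k\ge 0$ each $\Pij(q_k)=\frac{\sigma_\ell}{\|\sigma_\ell\|_2^2}(q_k^T\sigma_\ell)$ is a nonnegative multiple of the single fixed vector $\sigma_\ell$; hence all these summand vectors are parallel and codirectional, and (with $x_k\ge 0$) the norm of their sum equals the sum of their norms:
\[
\|\Pij(\nu)\|_2 \;=\; \Bigl\|\sum_{k\in[n]} x_k\,\Pij(q_k)\Bigr\|_2 \;=\; \sum_{k\in[n]} x_k\,\|\Pij(q_k)\|_2 .
\]
Putting the last two equivalences together, $\nu\in\cP^i_{T^i}(\epsilon)$ holds iff $\sum_{k\in[n]}\|\Pij(q_k)\|_2\,x_k\le\|\sigma_\ell\|_2$ for every $\ell$, which is precisely the list of knapsack constraints of block $i$ in \textsc{$\bar m$DKS-Sub}$\{\sigma^{T^i,i}_\ell\}$. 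Ranging over $i\in[m]$ yields the identity of feasible regions, and hence the lemma.

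I expect the one genuinely delicate point to be the middle equality in the display, namely $\bigl\|\sum_k x_k\Pij(q_k)\bigr\|_2=\sum_k x_k\|\Pij(q_k)\|_2$: this fails for arbitrary vectors and relies entirely on the $\Pij(q_k)$ all pointing along one common nonnegative ray, which in turn rests on the nonnegativity of the facet normals $\sigma^{T^i,i}_\ell$ (property (II)) and of the decomposition vectors $q^i_k\ge 0$. Everything else is routine bookkeeping: matching the facet count $m_{T^i}^i(\epsilon)$ with the number of knapsack constraints, and observing that the objective $f(x)$ is literally the same in both programs. (If the quantity $q^{ij}_k$ in the statement of \textsc{$\bar m$DKS-Sub} is read as the $j$-th coordinate of $q^i_k$ rather than the whole vector, the same argument applies verbatim, since a scalar times the fixed nonnegative vector $\sigma_\ell$ is still codirectional with it.)
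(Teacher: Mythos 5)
Your proof is correct and is essentially the explicit version of what the paper leaves implicit: the paper gives no proof of this lemma, asserting only that it ``follows straightforwardly from the convexity of the polytopes $\cP_{T^i}^i(\epsilon)$,'' and your argument --- rewriting each facet inequality $\nu^T\sigma_\ell\le\|\sigma_\ell\|_2^2$ as a knapsack constraint via the codirectionality of the projections, which rests on the nonnegativity of the facet normals (property (II)) and of the $q^i_k$ --- is exactly the intended justification. You also correctly flag the delicate step (norm of a sum equals sum of norms only because all projections lie on one nonnegative ray) and the notational slip $q^{ij}_k$ versus $q^i_k$ in the statement of {\sc $\bar m$DKS-Sub}.
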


Our 
approximation algorithm for \PS\ is described in Algorithm \ref{PSAPPROX} below, which enumerates every collection of sets $\cT=\{T^1,\ldots,T^m\}$ s.t. $|T^i|\le\frac{1}{\epsilon}$, then finds a near optimal solution for \textsc{PQP$_\cT$} using an 
approximation algorithm for {\sc $\bar m$DKS-Sub} (e.g., the algorithm in \cite{kul-sha-tam:c:maximizing-submodular-set-function-subject-to-multiple-linear-constraint}). 

\begin{algorithm}[!htb]
	\caption{ \PSAPPROX$(f,\{Q_i,C_i\}_{i\in[m]},\epsilon,\alpha(\epsilon))$} \label{PSAPPROX}
\begin{algorithmic}[1]
\Require A submodular function $f:\{0,1\}^n\to\RR_+$; matrices $Q_i\in\mathcal{CP}_n^*$, capacities $C_i\in\RR_+$, for $i\in[m]$; accuracy parameter $\epsilon$; and an $\alpha(\epsilon)$-approximation algorithm for {\sc $\bar m$DKS-Sub} with $\bar m=O(1)$
\Ensure $(1-\epsilon)^m\alpha(\epsilon)$-approximate solution $\hat{S}$ to \PS
\State For all $i\in [m]$, decompose $Q_i$ as $Q_i=U_iU_i^T$, where $U_i\in\QQ_+^{n\times r_i}$ \Comment{{\em $r_i=\text{cp-rank}(Q_i)$}}\label{s0}
\State $\hat{S} \leftarrow \varnothing$
\For{each  collection of sets $\cT=\{T^1,\ldots,T^m\}$ s.t. $T^i\subseteq[n]$ and $|T^i|\le\frac{1}{\epsilon}$}\label{qp-ss1}
  \State Set $q_{T^i}^i\leftarrow\sum_{k \in T^i} q^i_k$, and define the corresponding vectors $\{ \sigma^{T^i,i}_\ell\}$
  \State Obtain an $\alpha(\epsilon)$-approximate solution $S$ to {\sc $\bar m$DKS-Sub}$\{ \sigma^{T^i,i}_\ell\}$ \label{s1} 
  \If{$u(\hat{S}) < u(S)$}
  \State $\hat{S} \leftarrow S$
  \EndIf
\EndFor
\State \Return $\hat{S}$
\end{algorithmic}
\end{algorithm}

\begin{theorem}\label{QP-t2}
For any fixed $\epsilon>0$, if the problem of maximizing a submodular function subject to multiple knapsack constraints can be approximated within some constant factor $\alpha(\epsilon)$, then Algorithm \ref{PSAPPROX} runs in polynomial in size of the input and finds a $(1-\epsilon)^m\alpha(\epsilon)$-approximate solution to \PS. 
\end{theorem}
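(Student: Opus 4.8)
The argument divides into a routine running-time estimate, an immediate feasibility check, and the substantive analysis of the approximation factor. For the running time, I would first bound the outer loop: the collections $\cT=\{T^1,\dots,T^m\}$ with $|T^i|\le 1/\epsilon$ number at most $n^{O(m/\epsilon)}$, which is polynomial for fixed $m$ and $\epsilon$. Within an iteration, forming $q^i_{T^i}$, the grid inside $\cR^i_{T^i}$, the point set, the polytope $\cP^i_{T^i}(\epsilon)$ and its facet normals $\{\sigma^{T^i,i}_\ell\}$ takes time bounded purely in terms of $r_i$ and $\epsilon$ by properties (I)--(III); since $\bar r=O(1)$ this is constant per iteration up to the bit-lengths of the rationals involved, which stay polynomial once the square roots defining $w^{ij}_{T^i}$ are rounded as in the footnote. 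Finally, the assumed $\alpha(\epsilon)$-approximation for $\bar m$DKS-Sub runs in polynomial time because $\bar m=\sum_i m^i_{T^i}(\epsilon)=O(1)$ and the coefficients $\|\Pij(q^{ij}_k)\|_2$, $\|\sigma^{T^i,i}_\ell\|_2$ have polynomial encoding length. Multiplying gives a polynomial total.

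\emph{Feasibility of the output.} By Lemma~\ref{qp-lem-proj} every feasible solution of $\bar m$DKS-Sub$\{\sigma^{T^i,i}_\ell\}$ is feasible for PQP$_{\cT}$, i.e.\ satisfies $\sum_{k\in S}q^i_k\in\cP^i_{T^i}(\epsilon)$ for all $i$. Since each $\cP^i_{T^i}(\epsilon)\subseteq\cB(r_i,C_i)$ by construction (its vertices lie on the ball sector and its defining half-spaces contain the origin), any such $S$ obeys $\|\sum_{k\in S}q^i_k\|_2\le C_i$ and is therefore feasible for \PS. So the returned set $\hat S$ is always feasible, and it remains only to control its value.

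\emph{Approximation factor.} Fix an optimal $S^*$ with $f(S^*)=\OPT$. If $|S^*|\le 1/\epsilon$ the loop considers $\cT^*=\{S^*,\dots,S^*\}$, a short check shows $S^*$ is feasible for PQP$_{\cT^*}$, and the black box already returns value $\ge\alpha(\epsilon)\OPT$. Otherwise, for each $i$ let $T^{i*}\subseteq S^*$ be the $\lfloor 1/\epsilon\rfloor$ items $k\in S^*$ with largest $\|q^i_k\|_2$; the loop enumerates $\cT^*=\{T^{1*},\dots,T^{m*}\}$. The crux is the geometric claim that, for this guess, there is a set $\hat S$ feasible for PQP$_{\cT^*}$ with $f(\hat S)\ge(1-\epsilon)^m\OPT$. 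The intuition, following \cite{CEK14}, is that the items of $S^*$ outside $T^{i*}$ are ``$\epsilon$-small'' for constraint $i$, so the grid inside $\cR^i_{T^{i*}}$ — whose spacing along the $j$th axis is an $\frac{\epsilon}{2r_i}$-fraction of the room $w^{ij}_{T^{i*}}$ left by $T^{i*}$ — makes $\cP^i_{T^{i*}}(\epsilon)$ approximate the ball sector $\cR^i_{T^{i*}}$ closely enough that clipping $S^*$ against the facets of $\cP^i_{T^{i*}}(\epsilon)$, one constraint at a time, discards only items whose contribution I can charge using the nonnegative-submodular inequality $f(A)+f(B\setminus A)\ge f(B)$ (valid for $A\subseteq B$), losing at most a factor $(1-\epsilon)$ per constraint and hence $(1-\epsilon)^m$ over all $m$. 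Granting the claim, in the iteration for $\cT^*$ the black box returns $S$ with $f(S)\ge\alpha(\epsilon)\cdot\opt(\mathrm{PQP}_{\cT^*})\ge(1-\epsilon)^m\alpha(\epsilon)\OPT$, and since the algorithm keeps the best set over all iterations, $f(\hat S)\ge(1-\epsilon)^m\alpha(\epsilon)\OPT$.

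\emph{Main obstacle.} The hard part is exactly the geometric claim above: bounding how far $\sum_{k\in S^*}q^i_k$ can protrude past a facet of the inscribed polytope in terms of the grid resolution and the ``$\epsilon$-smallness'' of the non-guessed items, turning that protrusion into a controlled set of item deletions, and — the genuinely delicate point — making the deletions for the $m$ constraints compose multiplicatively (a factor $(1-\epsilon)$ each) rather than additively, despite $f$ being only submodular and not monotone. A secondary, purely technical obstacle is carrying the finite-precision rounding of the square roots $w^{ij}_{T}$ and of the vertices of $\cQ^i_T$, together with the error $\Delta_i$ from the approximate cp-decomposition of Corollary~\ref{cp-rank-approx}, through the whole argument; these are absorbed by a lower-order perturbation of the capacities, as the footnotes indicate.
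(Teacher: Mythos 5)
Your overall architecture matches the paper's: a routine running-time bound, feasibility via Lemma~\ref{qp-lem-proj}, and a reduction of the approximation guarantee to a single structural claim --- that some enumerated guess $\cT$ admits a subset $S\subseteq S^*$ feasible for {\sc PQP}$_\cT$ with $f(S)\ge(1-\epsilon)^m\OPT$. But that claim is precisely Lemma~\ref{qp-main-lem}, which carries essentially all of the difficulty of the theorem, and you explicitly grant it rather than prove it (``Granting the claim\ldots''). So what you have is a correct reduction to the paper's main lemma together with an accurate diagnosis that this lemma is the hard part; it is not yet a proof.

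Moreover, two of the specific devices you propose for discharging the claim would fail as stated. First, the two-set inequality $f(A)+f(B\setminus A)\ge f(B)$ for nonnegative submodular $f$ only guarantees that one of the two pieces retains half the value of $B$; it cannot by itself yield a $(1-\epsilon)$ loss per constraint. The paper instead uses Lemma~\ref{sub-pro}: if $S$ is partitioned into $k$ disjoint parts, then deleting \emph{some} single part loses at most a $1/k$ fraction of $f(S)$. To invoke this with $k\ge 1/\epsilon$ one needs the geometric packing Lemma~\ref{qp-lem-pack}, which shows that the non-guessed items can be batched into at least $\frac{1}{\epsilon}-1$ groups, each with total $j$th component at least one grid cell $\frac{\epsilon}{2r_i}w^{ij}_{T}$, so that deleting any one batch (or any one guessed item) pulls $q^i_{S}$ back inside $\cP^i_{T}(\epsilon)$; the $|T|$ singletons plus these batches supply the $\ge 1/\epsilon$ parts. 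Second, your choice of $T^{i*}$ as the $\lfloor 1/\epsilon\rfloor$ items of largest $\|q^i_k\|_2$ is not what the paper does: Algorithm~\ref{QC-Construct} builds $T$ adaptively, repeatedly absorbing every item having some coordinate exceeding $\frac{\epsilon}{2r_i}w^{ij}_{T}$ for the \emph{current} residual room $w^{ij}_{T}$, precisely so that at termination all surviving items satisfy hypothesis (C2) of Lemma~\ref{qp-lem-pack} relative to the final grid resolution. A static largest-norm selection gives no such guarantee, and the batching argument breaks. Finally, the multiplicative composition of the $m$ losses is obtained by performing the deletions sequentially, each applied to the set surviving the previous constraint (the outer loop of Algorithm~\ref{QC-Construct}), not by charging each constraint independently against the original $S^*$.
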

\begin{proof}
It is easy to see that Algorithm~\ref{PSAPPROX} runs in polynomial time in the number of items $n$. Let $S^\ast$ be an optimal solution. To establish the approximation ratio for Algorithm~\ref{PSAPPROX}, we explicitly construct a set $S \subseteq S^\ast$ in Lemma~\ref{qp-main-lem} below such that $u(S)\ge(1-\epsilon)^mu(S^\ast)$, and  $S$ is feasible to {\sc PQP$_\cT$} for some collection $\cT=\{T^1,\ldots,T^m\}$ s.t. $T^i\subseteq[n]$ and $|T^i|\le\frac{1}{\epsilon}$. By Lemma \ref{qp-lem-proj}, invoking the $\alpha(\epsilon)$-approximation algorithm for {\sc $\bar m$DKS-Sub$\{\sigma_\ell^{T^i,i}\}$} gives  an $\alpha(\epsilon)$-approximation $\widehat S$ to {\sc PQP$_\cT$}. Then it follows that 
\begin{equation}
u( \widehat S ) \ge \alpha(\epsilon) u(S) \ge (1-\epsilon)^{m}\alpha(\epsilon) \OPT.
\end{equation}
The proof is thus completed by Lemmas~\ref{qp-main-lem}-\ref{sub-pro} below.
\end{proof}

\sodaonly{
\begin{lemma}\label{qp-main-lem}
Consider a feasible solution $S^\ast$ to \PS. One can construct a subset $S\subseteq S^\ast$ and a collection $\cT=\{T^1,\ldots,T^m\}$, such that  ~~~
(i) $T^i\subseteq S^\ast$ and $|T^i|\le\frac{1}{\epsilon}$; ~~
(ii) $S$ is a feasible solution to {\sc PQP}$_\cT$ and $u(S)\ge(1-\epsilon)^m u(S^\ast)$.  
\end{lemma}
}
\fullonly{
\begin{lemma}\label{qp-main-lem}
Consider a feasible solution $S^\ast$ to \PS. Algorithm~\ref{QC-Construct} returns a subset $S\subseteq S^\ast$ and a collection $\cT=\{T^1,\ldots,T^m\}$, such that  ~~~
(i) $T^i\subseteq S^\ast$ and $|T^i|\le\frac{1}{\epsilon}$; ~~
(ii) $S$ is a feasible solution to {\sc PQP}$_\cT$ and $u(S)\ge(1-\epsilon)^m u(S^\ast)$.  
\end{lemma}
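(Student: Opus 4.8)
The plan is to process the constraints $i = 1, \ldots, m$ one at a time, starting from $S^\ast$ and repeatedly discarding a small number of low-utility items so that the surviving set fits inside the inscribed polytope $\cP_{T^i}^i(\epsilon)$ for a suitably chosen $T^i \subseteq S^\ast$ of size at most $1/\epsilon$. Concretely, I would build a decreasing chain $S^\ast = S_0 \supseteq S_1 \supseteq \cdots \supseteq S_m = S$, where at step $i$ we have a set $S_{i-1}$ still feasible for \PS\ (so $\sum_{k\in S_{i-1}} q^i_k \in \cB(r_i, C_i)$), and we produce $S_i$ together with $T^i$. The key geometric point is that $\sum_{k \in S_{i-1}} q^i_k$ lies in the ball sector $\cB(r_i,C_i)$ but may lie in the "spherical cap" region between the inscribed polytope $\cP_{T^i}^i(\epsilon)$ and the sphere; to pull it back inside, it suffices to remove items of total $q^i$-weight proportional to $\epsilon$ times the gap $w^{ij}_{T^i}$ along each axis — and the grid spacing in \raf{sqrt} was chosen precisely so that $\frac{\epsilon}{2r_i} w^{ij}_{T^i}$ is the relevant scale.

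For the choice of $T^i$: I would let $T^i$ be the set of the $\lceil 1/\epsilon\rceil$ items of $S_{i-1}$ (equivalently, of $S^\ast$ restricted appropriately) with the largest value of $\|q^i_k\|_2$ — or, to make the utility bookkeeping cleanest, the items of largest utility $u_k$ among those "responsible" for pushing the sum out of the polytope. The point of fixing $T^i$ first is that once $q^i_{T^i} = \sum_{k\in T^i} q^i_k$ is pinned down, the grid, the points $P_{T^i}^i(\epsilon)$, and hence $\cP_{T^i}^i(\epsilon)$ are all determined, and property (II) guarantees every facet normal is nonnegative. Then the items outside $T^i$ each have $q^i$-weight bounded (in the relevant coordinatewise sense) by a $(1/\lceil 1/\epsilon\rceil) \le \epsilon$ fraction of $q^i_{T^i}$ plus the residual to the boundary, so removing all remaining fractional "overflow" costs at most an $\epsilon$ fraction of the items in $S_{i-1} \setminus T^i$; since those are the lower-utility items, this costs at most an $\epsilon$ fraction of $u(S_{i-1})$, giving $u(S_i) \ge (1-\epsilon) u(S_{i-1})$. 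Iterating over the $m$ constraints yields $u(S) \ge (1-\epsilon)^m u(S^\ast)$, and feasibility of $S$ for \textsc{PQP}$_\cT$ holds because at each step we explicitly forced $\sum_{k\in S_i} q^i_k \in \cP_{T^i}^i(\epsilon)$ and later steps only remove items (which, by nonnegativity of the $q^i_k$ and of the facet normals from property (II), preserves membership in each earlier $\cP_{T^{i'}}^{i'}(\epsilon)$).

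The main obstacle I anticipate is making the geometric rounding step rigorous: showing that from any point $p = \sum_{k\in S_{i-1}} q^i_k \in \cB(r_i,C_i)$, after fixing $T^i$, one can delete a set $R \subseteq S_{i-1}\setminus T^i$ with $|R| \le \epsilon |S_{i-1}\setminus T^i|$ (or with $u(R) \le \epsilon\, u(S_{i-1})$) such that $\sum_{k\in S_{i-1}\setminus R} q^i_k \in \cP_{T^i}^i(\epsilon)$. This requires quantifying how far outside the inscribed polytope $p$ can be — i.e., bounding, for each facet half-space $H_+ \in \cH_{T^i}^i$ and each axis half-space $H_+^{ij}$, the amount by which $p$ violates it — in terms of the grid spacing $\frac{\epsilon}{2r_i} w^{ij}_{T^i}$, and then arguing that greedily removing the smallest-utility items (whose individual $q^i_k$ contributions are at most the average) suffices to restore all the inequalities simultaneously. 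A secondary technical point is handling the finite-precision rounding of $w^{ij}_T$ flagged in the footnote, but as noted there this is absorbed into the same error-handling machinery used for the $\Delta_i$'s and can be deferred.
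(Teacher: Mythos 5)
Your high-level skeleton matches the paper's: process the constraints one at a time, maintain a shrinking chain $S^\ast = S_0 \supseteq \cdots \supseteq S_m$, fix a small set $T^i$ so that the polytope $\cP^i_{T^i}(\epsilon)$ is determined, delete a few items to move the accumulated vector from a boundary cell into the inscribed polytope, and use the nonnegativity of the facet normals (property (II)) to see that later deletions preserve membership in the earlier polytopes. That last observation is correct and is exactly how the paper argues feasibility for {\sc PQP}$_\cT$.

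There is, however, a genuine gap in your value-loss argument, and it is not a technicality: the objective $f$ in \PS\ is a general nonnegative submodular function, and your plan to greedily remove the smallest-utility items (charging the loss as ``an $\epsilon$ fraction of the lower-utility items'') only makes sense for a linear objective. For submodular $f$ there is no utility ordering to exploit, and deleting even one item can destroy a large fraction of $f(S)$. The paper's resolution is structural rather than greedy: it iteratively peels off a set $T$ of vectors that are \emph{large} relative to the current residual widths $w^{ij}_T$ of \raf{sqrt}, stopping when $|T|\ge \frac{1}{\epsilon}$ or no large vectors remain; in the latter case Lemma~\ref{qp-lem-pack} partitions the remaining (all small) vectors into at least $\frac{1}{\epsilon}-1$ groups, each with total component at least $\frac{\epsilon}{2r_i}w^{ij}_T$ along some axis. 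Either way one obtains a partition of $S$ into at least $\frac{1}{\epsilon}$ parts, \emph{each} of which, when removed, shifts some coordinate down by a full grid-cell width and hence lands the sum inside $\cP^i_T(\epsilon)$ by property (II). Lemma~\ref{sub-pro} (a pigeonhole consequence of submodularity) then guarantees that at least one of these parts can be dropped while retaining a $(1-\frac{1}{k})\ge(1-\epsilon)$ fraction of $f(S)$; the dropped part is selected by evaluating $f$, not by utility rank. Your static choice of $T^i$ (largest $\|q^i_k\|_2$ or largest utility) also misses the point of the iterative construction: the thresholds $\frac{\epsilon}{2r_i}w^{ij}_T$ shrink as $T$ grows, and the resulting dichotomy --- either $T$ already contains $\ge \frac{1}{\epsilon}$ individually sufficient candidates to drop, or everything left is small enough to be batched by Lemma~\ref{qp-lem-pack} --- is precisely what makes the $\ge\frac{1}{\epsilon}$-part partition, and hence the $(1-\epsilon)$ per-constraint loss, possible.
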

}
\fullonly{
\begin{proof}

Starting from $S^\ast$, where $\sum_{k\in S^\ast}q_k^1\in\cB(r_1,C_1)$, Algorithm~\ref{QC-Construct} first finds sets $T^1,S^1\subseteq S^*$ such that $|T^1|\le \frac{1}{\epsilon}$, $u(S^1)\ge (1-\epsilon)u(S^*)$, and the set of vectors in $S^1$ can be packed inside the polytope $\cP^1_{T^1}(\epsilon)$. 
 
The way for finding such sets is as follows. Let $\overline\ell$ and $T_{\overline\ell}$ be the values of $\ell$ and $T$ at the end of the repeat-until loop (line~\ref{qp-ss1-}). The algorithm first constructs a nested sequence $T_0\subset T_1 \subset \cdots \subset T_{\overline\ell}$, such that
a vector $q_k^1$ is included in each iteration if it has a "large" component $q^{1j}_k$ for some $j\in[r_1]$. The iteration proceeds until a sufficiently large number of vectors have been accumulated (namely, $|T_{\overline\ell}|\ge \frac{1}{\epsilon}$), or no vectors with large components remain. At the end of the iteration, if the condition in line~\ref{qp-sss1} holds, then set $T^1=T_{\overline\ell}$; otherwise, the algorithm finds a subset of $S^\ast$  that belongs to $\cP^i_{T_{\overline\ell}}(\epsilon)$. To do so, the set $S^*\backslash {T_{\overline\ell}}$ of remaining vectors is partitioned into at least $\frac{1}{\epsilon}-1$ groups such that, along one of the principal axes, the total sum in each group is "large". Then some vector or group of vectors is dropped, ensuring that the set $S^1$ of remaining vectors can be packed inside $\cP^1_{T^1}(\epsilon)$.

We now have to consider two cases (line~\ref{qp-sss2}): (i) $|T_{\overline\ell}|\ge\frac{1}{\epsilon}$, or (ii) $S_{\overline\ell}=\varnothing$. For case (i), the algorithm proceeds to line\ref{qp-sss2} - combining the vectors in $S\backslash {T_{\overline\ell}}$ into one group $V_1$. For case (ii), the set $S\backslash {T_{\overline\ell}}$ can be partitioned into at least $\frac{1}{\epsilon}-1$ groups $V_1,\ldots,V_h$ due to Lemma \ref{qp-lem-pack}, where each group $V_p$, $p\in[h]$, has a large total component along one of principal axes (precisely, greater than $\frac{\epsilon}{2r_1} w^{1j}_{T_{\overline\ell}}$ for some $j\in [r_1]$). We define $S^1$ by deleting some vector or group of vectors from $S$ (lines \ref{qp-sss4} and \ref{qp-sss5}). Hence, in both cases, $S^*=\bigcup_{k\in T_{\overline\ell}}\{k\}\cup \bigcup_{p\in [h]}V_p$ induces a partition of $S^*$ into $|T_{\overline\ell}|+h\ge \frac{1}{\epsilon}$ subsets. By Lemma \ref{sub-pro} below, we have
\[
f(S^1) \ge (1 - \frac{1}{|T_{\overline\ell}|+h})f(S^*)\ge (1 -\epsilon)f(S^*).
\]

Note that removing any one vector $k\in T_{\overline\ell}$ or group $V_p$ from $S^*$ will insure that the resulting set $S^1$ satisfies $q_{S^1}^1\in\cP^1_{T_{\overline\ell}}(\epsilon)$, since the lengths $w^{1j}_{T_{\ell}}$ are monotone decreasing in $\ell$, for all $j\in[r_1]$, and the dropped vector or group of vectors has total length at least $\frac{\epsilon}{2r_i} w^{1j}_T$ along one of the principal directions $j\in[r_1]$; on the other hand, the boundary cell that contains the original point $q_{S^*}^1$ has length at most $\frac{\epsilon}{2r_i} w^{1j}_T$ along the $j$th direction. It follows from property (II) stated above that reducing $j$th component of $q_{S^*}^1$ by $\frac{\epsilon}{2r_i} w^{1j}_T$ guarantees that the resulting vector lies in the polytope $\cP^1_{T_{\overline\ell}}(\epsilon)$.

Now starting from $S^1$ which satisfies $\sum_{k\in S^1}q_k^2\in\cB(r_2,C_2)$, we find sets     
$T^2,S^2\subseteq S^1$ such that $|T^2|\le \frac{1}{\epsilon}$, $f(S^2)\ge (1-\epsilon)f(S^1)\ge (1-\epsilon)^2f(S^\ast)$ and the set of demands in $S^2$ can be packed inside the polytope $\cP^2_{T^2}(\epsilon)$, and so on. Finally, we find the set $\{T^1,\ldots,T^m,S'\}$ that satisfies the lemma.
The details are given in Algorithm~\ref{QC-Construct}.\end{proof}
\begin{algorithm}[!htb]
\caption{{\sc QC-Construct}$(f,\{q^i_k\}_{k\in S^\ast},~\{C_i\}_{i\in[m]},\epsilon)$}
\label{QC-Construct}
\begin{algorithmic}[1]
\Require  A submodular function $f:\{0,1\}^n\to\RR_+$; vectors $\{q^i_k\}_{k\in S^\ast,~i\in[m]}$; capacities $\{C_i\}_{i\in[m]}$; accuracy parameter $\epsilon$
\Ensure A pair $(\cT=\{T^1,\ldots,T^m\},S)$ of a collection of sets $T^i\subseteq S^\ast$ and a subset $S\subseteq S^\ast$
\State $S\leftarrow S^\ast$
\For{$i=1,\ldots,m$}  
\State $\ell\leftarrow0$; $T\leftarrow\varnothing$
\Repeat \Comment{{\em Find a subset of large vectors $T$ w.r.t. to the $i$th constraint}}
  \State $\ell\leftarrow\ell+1$
    \State $S_\ell\leftarrow\{k\in S\setminus T\mid \exists j\in[r_i]: q_k^{ij}>\frac{\epsilon}{2r_i} w^{ij}_{T}\}$, where $w^{ij}_{T}$ is given by \raf{sqrt}
    \State $T\leftarrow T\cup S_\ell$
  \Until{ $|T|\ge \frac{1}{\epsilon}$ \  or \   $S_\ell=\varnothing$ \   or  \  $ S\backslash {T}=\varnothing$}\label{qp-ss1-}   

\If{$q_S^i\in \cP_{T_\ell}^i(\epsilon)$} \label{qp-sss1} 
  \State $T^i\leftarrow T$
\Else
 \Comment{{\em Find a subset of $S$  that belongs to $\cP^i_{T_{\ell}}(\epsilon)$}}
     \If{$|{T}|\ge\frac{1}{\epsilon}$} \label{qp-sss2}
         \State $T\leftarrow $ the set of the first $\frac{1}{\epsilon}$ elements added to $T$ 
         \State $h\leftarrow1$; $V_1\leftarrow S\backslash {T}$\label{qp-sss2-}
      \Else
	 \State $T^i\leftarrow{T}$
        \State Find a partition $V_1,\ldots, V_h$ of $S\backslash {T}$ s.t. $\exists j\in[r_i]$: $\sum_{k\in V_p}q_k^{ij}\geq\frac{\epsilon}{2r_i} w^{ij}_{T}~~\forall p\in[h]$ \label{qp-sss3}
     \EndIf
	\State $K\leftarrow\{k\in {T}\,|\,f(S\backslash\{k\})\ge (1-\epsilon)f(S)\}$
	\State $P\leftarrow\{p\in [h]\,|\,f(S\backslash V_p)\ge (1-\epsilon)f(S)\}$\label{qp-sss3-} 
	\If{$K\not=\varnothing$}
          \State Pick arbitrary $\hat{k}\in K$
          \State $S\leftarrow S\backslash\{\hat{k}\}$\label{qp-sss4}
      \Else
          \State Pick arbitrary $\hat{p}\in P$
          \State $S\leftarrow S\backslash V_{\hat{p}}$\label{qp-sss5}
      \EndIf
\EndIf
\EndFor
\State \Return $(\{T^1,\ldots,T^m\},S)$
\end{algorithmic}
\end{algorithm}
}

\begin{lemma}\label{qp-lem-pack}
	Consider sets $T \subseteq S\subseteq[n]$ such that~~~
(C1)
$\sum_{k\in S}q_k^i\in\cB(r_i,C_i)\setminus\cP^i_T(\epsilon)$; ~~~  
(C2)
$q_k^{ij}\le \frac{\epsilon}{2r_i} w^{ij}_T$, for all $j\in[r_i]$ and $k \in S \backslash T$. 
 Then there exist $j\in[r_i]$, $h\in[\frac{1}{\epsilon}-1,\frac{2r_i}{\epsilon})$, and a partition $\{V_1,\ldots, V_h\}$ of $S \backslash T$ such that $\sum_{k\in V_s}q_k^{ij}\geq\frac{\epsilon}{2r_i} w^{ij}_T$ for all $s\in [h]$.
\end{lemma}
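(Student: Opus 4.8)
The plan is to fix the constraint index $i$ once and for all, abbreviate $r=r_i$, $C=C_i$, $w^j=w^{ij}_T$, $\theta_j=\frac{\epsilon}{2r}w^j$, and $q_S=\sum_{k\in S}q^i_k$, $q_T=q^i_T$, with coordinates $q^j_S=\sum_{k\in S}q^{ij}_k$ and $q^j_T=q^{ij}_T$, and then to reduce the lemma to a geometric statement about the single point $q_S$ followed by an elementary ``bin-covering'' partitioning argument. Before anything I would record two facts. Since $T\subseteq S$ and the $q^i_k$ are nonnegative, $q_S\ge q_T\ge 0$ and $\|q_T\|_2\le\|q_S\|_2\le C$, the last inequality being (C1). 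Squaring the definition \raf{sqrt} gives, for \emph{every} $j$, the identity $C^2=\|q_T\|_2^2+(w^j)^2+2q^j_Tw^j$, so that $R^2:=C^2-\|q_T\|_2^2=w^j(w^j+2q^j_T)$; feeding this into $\|q_S\|_2^2\ge(q^j_S)^2+\sum_{j'\neq j}(q^{j'}_T)^2$ then yields $q^j_S\le q^j_T+w^j$ for each $j$. These are used throughout.

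The first real step is to prove: \emph{under (C1) there is an index $j\in[r]$ with $q^j_S-q^j_T\ge\frac{1-\epsilon/2}{r}\,w^j$.} Here I would use that $q_S\ge 0$ but $q_S\notin\cP_T^i(\epsilon)=\bigl(\bigcap_{H_+\in\cH_T^i}H_+\bigr)\cap\bigl(\bigcap_j H^{ij}_+\bigr)\cap\{\nu\ge0\}$, so either (A) $q_S\notin H^{ij}_+$ for some $j$, or (B) $q_S\notin H_+$ for some $H_+\in\cH_T^i$. In case (A) the hyperplane $H^{ij}$ is the outermost grid hyperplane normal to the $j$th axis that meets $\cR_T^i$, hence it sits at $\nu_j=q^j_T+a_j\theta_j$ with $a_j=\lfloor w^j/\theta_j\rfloor=\lfloor 2r/\epsilon\rfloor\ge 2r/\epsilon-1$, so $q^j_S-q^j_T>a_j\theta_j\ge(1-\frac{\epsilon}{2r})w^j\ge\frac{1-\epsilon/2}{r}w^j$. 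In case (B) I would let $F$ be the violated facet of $\cQ_T^i$; by property (II) its outward normal is nonnegative and all of its vertices, being points of $P_T^i(\epsilon)$ on the sphere of radius $C$, lie inside one boundary cell $Z$, and since $F$ caps off precisely the spherical patch belonging to $Z$, every point of $\cR_T^i$ lying strictly on the far side of $F$ — in particular $q_S$ — lies in $Z$. Being a boundary cell, $Z$ contains a point $\nu^\star$ with $\|\nu^\star\|_2=C$, and $\nu^\star\ge q_T$ because the corners of $Z$ are grid points, each $\ge q_T$. Writing $\delta_j=\nu^\star_j-q^j_T\in[0,w^j]$, the identity above gives $R^2=\sum_j\delta_j(2q^j_T+\delta_j)\le\sum_j\delta_j(2q^j_T+w^j)=R^2\sum_j\delta_j/w^j$, whence $\sum_j\delta_j/w^j\ge1$ and therefore $\delta_j\ge w^j/r$ for some $j$; since $q_S$ and $\nu^\star$ lie in the same cell, $q^j_S-q^j_T\ge\delta_j-\theta_j\ge\frac{w^j}{r}-\frac{\epsilon}{2r}w^j=\frac{1-\epsilon/2}{r}w^j$, which proves the claim. (The degenerate case $R=0$ forces $w^j\equiv0$ and $q_S=q_T$, i.e.\ $S\setminus T=\varnothing$, excluded where the lemma is applied.)

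The second step is the partition. Fixing $j$ from the claim and putting $W:=q^j_S-q^j_T=\sum_{k\in S\setminus T}q^{ij}_k$, the claim reads $W\ge(\frac{2}{\epsilon}-1)\theta_j$, while (C2) says $q^{ij}_k\le\theta_j$ for all $k\in S\setminus T$. I would then scan $S\setminus T$ in an arbitrary order, greedily adding elements to a current block and closing that block the moment its $j$-sum reaches $\theta_j$; each closed block then has $j$-sum in $[\theta_j,2\theta_j)$, and the residual elements (total $j$-sum $<\theta_j$) are merged into the last closed block. If $h$ blocks result, then $W<2h\theta_j+\theta_j$ gives $h>\frac{1}{\epsilon}-1$, while $W\ge h\theta_j$ together with $W\le w^j=\frac{2r}{\epsilon}\theta_j$ gives $h\le\frac{2r}{\epsilon}$, and one extra merge handles the boundary equality case so that $h<\frac{2r}{\epsilon}$ while still $h>\frac{1}{\epsilon}-1$. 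The blocks $V_1,\dots,V_h$ partition $S\setminus T$, each satisfies $\sum_{k\in V_s}q^{ij}_k\ge\theta_j=\frac{\epsilon}{2r_i}w^{ij}_T$, and $h\in[\frac{1}{\epsilon}-1,\frac{2r_i}{\epsilon})$, which is the assertion.

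The step I expect to be the real obstacle is case (B) of the geometric claim — the assertion that a point of the ball sector lying strictly above an outer facet $F$ of $\cQ_T^i$ must lie in the boundary cell that contains $F$. This is exactly what property (II) is for ($F$ has a nonnegative normal and all its vertices live in a single boundary cell), used together with the facts that the grid partitions $\cR_T^i$ and that $\cQ_T^i$ is the convex hull of \emph{all} grid-line/sphere intersections, so its outer facets seal the sphere off cell by cell. Everything else — the identity extracted from \raf{sqrt}, the averaging over coordinates, and the bin-covering count — should be routine.
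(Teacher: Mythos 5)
Your overall architecture is the same as the paper's: first show that some coordinate $j$ of $\kappa:=\sum_{k\in S\setminus T}q_k^i$ is at least (roughly) $w_T^{ij}/r_i$, then greedily batch the elements of $S\setminus T$ along that coordinate. Your batching step, the identity $(w_T^{ij})^2+2w_T^{ij}q_T^{ij}=C_i^2-\|q_T^i\|_2^2$ extracted from \raf{sqrt}, the averaging over coordinates, and your case (A) are all sound and essentially match the paper. The problem is exactly where you flagged it: case (B). The assertion that every point of $\cR_T^i$ strictly beyond a facet $F$ of $\cQ_T^i$ lies in the boundary cell $Z$ containing $F$'s vertices does not follow from property (II), which only locates the \emph{vertices} of $F$ in a single cell. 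The cap $\{x\in\cR_T^i:\,a^Tx>b\}$ cut off by $F$ is bounded above by a spherical patch that, for $r_i\ge 3$, can spill into neighbouring cells whose own sphere patches happen to contribute no vertices to the circle $\{x:\|x\|_2=C_i,\ a^Tx=b\}$; nothing in the construction forces that cap to be confined to $Z$. Moreover, even if one tried to work instead with the cell containing $q_S$ itself, that cell need not be a boundary cell (it need not meet the sphere at all), so the sphere point $\nu^\star$ you need within distance $\theta_j$ of $q_S$ in each coordinate may simply not exist. As written, case (B) is a genuine gap, and it is the main case.

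The paper's proof of the same claim sidesteps cell-location entirely, and you should adopt its route: let $\eta=q_T^i$ and let $H$ be the hyperplane through the $r_i$ points $\eta+w_T^{ij}e_j$, i.e.\ $H=\{\nu:\sum_j(\nu_j-\eta_j)/w_T^{ij}=1\}$. The box $\{\nu:\eta\le\nu\le\xi\}$ with $\xi_j=\eta_j+w_T^{ij}/r_i$ lies weakly on the origin side of $H$; on the other hand, every point of $\cR_T^i$ on the origin side of $H$ is a convex combination of $\eta$ and the sphere points $\eta+w_T^{ij}e_j$ (write $\nu=\eta+\sum_j\lambda_j w_T^{ij}e_j$ with $\lambda_j\ge0$, $\sum_j\lambda_j\le1$), all of which belong to $\cP_T^i(\epsilon)$, so that part of $\cR_T^i$ is contained in $\cP_T^i(\epsilon)$. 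Hence (C1) forces $q_S$ strictly beyond $H$, i.e.\ $\sum_j\kappa^j/w_T^{ij}>1$, and some $\kappa^j>w_T^{ij}/r_i$. This uses only that the facet normals are nonnegative and that all vertices of $\cQ_T^i$ are sphere points dominating $\eta$ (so each lies weakly beyond $H$, exactly your $\sum_j\delta_j/w_T^{ij}\ge1$ computation) --- no claim about which cell anything lives in. With that replacement, your partitioning step goes through verbatim and yields the stated bounds on $h$.
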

\fullonly{
\begin{proof}
We define $\eta \triangleq \sum_{k \in T} q_k^i$ and $\kappa \triangleq\sum_{k \in S \backslash T}q^i_k$. Since $\eta+\kappa=\sum_{k\in S}q_k^i\in\cB(r_i,C_i)\setminus\cP^i_T(\epsilon)$, we claim that there is a $j\in[r_i]$ such that $\kappa^{j}>\frac{w^{ij}_T}{r_i}$. Indeed, consider the $(r_i-1)$-dimensional simplex whose $j$th vertex, for $j\in[r_i]$, is $(\eta^1,\ldots,\eta^{j-1},\eta^j+w^{ij}_T,\eta^{j+1},\ldots,\eta^{r_i})$. Then the point $\xi\in\RR^{r_i}$, defined by $\xi^j=\eta^j+\frac{w^{ij}_T}{r_i}$ for $j\in[r_i]$, lies on the simplex, implying that the whole box $\cC:=\{\nu\in\RR^{r_i}:~\eta\le \nu \le\xi\}$ lies on the same side, including $\eta$, of the hyperplane $H$ defined by the simplex. On the other hand, any facet of $\cP_T^i(\epsilon)$ is defined by the convex hull of a set of points on the ball, which lie on the other side of $H$. Consequently, the point $\eta+\kappa$ strictly lies on this other side, and hence outside the box $\cC$, implying the claim. 

Let us fix $j\in[r_i]$ as in the claim, and pack consecutive vectors $q_k^i,$ $k \in S \backslash T$, into batches such that the sum in each batch has the $j$th component of length in the interval $(\frac{\epsilon}{2r_i} w^{ij}_T,\frac{\epsilon}{r_i} w^{ij}_T]$. More precisely, we fix an order on $S \backslash T := \{1, \ldots, \ell\}$, and find indices $1=k_1<k_2<\cdots <k_{h'}<k_{h'+1}=\ell+1$ such that
\begin{eqnarray}\label{ee1}
\sum_{k=k_l}^{k_{l+1}-1}q_k^{ij}&\leq&\frac{\epsilon}{r_i} w^{ij}_T,\text { for $l=1,\ldots,h'$}\label{eq1}\\ \mbox{and \ } \sum_{k=k_l}^{k_{l+1}}q_k^{ij}&>&\frac{\epsilon}{r_i} w^{ij}_T\text { for $l=1,\ldots,h'-1$}; \label{eq2}
\end{eqnarray}
the existence of such indices is guaranteed by (C2).
It follows from \raf{eq2} that $\sum_{k=k_l}^{k_{l+1}-1}q_k^{ij} >\frac{\epsilon}{2r_i} w^{ij}_T$ for $l=1,\ldots,h'-1,$ since $q_{k_{l+1}}^{ij}\le\frac{\epsilon}{2r_i} w^{ij}_T$. It also follows that $\frac{1}{\epsilon}\le h'<\frac{2r_i}{\epsilon}+1$, since summing \raf{eq1} for $\ell=1,\ldots, h'$ yields
\begin{eqnarray}
\frac{\epsilon}{r_i} h' w^{ij}_T \ge \sum_{l=1}^{h'}\sum_{k=k_l}^{k_{l+1}-1}q_k^{ij}=\sum_{k=1}^{\ell} q_k^{ij}=\kappa^j\ge \frac{w^{ij}_T}{r_i}; \label{eq:kappa}
\end{eqnarray}
the last inequality follows from our assumption.
Similarly, summing \raf{eq2} for $\ell=1,\ldots, h'-1$ yields
\begin{eqnarray}
(h'-1) \frac{\epsilon}{r_i} w^{ij}_T < \sum_{l=1}^{h'-1}\sum_{k=k_l}^{k_{l+1}}q_k^{ij}\le 2\sum_{k=1}^{r_i} q_k^{ij}=2\kappa^j\le 2 w^{ij}_T,
\end{eqnarray}
where the last inequality follows from (C1).
Setting $V_l\triangleq\{k_l,k_l+1,\ldots,k_{l+1}-1\}$, for $l=1,\ldots,h'-2$,  $V_{h'-1}=\{k_{h'-1},\ldots,r\}$, and $h\triangleq h'-1$  will satisfy the claim of the Lemma.
\end{proof}
}

\begin{lemma}
\label{sub-pro}
Let $f:2^{[n]}\rightarrow \RR_+$ be a nonnegative submodular function and let $S\subseteq [n]$ be a non-empty set. If $\{S_1,\ldots, S_k\}$ is a partition of $S$ into $k$ disjoint subsets, then 
$f(S\backslash S_i)\ge (1-\frac{1}{k})f(S)$ for some $i\in [k]$.
\end{lemma}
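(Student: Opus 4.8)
Lemma~\ref{sub-pro} is an averaging statement: since $\{S_1,\ldots,S_k\}$ partitions $S$, it suffices to show that $\sum_{i\in[k]} f(S\setminus S_i)\ge (k-1)f(S)$, for then the maximum term — and hence some term — is at least $\frac{k-1}{k}f(S)=(1-\frac{1}{k})f(S)$. So the plan is to prove this summed inequality directly from submodularity.

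The key tool will be the standard consequence of submodularity that for any set $S$ and any subset $A\subseteq S$, one has $f(S)-f(S\setminus A)\le \sum_{a\in A}\bigl(f(S)-f(S\setminus\{a\})\bigr)$; more conveniently here, I would use the set-version: if $A\subseteq S$ is partitioned into blocks, the ``marginal loss'' of removing $A$ is at most the sum of the marginal losses of removing the blocks, where each block's loss is measured from $S$ itself (this follows by telescoping and submodularity, since removing a block from a smaller set costs at least as much). Apply this with $A=S$ and the partition $\{S_1,\ldots,S_k\}$: we get $f(S)-f(S\setminus S)=f(S)-f(\emptyset)\le \sum_{i\in[k]}\bigl(f(S)-f(S\setminus S_i)\bigr)$. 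Using $f(\emptyset)\ge 0$ (nonnegativity), this rearranges to $\sum_{i\in[k]} f(S\setminus S_i)\le (k-1)f(S)+f(\emptyset)$ — which is the wrong direction. So instead I would run the telescoping in the other order: write $f(S)=f\bigl((S\setminus S_i)\cup S_i\bigr)$ and use submodularity block by block to \emph{lower}-bound $\sum_i f(S\setminus S_i)$.

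Concretely, order the blocks and set $A_j=S_1\cup\cdots\cup S_j$, $A_0=\emptyset$, $A_k=S$. Submodularity gives, for each $j$, $f(A_j)-f(A_{j-1})\ge f(S)-f(S\setminus S_j)$ (adding $S_j$ to the larger set $S\setminus S_j$ helps at least as much as adding it to $A_{j-1}\subseteq S\setminus S_j$). Summing over $j\in[k]$, the left side telescopes to $f(S)-f(\emptyset)\le f(S)$, so $f(S)\ge \sum_{j\in[k]}\bigl(f(S)-f(S\setminus S_j)\bigr)=k f(S)-\sum_j f(S\setminus S_j)$, i.e. $\sum_j f(S\setminus S_j)\ge (k-1)f(S)$, as desired. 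Then by averaging, $\max_{i\in[k]} f(S\setminus S_i)\ge (1-\frac1k)f(S)$.

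The proof is essentially a one-line telescoping argument, so there is no real obstacle; the only thing to be careful about is the direction of the submodularity inequality (adding an element to a \emph{superset} yields a \emph{smaller} or equal marginal, equivalently adding to a subset yields a larger marginal), and to make sure the monotonicity assumption is \emph{not} used — only nonnegativity (via $f(\emptyset)\ge 0$, which we actually discard) and submodularity. I would also remark that the bound is tight (e.g. $f$ the cardinality function with $|S_i|$ all equal), though that is not needed for the application.
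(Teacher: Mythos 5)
Your argument is correct and is essentially the paper's proof run forwards: the paper chains the same kind of submodularity inequality along the nested partial unions of the blocks but packages it as an induction by contradiction ending in $f(\varnothing)<0$, while you telescope directly to $\sum_{i\in[k]} f(S\setminus S_i)\ge (k-1)f(S)+f(\varnothing)$ and average; both use only submodularity and $f(\varnothing)\ge 0$. One cosmetic remark: the ``standard consequence'' quoted in your first, discarded paragraph actually holds with the inequality \emph{reversed} (removing a block from a smaller set costs at least as much, so the marginal loss of removing all of $A$ is at \emph{least} the sum of the blockwise losses measured from $S$), and that reversed form already yields the desired bound in one line --- but since you abandon that route and rederive the correct direction by telescoping, the final proof is unaffected.
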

\fullonly{
\begin{proof}
We prove the claim by contradiction. Suppose that $f(S\backslash S_i)<(1-\frac{1}{k})f(S)$ holds for all $i\in [k]$. 
We claim by induction on $i=1,2,\ldots,k$ that 
\begin{equation}\label{sub-e1}
f(S\backslash\bigcup_{j=1}^iS_j)<\left(1-\dfrac{i}{k}\right)f(S),
\end{equation}
which, when applied with $i=k$, would give the contradiction $f(\varnothing) <0$.
The claim is true for $i=1$ by assumption. Let us assume it is true up to $i-1$.
By the submodularity
of $f$, we have:
\begin{align*}
f((S\backslash\bigcup_{j=1}^{i-1}S_j)\cup (S\backslash S_i))  + f((S\backslash \bigcup_{j=1}^{i-1}S_j)\cap (S\backslash S_i))& \leq f(S\backslash\bigcup_{j=1}^{i-1}S_j) + f(S\backslash S_i)\\
\end{align*}
implying by the induction hypothesis and the assumption that $f(S\backslash S_i)<(1-\frac{1}{k})f(S)$ that
\begin{align*}
f(S)+f(S\backslash\bigcup_{j=1}^{i}S_j)& < \left(1-\dfrac{i-1}{k}\right)f(S)+(1-\frac{1}{k})f(S),
\end{align*}
 proving the claim.
\end{proof}
}

The following results are straightforward consequences of 
Theorem \ref{QP-t2} and the results in \cite{kul-sha-tam:c:maximizing-submodular-set-function-subject-to-multiple-linear-constraint}.  

\begin{corollary}
There is a $(1-\epsilon)^m(\frac{1}{4} - \epsilon)$-approximation algorithm for the {\PS} problem, for any $\epsilon > 0$, when $m$ and each $r_i\triangleq\cpr(Q_i)$ are fixed. For the case when the submodular function is monotone, the approximation factor is $(1-\epsilon)^m(1-\frac{1}{e}-\epsilon)$, for any $\epsilon > 0$.  
\end{corollary}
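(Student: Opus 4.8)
The plan is to obtain the corollary as an immediate instantiation of Theorem~\ref{QP-t2}, using the known approximation guarantees of Kulik et al.~\cite{kul-sha-tam:c:maximizing-submodular-set-function-subject-to-multiple-linear-constraint} for submodular maximization under a constant number of knapsack constraints as the black box $\alpha(\epsilon)$.

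First I would check that, for any \emph{fixed} $\epsilon>0$, the reduced instance {\sc $\bar m$DKS-Sub}$\{\sigma^{T^i,i}_\ell\}$ constructed inside Algorithm~\ref{PSAPPROX} has only a constant number of knapsack constraints. Indeed $\bar m=\sum_{i\in[m]}m_{T^i}^i(\epsilon)$, and by construction $m_{T^i}^i(\epsilon)\le(\tfrac{2r_i}{\epsilon})^{r_i^2/2}+2r_i$; since $m$ and every $r_i=\cpr(Q_i)$ are fixed and $\epsilon$ is a fixed constant, $\bar m=O(1)$, so the running-time hypothesis of Theorem~\ref{QP-t2} (and of Algorithm~\ref{PSAPPROX}) is satisfied. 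Next I would invoke~\cite{kul-sha-tam:c:maximizing-submodular-set-function-subject-to-multiple-linear-constraint}: for any fixed number of knapsack constraints and any $\epsilon>0$, there is a polynomial-time $(\tfrac14-\epsilon)$-approximation for maximizing a nonnegative submodular function subject to those constraints, and a $(1-\tfrac1e-\epsilon)$-approximation when the function is additionally monotone (the degree of the polynomial depends on the number of constraints and on $\epsilon$). Taking $\alpha(\epsilon)=\tfrac14-\epsilon$ in general, respectively $\alpha(\epsilon)=1-\tfrac1e-\epsilon$ in the monotone case, Theorem~\ref{QP-t2} yields a polynomial-time algorithm producing a $(1-\epsilon)^m\alpha(\epsilon)$-approximate solution to \PS, which is exactly the stated bound; feasibility and the value loss factor $(1-\epsilon)^m$ are already accounted for by Lemmas~\ref{qp-lem-proj}, \ref{qp-main-lem}--\ref{sub-pro} used in the proof of Theorem~\ref{QP-t2}.

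I do not expect a genuine obstacle here, since all the heavy lifting — the geometric partitioning reduction to $\bar m$ knapsack constraints and the accompanying feasibility/objective bookkeeping — is already contained in Theorem~\ref{QP-t2}. The only points needing a word of care are that $\bar m$ grows as $\epsilon\to0$, so the exponent in the running time degrades with $\epsilon$ (harmless for an approximation scheme, as $\epsilon$ is fixed), and the cosmetic matching of accuracy parameters: the statement is already phrased with the product factor $(1-\epsilon)^m\alpha(\epsilon)$, but if one wanted a single clean parameter one would just run the whole construction with $\epsilon'=\epsilon/(m+1)$ and absorb constants into the $\OPT$ guarantee.
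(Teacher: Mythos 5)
Your proposal is correct and follows exactly the paper's route: the corollary is obtained by instantiating Theorem~\ref{QP-t2} with the $\bigl(\frac{1}{4}-\epsilon\bigr)$- and $\bigl(1-\frac{1}{e}-\epsilon\bigr)$-approximation guarantees of Kulik et al.\ for submodular maximization under a constant number of knapsack constraints. Your explicit check that $\bar m=\sum_i m_{T^i}^i(\epsilon)=O(1)$ for fixed $m$, $r_i$, and $\epsilon$ is a detail the paper leaves implicit, but it is the right thing to verify.
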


\fullonly{
\begin{remark}\label{r2}
Note that a slight modification of Algorithm \ref{PSAPPROX} and Algorithm \ref{QC-Construct} yields a PTAS for the {\sc Pack-Lin} problem. To show that, we do some following changes. Firstly, in the step \ref{s1} of Algorithm \ref{PSAPPROX}, the solution $S$ will be computed by a PTAS for the $m$-dimensional knapsack problem instead of using the $\alpha(\epsilon)$-approximation algorithm for the submodular objective case. As a result, we have $u( \hat S ) \ge (1-\epsilon) u(S') \ge (1-\epsilon)^{m+1} \OPT$. Secondly, for Algorithm \ref{QC-Construct}, we do the same steps 1-25. Then, the solution $S^i$ will be obtained from $S^{i-1}$ by dropping the smallest utility-demand or group of demands with large component,
ensuring that the set $S^i$ of remaining demands can be packed inside $\Pc^i_{
T^i}(\epsilon)$. Hence, in the former case, $u(S^i) \ge (1 -\epsilon)u(S^{i-1})$,  and in the latter case, $u(S^i) \ge (1 - \frac{1}{h})S^{i-1} \ge \frac{1-2\epsilon}{1-\epsilon}S^{i-1} \ge (1 - 2\epsilon)S^{i-1}$.
\end{remark}
}


\subsection{Minimization Problem $-$ a Greedy-based Approach}\label{sec:min}

In this section we consider problem \textsc{Cover-lin} with one quadratic constraint. We want to minimize a linear function $f(x)=u^Tx$, subject to $x^TQx\ge C^2$ and $x\in\{0,1\}^n$, where $Q\in\mathcal{CP}^*_n$ and $u\in\RR_+^n$. 
Note that the convex programming-based method can not be applied here since the relaxed problem, which is obtained by considering $x_i\in [0,1]$ for all $i\in[n]$, is {\it non-convex}, and thus we do not know if it can be solved efficiently. Furthermore, one can easily show that this programming relaxation has a bad integrality gap (of at least $C^2$), and thus is not a good choice for approximation.  Instead, we will follow a geometric approach. 

Let $I=(u,Q,C)$ be an instance of the problem \CL, and $\epsilon>0$ be a fixed constant. We will construct a quasi-polynomial-time algorithm which produces an $(1+\epsilon)$-approximate solution to the instance $I$.

By guessing, we may assume that $B\le\OPT<(1+\epsilon)B$, where $B:=(1+\epsilon)^i\min_k u_k$ for some $i\in\ZZ_+$ (the number of possible guesses is $O(\log_{1+\epsilon} (n\cdot\max_k u_k/\min_k u_k))$). Let $T:=\{k~|~u_k\ge (1+\epsilon)B\}$ and $V:=\{k~|~u_k<\frac{\epsilon}{n} \cdot B\}$; we set $x_k=1$ for all $k\in T$, and $x_k=0$ for all $k\in V$, and assume therefore that we need to optimize over a set $N:=[n]\setminus(T\cup V)$, for which $u_k\in[\frac{\epsilon}{n}\cdot B,(1+\epsilon)B)$ for $k\in N$. Note that such restriction increases the cost of the solution obtained by at most $\epsilon\cdot\OPT$. 

As before, write $Q=UU^T$, for some $U\in\QQ^{n\times r}_+$, where $r=\cpr(Q)$, and for $k\in[n]$ define the vector $q_k\in\QQ_+^{r}$ to be the $k$th column of $U^T$. Define the conic region $\cR_T$ as in \raf{region} (with the index $i$ dropped). Then the problem now amounts to finding $S\subseteq N$ s.t. $q_S:=\sum_{k\in S}q_k$ is not in the interior of $\cR_T$.

We begin by partitioning the set of vectors (indices) in $N$ into $h:=r\left(\frac{\sqrt{r}}{\epsilon}\right)^{r-1}$ {\it space-classes} $N^1,\ldots,N^h$, with the following property: for all $s\in[h]$, there exists $\xi(s)\in\RR_+^r$ such that for all $k\in N^s$, it holds
\begin{equation}\label{angle-cond}
\frac{q_k^T\xi(s)}{\|q_k\|_2\|\xi(s)\|_2}\ge 1-\epsilon.
\end{equation}
Condition \raf{angle-cond} says that there is a fixed direction $\xi(s)$ such that the angle that any vector $q_k$, $k\in N^s$, makes with $\xi(s)$ is sufficiently small. 

We will rely on the following geometric facts in our analysis of the algorithm.
\begin{fact}\label{geo-fact1}
Let $a,b,\xi\in\RR^r_+$ be such that $\frac{a^T\xi}{\|a\|_2\|\xi\|_2}\ge 1-\epsilon$ and $\frac{b^T\xi}{\|b\|_2\|\xi\|_2}\ge 1-\epsilon$. Then $\frac{(a+b)^T\xi}{\|a+b\|_2\|\xi\|_2}\ge 1-\epsilon$.
\end{fact}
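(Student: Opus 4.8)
\textbf{Proof proposal for Fact~\ref{geo-fact1}.}

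The plan is to reduce the vector inequality to a one-dimensional statement about cosines. Write $\cos\angle(v,\xi) = \frac{v^T\xi}{\|v\|_2\|\xi\|_2}$ for any nonzero $v\in\RR^r_+$. The hypotheses say $\cos\angle(a,\xi)\ge 1-\epsilon$ and $\cos\angle(b,\xi)\ge 1-\epsilon$, and we want $\cos\angle(a+b,\xi)\ge 1-\epsilon$. Since all of $a,b,\xi$ lie in the nonnegative orthant, we may assume $\|a\|_2=\|b\|_2=1$ after scaling (the quantity $\cos\angle(a+b,\xi)$ is not scale-invariant in $a,b$ jointly, so I should be careful here — see below), and also $\|\xi\|_2=1$ since the target inequality is invariant under scaling $\xi$.

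First I would observe that the cleanest route is to use the inner product directly. With $\|\xi\|_2=1$, we have $a^T\xi\ge (1-\epsilon)\|a\|_2$ and $b^T\xi\ge(1-\epsilon)\|b\|_2$, hence
\begin{equation}
(a+b)^T\xi = a^T\xi + b^T\xi \ge (1-\epsilon)\bigl(\|a\|_2+\|b\|_2\bigr).
\end{equation}
By the triangle inequality $\|a+b\|_2\le \|a\|_2+\|b\|_2$, so
\begin{equation}
\frac{(a+b)^T\xi}{\|a+b\|_2\|\xi\|_2} = \frac{(a+b)^T\xi}{\|a+b\|_2} \ge \frac{(1-\epsilon)\bigl(\|a\|_2+\|b\|_2\bigr)}{\|a+b\|_2}\ge 1-\epsilon,
\end{equation}
where the last step uses $\|a+b\|_2\le\|a\|_2+\|b\|_2$ again (and $1-\epsilon\ge 0$, which holds since $\epsilon<\tfrac{1}{4\bar r}<1$). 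This settles the claim; note I never actually needed the normalization, and I never needed nonnegativity of the coordinates except to guarantee that the relevant denominators are positive and the vectors are nonzero (so the cosines are well-defined).

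The main thing to get right is the edge cases rather than any substantive obstacle: I should note that $a+b\neq 0$ (immediate since $a,b\in\RR_+^r$ and each has positive norm by the hypothesis that the cosine is defined), that $\|\xi\|_2>0$ similarly, and that $1-\epsilon\ge 0$ so that multiplying the inequality $\|a+b\|_2\le\|a\|_2+\|b\|_2$ by $\tfrac{1-\epsilon}{\|a+b\|_2}$ preserves direction. There is genuinely no hard step here — the whole content is linearity of the inner product plus the triangle inequality — so the proof is two lines once the setup is in place.
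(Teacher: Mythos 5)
Your proof is correct and is essentially identical to the paper's: both arguments expand $(a+b)^T\xi = a^T\xi + b^T\xi$, lower bound each term via the hypothesis, and finish with the triangle inequality $\|a+b\|_2\le\|a\|_2+\|b\|_2$ (the paper just phrases it as a convex-combination-style rewriting of the cosine rather than clearing denominators first). Your extra remarks on nondegeneracy and the sign of $1-\epsilon$ are harmless and the aborted normalization is correctly discarded.
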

\fullonly{
\begin{proof}
Indeed, 
\begin{eqnarray*}
\frac{(a+b)^T\xi}{\|a+b\|_2\|\xi\|_2}&=& \frac{\|a\|_2}{\|a+b\|_2}\cdot\frac{a^T\xi}{\|a\|_2\|\xi\|_2}+\frac{\|b\|_2}{\|a+b\|_2}\cdot\frac{b^T\xi}{\|b\|_2\|\xi(s)\|_2}\\
&\ge&\frac{\|a\|_2+\|b\|_2}{\|a+b\|_2}(1-\epsilon)\ge1-\epsilon,
\end{eqnarray*}
by the triangular inequality.
\end{proof}
}
Note that Fact~\ref{geo-fact1} implies that, for any $S\subseteq N^s$, $q_S:=\sum_{k\in S}q_k$ also satisfies the condition \raf{angle-cond}. 

\begin{fact}\label{geo-fact2}
Let $a,b,\xi\in\RR^r_+$ be such that $\frac{a^T\xi}{\|a\|_2\|\xi\|_2}\ge 1-\epsilon$ and $\frac{b^T\xi}{\|b\|_2\|\xi\|_2}\ge 1-\epsilon$. Then $\frac{a^Tb}{\|a\|_2\|b\|_2}\ge 1-5\epsilon$. 
\end{fact}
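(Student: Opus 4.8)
The plan is to avoid trigonometry and argue entirely with Euclidean (chordal) distances. First I would use scale invariance: each of the three cosines $\frac{a^T\xi}{\|a\|_2\|\xi\|_2}$, $\frac{b^T\xi}{\|b\|_2\|\xi\|_2}$, $\frac{a^Tb}{\|a\|_2\|b\|_2}$ is unchanged when $a,b,\xi$ are multiplied by positive scalars, so I may normalize $\|a\|_2=\|b\|_2=\|\xi\|_2=1$ (the case where one of the vectors vanishes being vacuous). I would also note at the outset that if $\epsilon\ge 1/5$ the claimed bound $1-5\epsilon$ is non-positive while $a^Tb\ge 0$ because $a,b\in\RR^r_+$, so there is nothing to prove; hence only small $\epsilon$ is of interest (though, as it turns out, the main argument needs no restriction on $\epsilon$ at all).

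Next I would turn each hypothesis into a distance estimate via the identity $\|u-v\|_2^2 = 2 - 2u^Tv$ for unit vectors: the assumptions give $\|a-\xi\|_2^2 = 2(1-a^T\xi)\le 2\epsilon$ and $\|b-\xi\|_2^2\le 2\epsilon$, i.e. $\|a-\xi\|_2,\|b-\xi\|_2\le\sqrt{2\epsilon}$. Then the triangle inequality for $\|\cdot\|_2$ yields $\|a-b\|_2\le\|a-\xi\|_2+\|\xi-b\|_2\le 2\sqrt{2\epsilon}$, so $\|a-b\|_2^2\le 8\epsilon$. Converting back through the same identity, $2(1-a^Tb)\le 8\epsilon$, hence $\frac{a^Tb}{\|a\|_2\|b\|_2} = a^Tb\ge 1-4\epsilon\ge 1-5\epsilon$, which is in fact slightly stronger than stated.

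I do not expect a genuine obstacle; the argument is three lines. The only subtlety worth flagging is the alternative proof via angles — using that geodesic distance on the sphere obeys the triangle inequality, so $\angle(a,b)\le\angle(a,\xi)+\angle(\xi,b)=\alpha+\beta$, and then $\cos(\angle(a,b))\ge\cos\alpha\cos\beta-\sin\alpha\sin\beta\ge(1-\epsilon)^2-2\epsilon = 1-4\epsilon+\epsilon^2$. That route also works, but it forces one to check that the angles are small enough (it suffices that $\epsilon<1$, which makes $\alpha,\beta<\pi/2$ and hence $\alpha+\beta<\pi$) for the cosine comparison to be legitimate; the chordal-distance version above sidesteps that case distinction, so that is the one I would write up.
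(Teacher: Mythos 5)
Your proof is correct, and it is a genuinely different and cleaner argument than the one in the paper. The paper normalizes to unit vectors and then splits into two cases: if $a,b,\xi$ are coplanar it bounds the angle between $a$ and $b$ by the sum of the two given angles and applies the cosine addition formula to get $\cos(\alpha+\beta)\ge(1-\epsilon)^2-\epsilon(2-\epsilon)=1-4\epsilon+2\epsilon^2$; otherwise it orthogonally decomposes $\bar b$ against the plane spanned by $\bar a$ and $\bar\xi$, shows the in-plane component $\widehat b$ still satisfies the angle hypothesis and has norm at least $1-\epsilon$, and concludes $\bar a^T\bar b\ge(1-4\epsilon)(1-\epsilon)\ge 1-5\epsilon$. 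Your chordal-distance argument --- converting each hypothesis to $\|a-\xi\|_2,\|b-\xi\|_2\le\sqrt{2\epsilon}$ via $\|u-v\|_2^2=2-2u^Tv$ and applying the plain Euclidean triangle inequality --- collapses both cases into three lines, needs no restriction keeping the angle sum below $\pi$, and yields the slightly stronger constant $1-4\epsilon$. The only thing the paper's projection argument "buys" is that the decomposition $\bar b=\widehat b+\tilde b$ is reused almost verbatim in the proof of the next fact (Fact on projections with the $\sqrt{5\epsilon(2-5\epsilon)}$ error term), so the two proofs share machinery there; your version sacrifices nothing logically but would leave that later proof to introduce the decomposition on its own.
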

\fullonly{
\begin{proof}
Let $\bar a:=\frac{a}{\|a\|_2}$, $\bar b:=\frac{b}{\|b\|_2}$, and $\bar \xi:=\frac{\xi}{\|\xi\|_2}$. If $a,b,\xi$ all lie in the same subspace, then the claim follows since the angle between $a$ and $b$ is no more than the sum of the angles between  $a$ and $\xi$, and $b$ and $\xi$, which is at most is $\cos^{-1}((1-\epsilon)^2-\epsilon(2-\epsilon))\le \cos^{-1}(1-4\epsilon)$. Otherwise, let $\bar b=\widehat b+\tilde b$ be the orthogonal decomposition of $\bar b$ with respect to the 2-dimensional subspace formed by the two vectors $\bar a$ and $\bar \xi$, where $\widehat b$ is the projection of $\bar b$ into this space, and $\tilde b$ is the orthogonal component. Then $\frac{\widehat b^T\bar\xi}{\|\widehat b\|_2}=\frac{\bar b^T\bar\xi}{\|\widehat b\|_2}\ge \frac{1-\epsilon}{\|\widehat b\|_2}\ge 1-\epsilon$ (since $\|\widehat b\|_2\le \|\bar b\|_2=1$), which also implies that $\|\widehat b\|_2\ge 1-\epsilon$. Since $a$, $\widehat b$, and $\xi$ lie in the same subspace, it follows by the above argument that $\frac{\widehat b^T\bar a}{\|\widehat b\|_2}\ge (1-4\epsilon)$, and hence, $\bar b^T\bar a=\widehat b^T\bar a\ge (1-4\epsilon)\|\widehat b\|_2\ge(1-4\epsilon)(1-\epsilon)$, implying the claim.
\end{proof}   
}
\begin{fact}\label{geo-fact3}
Let $a,b\in\RR$ be such that $\frac{a^Tb}{\|a\|_2\|b\|_2}\ge 1-5\epsilon$ and $\|\widehat b\|_2=\lambda\|a\|_2$, where $\lambda\geq 1$ and $\widehat b:={\tt Pj}_{a}(b)$ is the projection of $b$ on $a$. Then for any vector $\eta\in\RR^r$, it holds that 
$$\|{\tt Pj}_{\eta}(b)\|_2\ge \lambda\left(\|{\tt Pj}_{\eta}(a)\|_2-\frac{\sqrt{5\epsilon(2-5\epsilon)}}{1-5\epsilon}\right)\|a\|_2.$$  
\end{fact}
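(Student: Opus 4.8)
The plan is to work in the two-dimensional subspace spanned by $a$ and $b$, since everything about the angle between $a$, $b$, and their projections onto an arbitrary $\eta$ is governed by that plane together with the projection operator. Write $b = \widehat b + b^\perp$ where $\widehat b = {\tt Pj}_a(b)$ is parallel to $a$ and $b^\perp \perp a$; by hypothesis $\|\widehat b\|_2 = \lambda\|a\|_2$. From $\frac{a^Tb}{\|a\|_2\|b\|_2}\ge 1-5\epsilon$ I would first extract a bound on $\|b^\perp\|_2$ relative to $\|\widehat b\|_2$: since $\cos$ of the angle between $a$ and $b$ is $\|\widehat b\|_2/\|b\|_2 \ge 1-5\epsilon$, the tangent satisfies $\|b^\perp\|_2/\|\widehat b\|_2 \le \frac{\sqrt{1-(1-5\epsilon)^2}}{1-5\epsilon} = \frac{\sqrt{5\epsilon(2-5\epsilon)}}{1-5\epsilon}$. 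Call this quantity $\tau$, so $\|b^\perp\|_2 \le \tau\|\widehat b\|_2 = \tau\lambda\|a\|_2$.

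**Estimating the projection onto $\eta$.** Now ${\tt Pj}_\eta(b) = {\tt Pj}_\eta(\widehat b) + {\tt Pj}_\eta(b^\perp)$ by linearity of the projection map $\nu \mapsto \frac{\eta}{\|\eta\|_2^2}\nu^T\eta$. Since $\widehat b = \lambda' a$ for some scalar $\lambda'$ with $|\lambda'| = \lambda$ (and in fact $\lambda' = \lambda$, as $a^T\widehat b = a^Tb > 0$), we get ${\tt Pj}_\eta(\widehat b) = \lambda\,{\tt Pj}_\eta(a)$, hence $\|{\tt Pj}_\eta(\widehat b)\|_2 = \lambda\|{\tt Pj}_\eta(a)\|_2$. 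For the error term, the projection map onto $\eta$ is a contraction in $\ell_2$, so $\|{\tt Pj}_\eta(b^\perp)\|_2 \le \|b^\perp\|_2 \le \tau\lambda\|a\|_2$. The triangle inequality then gives
$$
\|{\tt Pj}_\eta(b)\|_2 \ge \|{\tt Pj}_\eta(\widehat b)\|_2 - \|{\tt Pj}_\eta(b^\perp)\|_2 \ge \lambda\|{\tt Pj}_\eta(a)\|_2 - \tau\lambda\|a\|_2 = \lambda\left(\|{\tt Pj}_\eta(a)\|_2 - \tau\|a\|_2\right),
$$
which is exactly the claimed inequality once we substitute $\tau = \frac{\sqrt{5\epsilon(2-5\epsilon)}}{1-5\epsilon}$.

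**Main obstacle and remarks.** The only real technical point is the orthogonal decomposition step: one must be careful that $\widehat b = {\tt Pj}_a(b)$ really is the component of $b$ along $a$ and that $b^\perp = b - \widehat b$ is genuinely orthogonal to $a$ (this is immediate from the definition of ${\tt Pj}_a$, since $(b-{\tt Pj}_a(b))^Ta = b^Ta - \frac{a^Tb}{\|a\|_2^2}a^Ta = 0$), and that the constant $\tau$ coming from the angle bound is correctly computed — i.e.\ that $1 - (1-5\epsilon)^2 = 5\epsilon(2-5\epsilon)$, which is a routine identity. Everything else is the linearity and the $\ell_2$-contraction property of the projection operator plus the triangle inequality, so I do not anticipate any genuine difficulty; the statement is essentially a quantitative "projections nearly preserve length ratios for nearly-parallel vectors" lemma. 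One should perhaps note implicitly that the bound is only meaningful when $\|{\tt Pj}_\eta(a)\|_2 > \tau\|a\|_2$, but the inequality as stated holds trivially otherwise since the left side is nonnegative.
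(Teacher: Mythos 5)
Your proof is correct and is essentially the paper's own argument: the same orthogonal decomposition $b=\widehat b+b^\perp$, the same bound $\|b^\perp\|_2\le\frac{\sqrt{5\epsilon(2-5\epsilon)}}{1-5\epsilon}\|\widehat b\|_2=\tau\lambda\|a\|_2$ extracted from the cosine hypothesis, and the same final step via the $1$-Lipschitz property of ${\tt Pj}_\eta$ (which the paper phrases equivalently by rotating so that $\eta$ is a coordinate axis and using $|\tilde b^j|\le\|\tilde b\|_2$). One small caveat: what you actually derive is $\|{\tt Pj}_\eta(b)\|_2\ge\lambda\bigl(\|{\tt Pj}_\eta(a)\|_2-\tau\|a\|_2\bigr)$, the dimensionally consistent form, which coincides with the displayed claim $\lambda\bigl(\|{\tt Pj}_\eta(a)\|_2-\tau\bigr)\|a\|_2$ only when $\|a\|_2=1$ --- a normalization inconsistency already present in the paper's own statement and proof (which at one point writes $\widehat b=\lambda\|a\|_2\cdot a$), not a gap in your argument.
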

\fullonly{
\begin{proof}
Since the statement is invariant under rotation, we may assume w.l.o.g. that $\eta=\bone_j$, the $j$th-dimensional unit vector in $\RR^r$. 
Write $b:=\widehat b+\tilde b$, where $\tilde b$ is the vector orthogonal to $a$ in the subspace spanned by $a$ and $b$. Then $\|{\tt Pj}_{\eta}(b)\|_2=b^j$ is the $j$th component of $b$, and $\|{\tt Pj}_{\eta}(\widehat b)\|_2=\widehat b^j$. Since
$$
\|\widehat b\|_2\ge \frac{a^T\widehat b}{\|a\|_2}=\frac{a^Tb}{\|a\|_2}\ge (1-5\epsilon)\|b\|_2,
$$
it follows that 
$$\|\tilde b\|_2= \sqrt{\|b\|_2^2-\|\widehat b\|_2^2}\le \sqrt{5\epsilon(2-5\epsilon)}\|b\|_2\le\frac{\sqrt{5\epsilon(2-5\epsilon)}}{1-5\epsilon}\|\widehat b\|_2=\frac{\lambda\sqrt{5\epsilon(2-5\epsilon)}}{1-5\epsilon}\|a\|_2.$$ Since 
$|b^j-\widehat b^j|=|\tilde b^j|\le\|\tilde b\|_2$, and $\widehat b=\lambda\|a\|_2 \cdot a$, the claim follows.    
\end{proof}
}
Condition \raf{angle-cond} together with Facts~\ref{geo-fact1} and~\ref{geo-fact2}  imply that for any two sets $S,S'\subseteq N^s$, we have $\frac{q_S^Tq_{S'}}{\|q_S\|_2\|q_{S'}\|_2}\ge 1-5\epsilon$.

The space partitioning can be done as follows. Let $q_T:=\sum_{k\in T}q_k$. We  partition the region $\cR_T$ into disjoint regions $\cR_T(1),\ldots,\cR_T(h)$, obtained as follows. Let $\mu=\bar w_T\cdot\bone$, where $\bar w_T:=\max_{j\in[r]}w_T^j$ (recall the definition of $w_T^{j}$ from \raf{sqrt}), and define the $r$-dimensional box $\cC_T:=\{\nu\in\RR_+^r~|~q_T\le \nu\le \mu\}$. Note that $\cR_T\subseteq\cC_T$. We grid the $r$ facets of $\cC_T$ that do not contain the point $q_T$ by interlacing equidistant $(r-2)$-dimensional parallel hyperplanes with inter-separation $\frac{\epsilon \bar w_T}{\sqrt r}$, for each $j\in[r]$, and with the $j$th principal axis as their normal. Note that the total number of grid cells obtained is $h$; let us call them $\cC_1,\ldots,\cC_h$ (these are $(r-1)$-dimensional  hypercubes). We then define the region $\cR(s)$ as the $r$-dimensional simplex $\cR(s):=\conv(\{q_T\}\cup\cC_s)$ and $\xi(s)=c(s)-q_T$ as the designated vector, where $c(s)$ is the vertex center of the cell $\cC_s$. Note that the angle condition~\raf{angle-cond} is satisfied. Indeed, consider any vector $q_k$ such that $q_T+q_k\in\cR_T(s)$. Let the ray $\{q_T+\lambda q_k:\lambda\ge 0\}$ hit the boundary cell $\cC_s$ in the point $x$. Consider the triangle formed by the three points $q_T$, $c(s)$ and $x$. Then, by construction, the distances between $q_T$ and both $c(s)$ and $x$ are at most $\bar w$, whereas the distance between $c(s)$ and $x$ is at most $\epsilon\bar w\sqrt{r}$. It follows that the angle between the two vectors $q_k$ and $\xi(s)$ is no more than $\sin^{-1}\epsilon$, implying \raf{angle-cond}.

Finally, we define $N^s:=\{k~|~q_T+q_k\in\cR_T(s)\}$. This gives the required space partitioning of the vectors.

Next, we group the set of items in $N$ into $\ell:=1+\log_{1+\epsilon}\frac{n}{\epsilon}$ (some possibly empty) {\it utility-classes} $N_1,\ldots,N_\ell$, where $N_l=\{k~|~u_k\in[\frac{\epsilon}{n} \cdot B(1+\epsilon)^{l-1},\frac{\epsilon}{n} \cdot B(1+\epsilon)^{l})\}.$ Note that for all $k,k'\in N_l$, we have
\begin{equation}\label{utility-cond}
 u_k\le u_{k'}(1+\epsilon).
\end{equation}

We can show that for a set of vectors $\{q_k:~k\in N^s\cap N_l\}$ that lie in the same region and same utility-group, the {\it greedy} algorithm that processes the vectors in {\it non-increasing} order of length gives an $O(\epsilon)$-optimal solution. For simplicity we assume first that $T=\varnothing$ and $q_T=\bzero$.  

\begin{algorithm}[!htb]
	\caption{ \GRC$(u,\{q_k\}_{k\in N},C)$} \label{GRC}
\begin{algorithmic}[1]
\Require A cost vector $u\in\RR_+^N$; accuracy parameter $\epsilon$; vectors $q_k\in\QQ_+^r$ satisfying \raf{utility-cond} and \raf{angle-cond}; a demand $C\in\RR_+$;  
\Ensure $9\epsilon$-optimal solution $S$ to \CL
\State $S':=\min\{u(S)~|~S\subseteq N, ~|S|\le\frac{1}{\epsilon},~q_S \text{ is feasible}\}$ \label{g-s00}
\State Order the vectors $q_k$, $k\in N$, such that $\|q_1\|_2\ge \|q_2\|_2\ge\cdots$\label{g-s0}
\State $S \leftarrow \varnothing$; $k\leftarrow 0$
\While{$\|\sum_{k\in S}q_k\|_2<C$}\label{g-s1}
  \State $k\leftarrow k+1$
  \State $S\leftarrow S\cup\{k\}$\label{g-s2}
\EndWhile
\If{$u(S)\le u(S')$}
\State \Return $S$
\Else
\State \Return $S'$
\EndIf
\end{algorithmic}
\end{algorithm}

\begin{lemma}\label{greedy main}
Consider instance of problem \CL\ described  by a set of vectors $\{q_k\}_{k\in N}$ satisfying \raf{angle-cond} and \raf{utility-cond}. Then, for any sufficiently small constant $\epsilon>0$, Algorithm~\ref{GRC} outputs a solution $S$ satisfying $u(S)\le (1+9\epsilon)\OPT$. 
\end{lemma}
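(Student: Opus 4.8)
The plan is to compare the greedy solution $S$ against a fixed optimal solution $S^*$, and to argue that greedy overshoots the required radius $C$ by only a controlled amount, so that its total cost is within a $(1+O(\epsilon))$ factor of $u(S^*)$. The key structural fact we have available is that all the vectors $q_k$, $k\in N$, point in nearly the same direction (by \raf{angle-cond}), so for any two subsets $A,B\subseteq N$ the ``feasibility'' condition $\|q_A\|_2\ge C$ behaves almost like the one-dimensional condition $\sum_{k\in A}\|q_k\|_2\ge C$: indeed, by Facts~\ref{geo-fact1} and~\ref{geo-fact2}, $\|q_A\|_2$ is squeezed between $(1-5\epsilon)\sum_{k\in A}\|q_k\|_2$ and $\sum_{k\in A}\|q_k\|_2$ (the upper bound is just the triangle inequality, the lower bound follows since all pairwise angles are small). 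So I would first reduce, up to a $(1+O(\epsilon))$ loss, to the purely scalar covering problem: minimize $\sum_{k\in S}u_k$ subject to $\sum_{k\in S}\|q_k\|_2\ge C'$, where $C'=(1-5\epsilon)C$ or thereabouts.

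Next I would handle the greedy argument for this scalar problem. Order $\|q_1\|_2\ge\|q_2\|_2\ge\cdots$; let $S=\{1,\dots,p\}$ be the set produced by the while-loop, so $\sum_{k=1}^{p-1}\|q_k\|_2<C$ but $\sum_{k=1}^{p}\|q_k\|_2\ge C$ (in scalar terms, and within the $1-5\epsilon$ slack in the vector terms). If $p\le 1/\epsilon$ then $S'$ in line~\ref{g-s00} is an optimal feasible set of size at most $1/\epsilon$ and we are done, so assume $p>1/\epsilon$. The point is that dropping the last element $k=p$ leaves an infeasible set, and $\|q_p\|_2$ is the \emph{smallest} length among $q_1,\dots,q_p$; hence $\|q_p\|_2\le \frac{1}{p}\sum_{k=1}^{p}\|q_k\|_2$, and more to the point $\|q_p\|_2\le \epsilon \sum_{k=1}^{p-1}\|q_k\|_2/(1-\epsilon)<\epsilon C/(1-\epsilon)$ is a small fraction of the total demand. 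Now compare with $S^*$: since $S^*$ is feasible, $\sum_{k\in S^*}\|q_k\|_2\ge (1-5\epsilon)C$ (vector feasibility implies scalar near-feasibility), and using the utility-class property \raf{utility-cond} — all $u_k$ for $k\in N$ are within a factor $(1+\epsilon)$ of each other — one converts a bound on $\sum\|q_k\|_2$ into a bound on $\sum u_k$. Concretely, greedy picks the longest vectors first, so for the first $p-1$ indices the ``efficiency'' $u_k/\|q_k\|_2$ is at most $(1+\epsilon)$ times that of any optimal element; summing, $\sum_{k=1}^{p-1}u_k \le (1+\epsilon)\cdot\frac{\sum_{k=1}^{p-1}\|q_k\|_2}{\text{(min scalar demand covered by }S^*)}\cdot u(S^*) \le (1+O(\epsilon))u(S^*)$, and then the single extra element $p$ contributes at most $u_p\le(1+\epsilon)\min_{k\in S^*}u_k \le (1+\epsilon)\,\epsilon\, u(S^*)$ (since $S^*$ is nonempty and, being feasible with $p>1/\epsilon$, must actually contain more than $1/\epsilon$ elements once scaled, or one argues directly that $u_p$ is an $O(\epsilon)$ fraction of $u(S^*)$ because $\|q_p\|_2$ is an $O(\epsilon)$ fraction of the demand and efficiencies are comparable). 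Collecting the $(1-5\epsilon)$ from the direction reduction, the $(1+\epsilon)$ from the utility classes, and the $O(\epsilon)$ overshoot term, one arrives at $u(S)\le (1+9\epsilon)\OPT$ for $\epsilon$ small enough (the constant $9$ being exactly what the accumulated slack gives).

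The main obstacle, I expect, is making the ``scalar reduction'' fully rigorous in \emph{both} directions simultaneously: we need that greedy's vector-feasibility is not much stronger than scalar-feasibility (so greedy does not stop too late — controlled by the triangle inequality, easy), \emph{and} that $S^*$'s vector-feasibility gives enough scalar coverage (controlled by Fact~\ref{geo-fact2}, the $1-5\epsilon$ bound), \emph{and} that these two slacks compose without blowing up — and then threading the utility-class factor $(1+\epsilon)$ through the efficiency comparison so that the final constant comes out to exactly $9\epsilon$ rather than something larger. A secondary subtlety is the role of line~\ref{g-s00}: it is needed precisely to cover the degenerate regime $p\le 1/\epsilon$, where the ``last element is a small fraction of the demand'' argument fails; I would state that case separately at the start and dispatch it by noting $S'$ is then exactly optimal among small feasible sets, and that the true optimum, if of size $\le 1/\epsilon$, is found. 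The remaining computations (triangle inequality, the geometric-series bound $\ell=1+\log_{1+\epsilon}(n/\epsilon)$ playing no role here since we are inside a single utility class, etc.) are routine.
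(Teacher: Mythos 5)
Your skeleton is right — handle $|S^*|<1/\epsilon$ via line~\ref{g-s00}, reduce everything to a cardinality comparison using \raf{utility-cond}, and use the angle condition to make the problem quasi-one-dimensional — but the central quantitative step is not established, and the way you propose to establish it fails. The pairwise efficiency claim (``for the first $p-1$ indices the efficiency $u_k/\|q_k\|_2$ is at most $(1+\epsilon)$ times that of any optimal element'') is false whenever $S^*$ contains vectors that are also in greedy's prefix: for $k\in S'$ short and $k^*\in S^*\cap S'$ long, $u_k/\|q_k\|_2$ can exceed $u_{k^*}/\|q_{k^*}\|_2$ by an unbounded factor, since utilities are comparable within a class but lengths are not. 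The comparison only holds against $k^*\in S^*\setminus S'$, so the argument must be run on the symmetric difference $S'\setminus S^*$ versus $S^*\setminus S'$ — and your two-sided scalar reduction is too lossy to support that. Concretely, it only yields $\sum_{k\in S'}\|q_k\|_2<\frac{C}{1-5\epsilon}$ and $\sum_{k\in S^*}\|q_k\|_2\ge C$, hence $\sum_{k\in S'\setminus S^*}\|q_k\|_2<\sum_{k\in S^*\setminus S'}\|q_k\|_2+\frac{5\epsilon C}{1-5\epsilon}$. The additive slack $\Theta(\epsilon C)$ is fatal: it can be realized by arbitrarily many items of arbitrarily small length, so it cannot be converted into a multiplicative $(1+O(\epsilon))$ bound on $|S'|$ or on $u(S')$. (This is exactly the regime where greedy, after exhausting the long vectors, pads with many tiny ones.)

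The paper closes this gap by doing the comparison in a single common frame with \emph{no} additive slack: project every $q_k$ onto the fixed direction $q_{S^*}$, write $\widehat q_k:={\tt Pj}_{q_{S^*}}(q_k)$, and observe that (a) $\sum_{k\in S\setminus\{w\}}\|\widehat q_k\|_2<\|q_{S^*}\|_2=\sum_{k\in S^*}\|\widehat q_k\|_2$ (otherwise greedy would have stopped one step earlier, since $\|q_{S'}\|_2\ge\sum_{k\in S'}\|\widehat q_k\|_2$), and (b) $\|\widehat q_k\|_2\ge(1-5\epsilon)\|q_k\|_2$ by Fact~\ref{geo-fact2}. Inequality (a) is a strict set-to-set comparison of projected lengths with nothing left over, so after cancelling $S'\cap S^*$ and using the greedy order (every element of $S^*\setminus S'$ is no longer than every element of $S'$), one gets $|S'\setminus S^*|<\frac{|S^*\setminus S'|}{1-5\epsilon}$ and hence $|S|\le\frac{|S^*|}{1-5\epsilon}+1$; the utility bound then follows from \raf{utility-cond} and $|S^*|\ge 1/\epsilon$ exactly as you outline. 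So the missing idea is the projection onto $q_{S^*}$ (not onto the class direction $\xi(s)$), which is what lets the feasibility of $S^*$ and the infeasibility of $S'$ be compared as two sums of nonnegative scalars with no leftover $\epsilon C$ term.
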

\fullonly{
\begin{proof}
Let 
$S^*
$ be an optimal solution. Since we consider every possible feasible solution of size at most $\frac{1}{\epsilon}$ in step~\ref{g-s00}, we may assume w.l.o.g. that $|S^*|\ge\frac{1}{\epsilon}$.

We claim that $|S|\le\frac{|S^*|}{1-5\epsilon}+1$. To see this claim, let $q_{S^*}:=\sum_{k\in S^*}q_k$, and for $k\in N$, denote by $\widehat q_k:={\tt Pj}_{q_{S^*}}(q_k)$ the projection of $q_k$ on $q_{S^*}$. Let $q_w\in S$ be the last item added to $S$ in step~\ref{g-s2} of the algorithm. Then it is clear that
\begin{equation}\label{GRC-proj}
\sum_{k\in S\backslash\{w\}}\|\widehat q_k\|_2<\|q_{S^*}\|_2,
\end{equation}
since otherwise $S':=S\backslash\{w\}$ would satisfy the condition in line~\ref{g-s1} (as $\|\sum_{k\in S'}q_k\|_2\ge \sum_{k\in S'}\|\widehat q_k\|_2\ge\|q_{S^*}\|_2\ge C$, by the feasibility of $S^*$). By the angle condition \raf{angle-cond}: $\|\widehat q_k\|_2\ge (1-5\epsilon)\|q_k\|_2$ for all $k\le N.$
It follows by \raf{GRC-proj} that
\begin{eqnarray*}
(1-5\epsilon)|S'|\min_{k\in S'}\|q_k\|_2&\le&(1-5\epsilon)\sum_{k\in S'}\|q_k\|_2\le\sum_{k\in S'}\|\widehat q_k\|_2\\&<&\|q_{S^*}\|_2=\sum_{k\in S^*}\|\widehat q_k\|_2\le \sum_{k\in S^*}\|q_k\|_2\le |S^*|\max_{k\in S^*}\|q_k\|_2.
\end{eqnarray*}
The claim follows since $\max_{k\in S^*}\|q_k\|_2\le\min_{k\in S'}\|q_k\|_2$, by the greedy order in line~\ref{g-s0}.

By the utility condition~\raf{utility-cond},  
\begin{eqnarray*}
u(S)\le |S|\max_k u_k\le|S|(1+\epsilon)\min_ku_k\le\left(\frac{|S^*|}{1-5\epsilon}+1\right)(1+\epsilon)\min_ku_k\le\left(\frac{1}{1-5\epsilon}+\epsilon\right)(1+\epsilon)u(S^*).
\end{eqnarray*}
The lemma follows.
\end{proof}
}
Our QPTAS is given as Algorithm~\ref{cl-qptas}. For simplicity, we assume that the algorithm has already a correct guess of the bound $B$ on the value of the optimal solution. After decomposing the instance into classes according to utility and region (steps~\ref{c-s0} and \ref{c-s01}), the algorithm enumerates over all possible selections of a nonnegative integer $n_{s,l}$ associated to each region $s$ and a utility class $l$ (step~\ref{c-s02}); this number $n_{s,l}$ represents the largest length 
vectors that are taken in the potential solution from the set $N^s\cap N_l$. However, for technical reasons, the algorithm does this only for pairs $(s,l)$ for which the set $N^s\cap N_l$ contributes at least $\frac{1}{\epsilon}$ in the optimal solution; the set of pairs that potentially do not satisfy this can be identified by enumeration (steps~\ref{c-s001} and \ref{c-s002}). 

\begin{algorithm}[!htb]
	\caption{ \CLQPTAS$(u,Q,C)$} \label{cl-qptas}
\begin{algorithmic}[1]
\Require A cost vector $u\in\RR_+^N$; accuracy parameter $\epsilon$; matrix $Q\in\mathcal{CP}_n^*$; a demand $C\in\RR_+$  
\Ensure An $O(\sqrt{\epsilon r})$-optimal solution $S$ to \CL \Comment{{\em $r=\text{cp-rank}(Q)$}}
\State Obtain sets $T$ and $V$ as explained above and set $N\leftarrow[n]\backslash(T\cup V)$
\State Decompose $Q$ as $Q=UU^T$, where $U\in\QQ^{n\times r}_+$ \label{c-s0}
\State Let $\{q_k\}_{k\in N}$ be the columns of $U^T$ corresponding to the indices in $N$
\State Decompose the set $N$ into space-classes $N^1,\ldots,N^h$ and utility classes $N_1,\ldots,N_\ell$\label{c-s01} 
\State $S\leftarrow T \cup N$ \Comment{Assume instance is feasible}
\For{each subset of pairs $\cG\subseteq [h]\times[\ell]$} \label{c-s001}
  \For{each possible selection $(T_{s,l}\subseteq N^s\cap N_l:~|T_{s,l}|\le\frac{1}{\epsilon},~(s,l)\in\cG)$}\label{c-s002}
     \For{each possible selection $(n_{s,l}\in\{1,\ldots,|N^s\cap N_l|\}:~s\in([h]\times[\ell])\setminus \cG)$}\label{c-s02}
        \State Let $S_{s,l}$ be the set of the $n_{s,l}$ vectors  with largest length in $N^s\cap N_l$   
        \State $S' \leftarrow T \cup \left(\bigcup_{(s,l)\in\cG}T_{s,l}\right)\cup\left(\bigcup_{s,l}S_{s,l}\right)$
        \If{$\|\sum_{k\in S'}q_k\|_2\ge C$ and $u(S')<u(S)$}\label{c-s1}
            \State $S\leftarrow S'$\label{c-s2}
        \EndIf
     \EndFor
   \EndFor
\EndFor
\State \Return $S$
\end{algorithmic}
\end{algorithm}

\begin{lemma}\label{c-l1}
For any sufficiently small $\epsilon>0$ (for instance, $\epsilon<\frac{1}{(r+1)^4}$ for $r\ge 2$), Algorithm~\ref{cl-qptas} runs in time $n^{O((\frac{\sqrt{r}}{\epsilon})^{r+1}\log n)}$ and outputs a solution $S$ satisfying $u(S)\le (1+O(\sqrt{\epsilon r}))\OPT$ for any instance of \CL. 
\end{lemma}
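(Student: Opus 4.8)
The proof naturally splits into two parts: a running-time bound and a correctness (approximation-ratio) bound. For the running time, the plan is to count the choices made by the three nested loops. There are $h = r(\sqrt r/\epsilon)^{r-1}$ space-classes and $\ell = 1 + \log_{1+\epsilon}(n/\epsilon) = O(\log n / \epsilon)$ utility-classes, so $h\ell = O((\sqrt r/\epsilon)^{r-1}\log n)$. The outer loop over subsets $\cG \subseteq [h]\times[\ell]$ contributes $2^{h\ell}$; the middle loop selects a set $T_{s,l}$ of size at most $1/\epsilon$ from each $N^s\cap N_l$ with $(s,l)\in\cG$, contributing at most $n^{h\ell/\epsilon}$; the inner loop picks $n_{s,l}\in\{1,\dots,|N^s\cap N_l|\}$ for each remaining pair, contributing at most $n^{h\ell}$. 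Multiplying and absorbing $2^{h\ell}$ into $n^{h\ell}$ gives $n^{O(h\ell/\epsilon)} = n^{O((\sqrt r/\epsilon)^{r}\log n /\epsilon)} = n^{O((\sqrt r/\epsilon)^{r+1}\log n)}$, as claimed; the smallness condition $\epsilon < 1/(r+1)^4$ is just to keep the exponent arithmetic clean. One also checks that each iteration does polynomial work (computing $q_{S'}$, comparing norms), and the outer guessing of $B$ multiplies only by an $O(\log_{1+\epsilon}(n\max_k u_k/\min_k u_k))$ factor, which is absorbed.

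For correctness, fix an optimal solution $S^*$ and condition on the correct guess $B$ with $B \le \OPT < (1+\epsilon)B$; the restriction to $N$ (discarding $T$-forced and $V$-negligible items) costs at most $\epsilon\OPT$, as already argued before the lemma. The key structural claim is that $S^*\cap N$ decomposes as $\bigcup_{s,l} (S^*\cap N^s\cap N_l)$, and we treat each block separately. Call a pair $(s,l)$ \emph{big} if $|S^*\cap N^s\cap N_l|\ge 1/\epsilon$ and \emph{small} otherwise; let $\cG$ be the set of small pairs. When the algorithm's outer loop reaches this $\cG$ and the middle loop guesses $T_{s,l} = S^*\cap N^s\cap N_l$ exactly for each small pair (possible since these sets have size $<1/\epsilon$), it remains to handle the big pairs. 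For a big pair $(s,l)$, let $A = S^*\cap N^s\cap N_l$, and let $n_{s,l} := |A|$ be the guess the inner loop eventually tries; then $S_{s,l}$ is the set of the $n_{s,l}$ \emph{longest} vectors in $N^s\cap N_l$. I would then invoke a greedy-replacement argument essentially identical to the proof of Lemma~\ref{greedy main}: replacing $A$ by $S_{s,l}$ can only increase the total projected length along any fixed direction (in particular along $\xi(s)$, hence $q_{S_{s,l}}$ dominates $q_A$ up to the $(1-5\epsilon)$-type angular loss coming from Facts~\ref{geo-fact1}--\ref{geo-fact3}), so feasibility $\|q_{S'}\|_2\ge C$ is preserved up to replacing $\epsilon$ by $O(\sqrt{\epsilon r})$ in the slack; and by the utility condition~\raf{utility-cond} within a utility-class, $u(S_{s,l}) \le (1+\epsilon)\,u(A)$ since $|S_{s,l}| = |A|$ and every utility in class $l$ is within a $(1+\epsilon)$ factor. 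Summing over all (at most $h\ell$, but more carefully over only the classes that are nonempty) blocks and multiplying the per-block $(1+\epsilon)$ losses together with the $O(\sqrt{\epsilon r})$ feasibility-repair loss and the initial $\epsilon\OPT$ restriction loss yields $u(S) \le (1 + O(\sqrt{\epsilon r}))\OPT$.

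The main obstacle I expect is the feasibility-preservation step for the big blocks: it is not literally true that swapping $A$ for the $|A|$ longest vectors in the same space-class keeps $\|q_{S'}\|_2 \ge C$ on the nose, because lengths add sub-additively and the vectors, while nearly collinear (within angle $\sin^{-1}\epsilon$ of $\xi(s)$ by~\raf{angle-cond}), are not exactly collinear. The fix is the projection machinery of Facts~\ref{geo-fact2} and~\ref{geo-fact3}: project everything onto $q_{S^*}$ (the optimal aggregate demand), use that vectors in the same space-class are pairwise at angle $\le \cos^{-1}(1-5\epsilon)$ so projected lengths are within a $(1-5\epsilon)$ factor of true lengths, and then Fact~\ref{geo-fact3} controls how much the projection onto \emph{any} axis can shrink when we pass from $A$ to $S_{s,l}$. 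Tracking these constants carefully across all $\le h$ space-classes is where the $\sqrt{\epsilon r}$ (rather than $\epsilon$) in the final bound comes from — each class contributes an $O(\sqrt\epsilon)$-order angular defect and there are $r$ coordinate directions to worry about, giving $O(\sqrt{\epsilon r})$ after the appropriate union/triangle-inequality bookkeeping. Everything else (the counting, the $B$-guessing, the $T,V$ reduction) is routine.
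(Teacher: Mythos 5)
Your running-time count and your handling of the small blocks (guessing $T_{s,l}=S^*\cap N^s\cap N_l$ exactly when $|S^*\cap N^s\cap N_l|<\frac{1}{\epsilon}$) match the paper. The genuine gap is in your treatment of the big blocks: you set $n_{s,l}:=|A|$ where $A=S^*\cap N^s\cap N_l$, take the $|A|$ longest vectors, and then propose to absorb the resulting shortfall in $\|q_{S'}\|_2$ into the approximation ratio (``feasibility $\ldots$ preserved up to replacing $\epsilon$ by $O(\sqrt{\epsilon r})$ in the slack''). That cannot work here, because $\|\sum_{k\in S'}q_k\|_2\ge C$ is a hard covering constraint that the algorithm tests in step~\ref{c-s1}: a candidate that covers only $(1-O(\sqrt{\epsilon r}))C$ is simply rejected, and the lemma's guarantee is that the \emph{output} is feasible with utility at most $(1+O(\sqrt{\epsilon r}))\OPT$ --- the loss must land entirely on the utility side, never on feasibility. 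And the shortfall is real: the vectors within a space-class are only nearly collinear, so swapping $A$ for the $|A|$ longest vectors controls $\sum_k\|q_k\|_2$ but not $\|\sum_k q_k\|_2$, and on top of the per-class angular defect you lose a further $\sqrt{r}$ factor when recombining blocks whose designated directions $\xi(s)$ differ (this is exactly the Cauchy--Schwarz step $\|\eta\|_2\ge\frac{1}{\sqrt{r}}\sum_{(s,l)}\|a_{s,l}\|_2$ in the paper).

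The paper's fix, which your argument is missing, is to \emph{over-cover}: it fixes $\lambda:=\bigl(1-\frac{\sqrt{5\epsilon(2-5\epsilon)r}}{1-5\epsilon}\bigr)^{-1}=1+O(\sqrt{\epsilon r})$ and chooses $n_{s,l}$ (possibly larger than $|A|$) so that the greedy prefix $S_{s,l}$ satisfies $\sum_{k\in S_{s,l}}\|\widehat q_k\|_2\ge\lambda\|q_{\widehat S_{s,l}}\|_2$, where $\widehat q_k$ is the projection onto $q_{\widehat S_{s,l}}$; the existence of such a prefix with utility at most $\bigl(\frac{\lambda}{1-5\epsilon}+\epsilon\bigr)(1+\epsilon)\,u(\widehat S_{s,l})$ comes from the proof of Lemma~\ref{greedy main}. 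Fact~\ref{geo-fact3}, applied with $\eta=q_{\widehat S}$, then shows that the $\lambda$-inflation exactly cancels both the angular defect $\frac{\sqrt{5\epsilon(2-5\epsilon)}}{1-5\epsilon}$ and the $\sqrt{r}$ cross-block loss, so $\sum_{(s,l)}\|{\tt Pj}_{\eta}(b_{s,l})\|_2\ge\sum_{(s,l)}\|{\tt Pj}_{\eta}(a_{s,l})\|_2=\|\eta\|_2$ and the constructed $S'$ is genuinely feasible. The $O(\sqrt{\epsilon r})$ in the final bound is the utility price of this over-provisioning, not a ``feasibility repair'' applied after the fact. Without introducing $\lambda$ (or an equivalent over-covering device) your argument does not establish that any iteration of the algorithm produces a feasible candidate of near-optimal utility, so the correctness half of the lemma is not proved.
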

\fullonly{
\begin{proof}
The running time is obvious since it is dominated by the number of selections in steps~\ref{c-s001}, \ref{c-s002} and~\ref{c-s02}, which is at most $\left(\frac{n}{h\ell}\right)^{O(\frac{h\ell}{\epsilon})}=n^{O((\frac{\sqrt{r}}{\epsilon})^{r+1}\log n)}$. 

To see the approximation ratio, fix 
\begin{equation}\label{lambda}
\lambda:=\left(1-\frac{\sqrt{5\epsilon(2-5\epsilon)r}}{1-5\epsilon}\right)^{-1}.
\end{equation}
Let $S^*$ be an optimal solution (within $N$), and for $s\in[h]$ and $l\in[\ell]$, let $S_{s,l}^*:=S^*\cap N^s\cap N_l$. Let $\cG:=\{(s,l)\in[h]\times[\ell]:~|S^*_{s,l}|<\frac{1}{\epsilon}\}$ and $T_{s,l}:=S^*\cap N^s\cap N_l$, for $(s,l)\in\cG$.

Let $\widehat S:=S^*\backslash\left(\bigcup_{(s,l)\in\cG}(N^s\cap N_l)\right)$.
For $k\in N^s\cap N_l$, let $\widehat q_k:={\tt Pj}_{q_{S^*_{s,l}}}(q_k)$ be the projection of $q_k$ on $q_{\widehat S_{s,l}}:=\sum_{k\in \widehat S_{s,l}}q_k$. Define the set of pairs 
$$\cH:=\left\{(s,l)\in [h]\times[\ell]:~\|\sum_{k\in N^s\cap N_l}\widehat q_k\|_2\ge\lambda \|q_{\widehat S_{s,l}}\|_2 \text{ and }|S^*_{s,l}|\ge\frac{1}{\epsilon}\right\}.$$

Then according to Lemma~\ref{greedy main} (or more precisely, its proof), for every $(s,l)\in \cH$ there a choice $n_{s,l}\in \{1,\ldots,|N^s\cap N_l|\}$, such that if $S_{s,l}$ is the set of the $n_{s,l}$ vectors  with largest length in $N^s\cap N_l$, then 
$\sum_{k\in S_{s,l}} \|\widehat q_k\|_2\ge\lambda\| q_{\widehat S_{s,l}}\|_2$ and $u(S_{s,l})\le \left(\frac{\lambda}{1-5\epsilon}+\epsilon\right)(1+\epsilon)u(\widehat S_{s,l})$. By this, we can define $p_{k}=\tau \cdot q_k$, for $\tau:=\frac{\lambda\| q_{\widehat S_{s,l}}\|_2}{\sum_{k\in S_{s,l}} \|\widehat q_k\|_2}\le 1$, such that $\sum_{k\in S_{s,l}} \|\widehat p_k\|_2=\lambda\|q_{\widehat S_{s,l}}\|_2$, where $\widehat p_k:={\tt Pj}_{q_{\widehat S_{s,l}}}(p_k)$.

Let us now apply Fact~\ref{geo-fact3} with $a=a_{s,l}:=q_{\widehat S_{s,l}}$, $b=b_{s,l}:=p_{S_{s,l}}$, and $\eta=q_{\widehat S}$ to get that
\begin{equation}\label{gc-e1}
\|{\tt Pj}_{\eta}(b_{s,l})\|_2\ge\lambda\left(\|{\tt Pj}_{\eta}(a_{s,l})\|_2- \frac{\sqrt{5\epsilon(2-5\epsilon)}}{1-5\epsilon}\right)\|a_{s,l}\|_2. 
\end{equation}
Summing the above inequalities for all $(s,l)\in\cH$, we get
\begin{eqnarray}\label{gc-e2}
\sum_{(s,l)\in\cH}\|{\tt Pj}_{\eta}(b_{s,l})\|_2&\ge&\lambda\left(\sum_{(s,l)\in\cH}\|{\tt Pj}_{\eta}(a_{s,l})\|_2- \frac{\sqrt{5\epsilon(2-5\epsilon)}}{1-5\epsilon}\sum_{(s,l)\in\cH}\|a_{s,l}\|_2\right).
\end{eqnarray}
Note that $\sum_{(s,l)\in\cH}a_{s,l}=\eta$ by construction. By the Cauchy-Schwarz inequality and the nonnegativity of the vectors, 
\begin{eqnarray*}\label{gc-e3}
\|\sum_{(s,l)\in\cH}{\tt Pj}_{\eta}(a_{s,l})\|_2&=&\|\eta\|_2=\|\sum_{(s,l)\in\cH}a_{s,l}\|_2\ge\frac{1}{\sqrt{r}}\|\sum_{(s,l)\in\cH}a_{s,l}\|_1=\frac{1}{\sqrt{r}}\sum_{(s,l)\in\cH}\|a_{s,l}\|_1\ge\frac{1}{\sqrt{r}}\sum_{(s,l)\in\cH}\|a_{s,l}\|_2.
\end{eqnarray*} 
Using this in \raf{gc-e2}, we obtain
\begin{eqnarray}\label{gc-e4}
\sum_{(s,l)\in\cH}\|{\tt Pj}_{\eta}(b_{s,l})\|_2&\ge&\lambda\left(1- \frac{\sqrt{5\epsilon(2-5\epsilon)r}}{1-5\epsilon}\right)\sum_{(s,l)\in\cH}\|{\tt Pj}_{\eta}(a_{s,l})\|_2\ge \sum_{(s,l)\in\cH}\|{\tt Pj}_{\eta}(a_{s,l})\|_2,
\end{eqnarray} 
by our choice of $\lambda$. From \raf{gc-e2} and $q_{S_{s,l}}\ge b_{s,l}$, it follows by the feasibility of $S^*$ that the solution defined by $S=T \cup \left(\bigcup_{(s,l)\in\cG}T_{s,l}\right)\cup\left(\bigcup_{s,l}S_{s,l}\right)$ is feasible.  

Clearly, one of the choices in each of the enumeration steps~\ref{c-s001},~\ref{c-s002}, and~\ref{c-s02} will capture the above choices $\cG$, $T_{s,l}$ for $(s,l)\in\cG$, and $n_{s,l}$, for $(s,l)\in([h]\times[\ell])\backslash\cG$. It follows the procedure returns a solution $S$ with utility:
\begin{eqnarray*}
u(S)&\le&u(T)+\sum_{(s,l)\in\cG}u(T_{s,l})+\sum_{(s,l)\not\in\cG}u(S_{s,l})\\
&\leq& u(T)+\sum_{(s,l)\in\cG}u(T_{s,l})+\left(\frac{\lambda}{1-5\epsilon}+\epsilon\right)(1+\epsilon)\sum_{(s,l)\not\in\cG}u(\widehat S_{s,l})\\
&\le&\left(\frac{1}{1-5\epsilon-\sqrt{5\epsilon(2-5\epsilon)r}}+\epsilon\right)(1+\epsilon)\left(u(T)+\sum_{(s,l)\in\cG}u(T_{s,l})+\sum_{(s,l)\not\in\cG}u(\widehat S_{s,l})\right)\\
&\le& (1+O(\sqrt{\epsilon r}))u(S^*).
\end{eqnarray*}
The lemma follows.
\end{proof}
}
\fullonly{
\subsection{Dealing with the Approximation in the Decomposition}\label{err} 
According to Corollary~\ref{cp-rank-approx}, given a matrix $Q_i$ of fixed cp-rank $r_i$, a decomposition of the form $Q_i:=U_iU_i^T-\Delta_i$, where $U_i\in\QQ^{r_i\times n}$, $\Delta_i\in\RR_+^{n\times n}$, and $\|\Delta_i\|_\infty\le \delta$, can be done in time $\poly(\cL,n^{O(r_i^2)},\log\frac{1}{\delta})$ for any $\delta>0$. This leads to a technical issue: if one uses the approximate decomposition $Q_i:=U_iU_i^T$ to solve the BQC problem, then the resulting solution can be either infeasible or far from optimal. When $m=O(1)$, this issue can be dealt with at a small loss in the approximation ratio as follows. 
Let $Q_i^{kj}$ denote the $(k,j)$th entry of $Q_i$. Note that, if the $k$th diagonal element $Q_i^{kk}$ of $Q_i$ is $0$, then every entry in the $k$th row and the $k$th column of $Q_i$ is $0$, since $Q_i$ is positive semi-definite.  

We consider the case when $f$ is linear.
Let $S^*$ be an optimal solution for \PQC\ (resp., \CQC) and $x^*$ be its characteristic vector. The idea is to show that there is a feasible solution $\widetilde S\subseteq S^*$ (resp., $\widetilde S\supseteq S^*$) such that $f(\widetilde S)\ge (1-\epsilon)f(S^*)$ (resp., $f(\widetilde S)\le (1+\epsilon)f(S^*)$), and such that $\widetilde x^TQ_i \widetilde x\le \widetilde C_i$ (resp., $\widetilde x^TQ_i \widetilde x\ge \widetilde C_i$), where $\widetilde x$ is the characteristic vector of $\widetilde S$ and $\widetilde C_i:=C_i-\delta n^2$ (resp., $\widetilde C_i:=C_i+\delta n^2$).
Then it would be enough to solve a new instance with the same objective function but with the constraints $x^TU_iU_i^Tx\le  C_i$ (resp., $x^TU_iU_i^Tx\ge \widetilde C_i$).
Note that $\widetilde x$ is feasible for the new instance, since $\widetilde x^TU_iU_i^T \widetilde x=\widetilde x^TQ_i \widetilde x+\widetilde x^T\Delta_i \widetilde x\le \widetilde C_i+n^2\delta=C_i$ (resp., $\widetilde x^TU_iU_i^T \widetilde x=\widetilde x^TQ_i \widetilde x+\widetilde x^T\Delta_i \widetilde x\ge \widetilde C_i$).

 Suppose that $x$ is an $\alpha$-approximate solution for the new instance of \PQC, then $x^TQ_ix=x^TU_iU_i^Tx-x^T\Delta_ix\le C_i$, since $\Delta_i\ge 0$, and hence $x$ is an $\alpha(1-\epsilon)$-approximation for the original instance. Similarly, if $x$ is an $\alpha$-approximate solution for the new instance of \CQC, then $x^TQ_ix=x^TU_iU_i^Tx-x^T\Delta_ix\ge \widetilde C_i-\delta n^2=C_i$, and hence $x$ is an $\alpha(1+\epsilon)$-approximation for the original instance. 

The $\epsilon$-approximate solution $\widetilde S$ can be obtained as follows.
Let $\epsilon>0$ be a fixed constant, and define $\lambda=\frac{m}{\epsilon}$. Let $\cU$ be the $\lambda$ items of highest utility in the optimal solution $S^*$. This gives a partition of $[n]$ into three sets $\cU\cup \cV\cup N$, as described in Section \ref{lin-convex}. We can make the following assumptions w.l.o.g.:

\begin{itemize}
\item[(A1)] $|S^*|\ge\frac{m}{\epsilon}$;
\item[(A2)] for each $i\in[m]$ the set $\kappa_i:=\{k\in S^*\cap N:~Q_i^{kk}>0\}\neq\varnothing$ (resp., $\kappa_i:=\{k\in N\setminus S^*:~Q_i^{kk}>0\}\neq\varnothing$).
\end{itemize}
(A1) can be assumed since otherwise the algorithm that enumerates over all possible sets of size at most $\frac{m}{\epsilon}$ is optimal. (A2) can be assumed since if it is violated by $i\in[m]$, then after fixing the set $\cU$ (resp, after fixing the set $\cU$ and fixing the variables in the set $\{k\in N:~Q_i^{kk}=0\}$ to $1$), the $i$th constraint becomes redundant. Thus the algorithm that enumerates all subsets $\cU$ of size at most $\frac{m}{\epsilon}$ and all subsets of constraints, removing a subset at a time (resp., enumerates all subsets $\cU$ of size at most $\frac{m}{\epsilon}$ and all subsets of constraints, removing a subset at a time and fixing the variables in the set $\{k\in N:~Q_i^{kk}=0\}$ to $1$), can assume an instance satisfying (A2).       

Then, assuming (A1) and (A2), for each $i\in[m]$, we pick an index $k(i)\in \kappa_i$ and set the corresponding variable to $0$ (resp., to $1$). This yields an $\epsilon$-optimal solution $\widetilde S$, whose characteristic vector $x$ also satisfies $x^TQ_ix\le (x^*)^TQ_ix^*-Q_i^{k(i),k(i)}\le \widetilde C_i$ (resp., $x^TQ_ix\ge (x^*)^TQ_ix^*+Q_i^{k(i),k(i)}\ge \widetilde C_i$), if we choose $\delta:=\frac{\min_{i,k}Q^{kk}_i}{n^2}$.   
}

\section{Applications of The Approximation Scheme}
\label{sec:appl}
\subsection{A PTAS for Multi-Objective Optimization}
In this section we study optimization problem with multiple objectives. In such a problem it is concerned with optimizing more than one objective function simultaneously. In fact, we consider the following binary multi-objective quadratic problem (BMOQ): 
\begin{eqnarray*}
\textsc{(BMOQ)} \qquad & \displaystyle \{\max f(x), \min g(x)\} \label{mbq}\\
\text{subject to}\qquad & \displaystyle x\in X\subseteq\{0,1\}^n
\end{eqnarray*}
where $f(x)=\{f_i(x)\}_{i\in \mathcal{I}}$ and $g(x)=\{g_j(x)\}_{j\in\mathcal{J}}$, and 
\[
X=\{x\in\{0,1\}^n|~x^TQ_ix + q_i^Tx \le C_i^2;Ax\le b;i\in[m]\},
\]
where $Q_i\in\mathcal{CP}_n^*,q_i\in\RR_+^n$ for all $i\in[m]$, $A\in\RR_+^{d\times n}$, $b\in\RR_+^d$, and $m,d$ are constant.

Typically, a feasible solution that simultaneously optimizes each objective may not exist due to the trade-off between the different objectives. This issue can be captured by the notion of {\em Pareto-optimal frontier} (or {\em Pareto curve}), which is a set of all feasible solutions whose vector of the objectives is not {\em dominated} by any other one. Unfortunately, the size of such a Pareto-optimal frontier set is often exponential in size of the input. Hence, a natural goal in this setting is, given an instance of a (BMOQ) problem, to efficiently achieve an $\epsilon$-approximation of the Pareto optimality that consists of a number of solutions that is polynomial in size of the instance, for every constant $\epsilon>0$. Formally, let $I$ be an instance of the multi-objective optimization problem (BMOQ), consider the following definitions taken from \cite{mit-sch:j:a-general-framework-for-designing-approximation-scheme}:

\begin{definition}
A Pareto-optimal frontier $P$ of $I$ is a subset of $X$, such that for any solution $S\in P$ there is no solution $S'\in X$ such that $f_i(S')\ge f_i(S)$ for all $i\in \mathcal{I}$ and $g_j(S')\le g_j(S)$ for all $j\in \mathcal{J}$, with strict inequality for at least one of them.
\end{definition}  

\begin{definition}
For $\epsilon>0$, an $\epsilon$-approximate Pareto-optimal frontier of $I$, denoted by $P_{\epsilon}$, is a set of solutions, such that for all $S\in X$, there exists a solution $S'\in P_{\epsilon}$ such that $f_j(S')\ge (1-\epsilon)f_j(S)$ and $g_j(S')\le g_j(S)/(1-\epsilon)$ for all $i\in \mathcal{I},j\in \mathcal{J}$.
\end{definition}

Computing an (approximate) Pareto-optimal frontier has been known to be a very efficient approach in designing approximation schemes for combinatorial optimization problems with multiple objectives \cite{erl-kel-pre:j:approximating-multiobjective-knapsack,pap-yan:c:on-the-approximability-of-trade-off-and-optimal-access,mit-sch:j:a-general-framework-for-designing-approximation-scheme,mit-sch:j:an-fptas-optimizing-a-class-of-lower-rank-functions-over-a-polytope}. Erlebach et al. \cite{erl-kel-pre:j:approximating-multiobjective-knapsack} give a PTAS for multi-objective multi-dimensional knapsack problem. 
In this paper, we will extend their method to the case with quadratic constraints. Let $I$ be an instance of the BMOQ problem, and denote $f_j(x)=a_j^Tx,g_j(x)=b_j^Tx$ for all $j\in [p]$. Let $\epsilon>0$ be a fixed constant. The Algorithm {\sc Pareto-Opt} below will produce a set $P_{\epsilon}$, which is an $\epsilon$-approximate Pareto-optimal frontier to the instance $I$, in polynomial time. Moreover, the size of $P_{\epsilon}$ is guaranteed to be also polynomial in size of the input for every fixed $\epsilon>0$. 

We first compute $\epsilon^2$-approximate solutions $S_j,T_j$ to the maximization problem with single objective functions $f_j=a_j^Tx,g_j=b_j^Tx$ subject to $x\in X$, respectively, by using an extension of Algorithm \ref{Lin-QC-PTAS}
. As a result, for any feasible solution $S$, we have $f_j(S)\le (1-\epsilon^2)^{-1}f_j(S_j)$ and $g_j(S)\le (1-\epsilon^2)^{-1}g_j(T_j)$ for all $j\in[p]$. For each $j\in[p]$, we consider $v_j+2$ (lower) bounds $\{\alpha_{j\lambda_j},\, 0\le \lambda_j\le v_j+1\}$ for the objective $f_j$ and $w_j+3$ (upper) bounds $\{\beta_{j\eta_j},\, 0\le \eta_j\le w_j+2\}$ for the objective $g_j$. We define the set  $\Lambda\times\Gamma\subseteq \RR_+^{2p-1}$, which contains all the tuples $(\lambda_1,\ldots,\lambda_{p-1},\eta_1,\ldots,\eta_{p})$. Note that the size of $\Lambda\times\Gamma$ is bounded by a polynomial in size of the input since $p$ is constant and $v_j,w_j$ are also bounded by a polynomial in size of the input. The main idea of the algorithm is that, for each tuple of the bounds, we try to maximize the last objective $f_p$ subject to $f_j\ge \alpha_{j\lambda_j}$, for all $j\in[p-1]$, and $g_j\le \beta_{j\eta_j}$, for all $j\in[p]$. To do that, we consider all possibility of choosing a subset of $[n]$ of at most (a constant number) $\lambda$ of items. Denote $\Sigma$ as the set of all such subsets and let $\mathcal{X}=\Sigma^{p}$. For each tuple $(X_1,\ldots,X_p)\in \mathcal{X}$, where $X_j$ is a set of items of highest utility with respect to $f_i$, we construct a tuple $(Y_1,\ldots,Y_p)$ such that $Y_j$ is the set of all items that will not be put into the knapsack given that items in $X_j$  are the ones with highest utility with respect to $f_j$ in the solution. Now for each tuple $(X_1,\ldots, X_p,Y_1,\ldots, Y_p)$ such that $\bigcup_{j\in [p]} (X_j \cap Y_j)=\varnothing$, we solve the convex program (CP2$[\cU]$) and obtain (if exists) an $\epsilon$-optimal solution $x^*$:
\begin{align*}
\textsc{(CP2$[\cU]$)} \qquad& \displaystyle \max a_p^Tx \label{}\\
\text{subject to}\qquad & \displaystyle x^TQ_ix +q_i^Tx\le C_i^2, i\in[m],\\
\qquad & \displaystyle Ax \le b, \\
& \displaystyle a_j^Tx \ge \alpha_{j\lambda_j}, j\in[p-1],\\
& \displaystyle b_j^Tx \le \beta_{j\eta_j}, j\in[p],\\
& \displaystyle x_k=1, \text{ for } k\in \mathcal{U}, x_k=0, \text{ for } k\in \mathcal{V},\\
& \displaystyle x_k\in[0,1], \text{ for } k\in N.
\end{align*}
Define $t_i:= U_i^T[*;N]x^*_N$, $t_i':= \bone_\cU^TQ_i[\cU;N]x^*_N$, ${q_i^T}[*;N]x^*_N= v_i$ for all $i\in [m]$, $A[*;N]x^*_N= \sigma$, ${a_j^T}[*;N]x^*_N= \theta_j \text{ for } j\in[p-1]$, and  ${b_j^T}[*;N]x^*_N= \theta'_j \text{ for } j\in[p]$. We define a polytope $\cP(\cU)\subseteq[0,1]^{N}$ as follows:
\[\cP(\cU) = \left\{ \begin{gathered}
   \hfill \\
  y\in [0,1]^N \hfill \\
   \hfill \\ 
\end{gathered}  \right.\left| \begin{gathered}
  U_i^T[*;N]y \le t_i,~\bone_\cU^TQ_i[\cU;N]y\le t_i',{q_i^T}[*;N]y\le v_i, \text{ for }i\in[m] \hfill \\
   A[*;N]y\le \sigma\hfill \\
  a_j^T [*;N]y\ge \theta_j \text{ for } j\in[p-1],b_j^T [*;N]y\le \theta'_j \text{ for } j\in[p]\hfill \\ 
\end{gathered}  \right\}\]
We can find (if exists) a (BFS) $y$ in this polytope such that $a_p^Ty\ge a_p^Tx^*_N$. By rounding down this fractional solution $y$ and setting $x_k\in\{0,1\}$ according to the assumption $k\in\cU\cup\cV$, we obtain an integral solution $\overline x\in \{0,1\}^n$. 

There are totally $n^{p\lambda}$ tuples of the form $(X_1,\ldots, X_p,Y_1,\ldots, Y_p)$. For each such tuple such that the condition in Step \ref{check} is satisfied and (CP1$[\cU]$) is feasible, the algorithm outputs an integral solution $\overline x$. All possible integral solutions obtained in this way are collected in the set $P_{\epsilon}$. This completes the description of the algorithm.

\begin{algorithm}[!htb]
	\caption{ {\sc Pareto-Opt}$(\{a_j^T,b_j^T,Q_i,C_i\}_{j\in[p],~i\in[m]},\epsilon)$} \label{mul-PTAS}
\begin{algorithmic}[1]
\Require Utilities, demand vectors and capacities $(\{a_j^T,b_j^T,Q_i,C_i\}_{j\in[p],~i\in[m]})$; accuracy parameter $\epsilon$
\Ensure $(1-\epsilon)$-approximate Pareto optimality $P_{\epsilon}$
\State $P_{\epsilon}\leftarrow \varnothing$
\State Decompose $Q_i$ into $U_i^TU_i$, where $U_i$ has $r_i$ rows, for all $i\in [m]$\Comment{{\em $r_i=\text{cp-rank}(Q_i)$}}
\For{$j\in [p]$}
\State $S_j\leftarrow \text{PTAS}(a_j^T,Q_i,C_i,i\in[m],\epsilon^2)$ \Comment{{\em Find an $\epsilon^2$-approximate solution to $f_j$}}
\State $T_j\leftarrow \text{PTAS}(b_j^T,Q_i,C_i,i\in[m],\epsilon^2)$ 
\Comment{{\em Find an $\epsilon^2$-approximate solution to $g_j$}}
\State $v_j\leftarrow \left\lceil \log_{(1-\epsilon')^{-1}}f_j(S_j)\right\rceil$; $w_j\leftarrow \left\lceil \log_{(1-\epsilon')^{-1}}g_j(T_j)\right\rceil$
\label{step1}
\State $\alpha_{j0}\leftarrow 0$; $\alpha_{j\lambda_j}\leftarrow (1-\epsilon^2)^{1-\lambda_j}$ for all $\lambda_j\in[v_j+1]$
\State $\beta_{j0}\leftarrow 0$; $\beta_{j\eta_j}\leftarrow (1-\epsilon^2)^{1-\eta_j}$  for all $\eta_j\in[w_j+2]$
\EndFor
\State $\Lambda\leftarrow [0,v_1]\times \ldots \times [0,v_{p-1}]$; $\Gamma\leftarrow [0,w_1]\times \ldots \times [0,w_{p}]$
\State $\phi\leftarrow\sum_{i=1}^m(r_i+1)$; $\lambda\leftarrow\left\lceil \dfrac{(\phi+2p+d+m-1)(1+\epsilon)}{\epsilon(1-\epsilon)}\right\rceil$
\State $\mathcal{Y}\leftarrow \varnothing$; $\mathcal{X}\leftarrow \{(X_1,\ldots,X_p)|~X_j\subseteq [n], |X_j|\le \lambda,j\in[p]\}$
\For{each $(X_1,\ldots,X_p)\in \mathcal{X}$}
\For{$j\in[p]$}
\State $Y_j\leftarrow \{k\in [n]\backslash X_j|a_{jk}>\min\{a_{jk'}|k'\in X_j\}\}$
\EndFor
\State $\mathcal{Y}\leftarrow \mathcal{Y}\cup (Y_1,\ldots,Y_p)$
\EndFor
\For{each tuple  $(\lambda_1,\ldots, \lambda_{p-1},\eta_1,\ldots, \eta_{p})\in\Lambda\times \Gamma$}
\For{each tuple  $(X_1,\ldots, X_{p},Y_1,\ldots, Y_{p})\in\mathcal{X}\times \mathcal{Y}$}
\If{$\mathcal{U}\cap\mathcal{V}=\varnothing$}
\label{check}
\State Find (if exists) an $\epsilon$-optimal solution $x^*$ to the convex program (CP2$[\cU]$)
\State{Define a polytope $\cP(\cU)\subseteq [0,1]^N$}
\State Find an BFS $y$ of $\cP(\cU)$ such that $a_p^Ty\ge a_p^Tx^*_N$
\State $\overline x\leftarrow \{(\overline x_1,\ldots, \overline x_n)|\overline x_k=\left\lfloor y_k\right\rfloor \text { for }k\in N,~ x_k=1, \text{ for }k\in\cU,~ x_k=0, \text{ for }k\in\cV\}$\Comment{{\em rounding down solution $y$}}
\State $P_{\varepsilon}\leftarrow P_{\varepsilon} \cup \overline x$
\EndIf
\EndFor
\EndFor
\State \Return $P_{\varepsilon}$
\end{algorithmic}
\end{algorithm}

\begin{theorem}
\label{theo:approx-pareto}
Let $I$ be an instance of the BMOQ problem. For every fixed $\epsilon>0$, the Algorithm \ref{mul-PTAS} runs in polynomial time in size of the input and produces an $\epsilon$-approximate Pareto-optimal frontier $P_{\epsilon}$ for $I$.
\end{theorem}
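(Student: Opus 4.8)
The plan is to follow the template of Erlebach et al.'s PTAS for multi-objective multi-dimensional knapsack, using the extended version of Algorithm~\ref{Lin-QC-PTAS} (Remark~\ref{rem:extension}, which also handles the linear terms $q_i^Tx$ and the side constraints $Ax\le b$) as the single-objective subroutine and combining it with a geometric bucketing of the objective values. Since $f_j(x)=a_j^Tx$ and $g_j(x)=b_j^Tx$ are linear, the only point that genuinely differs from the linear-constraint case is that the rounding step of Lemma~\ref{theo:ptas-lin-obj} must be carried out a single time, in such a way that it simultaneously controls the loss in all $p$ objectives $a_j^Tx$ while only slightly violating the artificial lower-bound constraints $a_j^Tx\ge\alpha_{j\lambda_j}$, $j<p$.

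I would first dispose of the running time. The preprocessing calls the extended Algorithm~\ref{Lin-QC-PTAS} $2p$ times, each in polynomial time, and yields $v_j,w_j$ that are polynomial in the input size (bit length $\cL$ plus $\log n$); as $p$ is a constant, $|\Lambda\times\Gamma|=\prod_j(v_j+2)(w_j+3)$ is polynomial. The parameter $\lambda$ is a constant, since $\phi=\sum_i(r_i+1)$, $m$, $d$, $p$, and $\epsilon$ are all constants, so $|\mathcal{X}\times\mathcal{Y}|\le n^{O(p\lambda)}$ is polynomial; for each of these polynomially many tuples the algorithm solves one convex program in polynomial time (cf.~\cite{NT08}) and computes one basic feasible solution of $\cP(\cU)$ by solving at most $n$ linear systems. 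Hence the whole algorithm runs in polynomial time and $|P_\epsilon|\le|\Lambda\times\Gamma|\cdot|\mathcal{X}\times\mathcal{Y}|$ is polynomial as well.

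For correctness I would fix an arbitrary feasible $S\in X$ and exhibit one tuple in the enumeration whose rounded output $\overline x$ dominates $S$ up to a $(1-O(\epsilon))$ factor; we may assume $|S|>\lambda$, the opposite case being handled by a guess equal to $S$ itself. For each $j\in[p]$ let $X_j\subseteq S$ be the $\lambda$ items of largest $a_j$-value in $S$; with $Y_j$ as defined in the algorithm one checks $S\cap Y_j=\varnothing$ and $X_j\cap Y_{j'}=\varnothing$ for all $j,j'$, so that $\cU:=\bigcup_jX_j\subseteq S$, $\cV:=\bigcup_jY_j$ passes the test in Step~\ref{check}, and every $k$ in $N:=[n]\setminus(\cU\cup\cV)$ satisfies $a_{jk}\le\min_{k'\in X_j}a_{jk'}\le f_j(S)/\lambda$. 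Next pick bucket indices $\lambda_j$ with $(1-\epsilon^2)f_j(S)\le\alpha_{j\lambda_j}\le f_j(S)$ for $j<p$, and $\eta_j$ with $g_j(S)\le\beta_{j\eta_j}\le(1-\epsilon^2)^{-1}g_j(S)$ for $j\in[p]$; these exist because the preprocessing guarantees $f_j(S)\le(1-\epsilon^2)^{-1}f_j(S_j)$ and $g_j(S)\le(1-\epsilon^2)^{-1}g_j(T_j)$. Then $\chi_S$ is feasible for (CP2$[\cU]$), so the $\epsilon$-optimal solution $x^*$ found there has $a_p^Tx^*\ge(1-\epsilon)f_p(S)$, and, exactly as in the proof of Lemma~\ref{theo:ptas-lin-obj} (decompose $Q_i=U_iU_i^T$ and use the definitions of $t_i,t_i',v_i,\sigma,\theta_j,\theta_j'$), one obtains $x^*_N\in\cP(\cU)$ together with, for every $i$, $t_i^2+2t_i'\le C_i^2-\bone_\cU^TQ_i[\cU;\cU]\bone_\cU-v_i-q_i^T[\cU]\bone_\cU$ and the corresponding inequalities for the $A$- and $b_j$-rows. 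Finally let $y$ be a BFS of $\cP(\cU)$ with $a_p^Ty\ge a_p^Tx^*_N$ and let $\overline x$ be its rounding-down, as in Algorithm~\ref{mul-PTAS}.

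It then remains to verify the two approximation bounds, which is where I expect the only real work. Feasibility of $\overline x$ for $X$ (the quadratic constraints and $Ax\le b$) and the upper bounds $g_j(\overline x)\le\beta_{j\eta_j}\le(1-\epsilon)^{-1}g_j(S)$ should follow essentially verbatim from the computation in Lemma~\ref{theo:ptas-lin-obj}, since $\overline x_N\in\cP(\cU)$ and rounding down only decreases the relevant nonnegative sums. For the lower bounds one uses that a BFS of $\cP(\cU)$ has at most $\overline r:=\phi+m+d+2p-1$ fractional coordinates, all of them in $N$, so rounding $y$ down costs at most $\overline r\cdot f_j(S)/\lambda$ on objective $a_j$ while the coordinates of $X_j\subseteq\cU$, already fixed to $1$, are untouched; hence
\[
f_j(\overline x)\ \ge\ a_j^Tx^*-\overline r\,\frac{f_j(S)}{\lambda}\ \ge\ (1-\epsilon^2)f_j(S)-\frac{\overline r}{\lambda}f_j(S)\ \ge\ (1-\epsilon)f_j(S)\qquad(j<p),
\]
and similarly $f_p(\overline x)\ge(1-\epsilon)^2f_p(S)$, the final step in each case using $\overline r/\lambda\le\epsilon(1-\epsilon)/(1+\epsilon)\le\epsilon$ from the choice of $\lambda$. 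After rescaling $\epsilon$ by a constant factor this yields $f_j(\overline x)\ge(1-\epsilon)f_j(S)$ and $g_j(\overline x)\le g_j(S)/(1-\epsilon)$ for all $j$, so $P_\epsilon$ is an $\epsilon$-approximate Pareto-optimal frontier. The delicate part, and the main obstacle, is exactly that this single rounding of $y$ must keep all $p$ objectives and all quadratic and linear constraints under control simultaneously; this is what forces the exact count $\overline r=\phi+m+d+2p-1$ of fractional coordinates of $\cP(\cU)$ and a separate high-value prefix $X_j$ for each of the $p$ objectives, with $\lambda$ chosen large enough in terms of $\overline r$ — everything else being the routine bookkeeping already carried out in the proof of Lemma~\ref{theo:ptas-lin-obj}.
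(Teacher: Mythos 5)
Your proposal follows essentially the same route as the paper's proof: guess the geometric bucket indices $\lambda_j,\eta_j$ and the top-$\lambda$ prefixes $X_j$ of the target solution $S$ for each objective, verify that $S$ is feasible for (CP2$[\cU]$), round a basic feasible solution of $\cP(\cU)$, and charge the rounding loss of at most $\phi+m+d+2p-1$ fractional coordinates (each of value at most $f_j(S)/\lambda$) against the choice of $\lambda$ — exactly the computation the paper carries out in its Case II. The only cosmetic difference is that you dispose of the case $|S|\le\lambda$ by a direct guess while the paper folds it into the same framework via $\delta=\min\{\lambda,|S|\}$; the bounds you obtain match the paper's.
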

\begin{proof}
It is not difficult to see that the running time of Algorithm \ref{mul-PTAS} is polynomial in size of the input as the parameters $\epsilon,\epsilon',\mu,\lambda$ are constants. Hence, we need only to show the approximation factor of the algorithm. More precisely, let $S$ be an arbitrary feasible solution, we need to prove that there exists a solution $S'\in P_{\epsilon}$ such that $f_j(S')\ge (1-\epsilon)f_j(S)$ and $g_j(S')\le g_j(S)/(1-\epsilon)$, for all $j\in[p]$. Let us define 
\[
\lambda_j =  \max\{\lambda| \alpha_{j\lambda} \le f_j(S)\}
\quad \text{and} \quad
\eta_j =  \min\{\eta| \beta_{j\eta} \ge g_j(S)\}. 
\] 
It follows that 
\[
\alpha_{j\lambda_j} \le f_j(S) \le \alpha_{j\lambda_j}/(1 - \epsilon^2)
\quad \text{and} \quad
\beta_{j\eta_j} \ge g_j(S) \ge (1 - \epsilon^2)\beta_{j\eta_j}
\]

Let $\delta=\min\{\lambda, |S|\}$. For each $j\in[p]$, let $X_j$ be the set that contains $\delta$ items in $S$ with largest utility with respect to the objective $f_j$. One can see that $S$ is a feasible solution to (CP2$[\cU]$) with the tuple $(\lambda_1,\ldots, \lambda_{p-1},\eta_1,\ldots, \eta_{p})$ and the tuple $(X_1,\ldots, X_p)$. Let $y$ be a BFS to the polytope $\cP(\cU)$ and let $\overline x$ be the corresponding rounded solution. By the same arguments provided as in the section \ref{}, one can easily proof that $\overline x$ is feasible. Denote $S'$ by set of items corresponding to the integral solution $\overline x$. We have:
\begin{equation}
\label{eq:g}
g_j(S')=b_j^T\overline x\le \beta_{j\eta_j}\le g_j(S)/(1-\epsilon^2)\le g_j(S)/(1-\epsilon)
\end{equation}
Furthermore, for every $j\in [p-1]$: 
\[
a_j^Tx^* \ge \alpha_{j\lambda_j} \ge (1-\epsilon^2)f_j(S),
\]
and for $j = p$, we even have $a_p^Tx^* \ge (1-\epsilon)\opt \ge (1-\epsilon) f_p(S)$, because $x^*$ is $\epsilon$-optimal to (CP2$[\cU]$). We consider two cases below.

Case I: If $\delta=|S|$, (i.e. $|S|\le \lambda$), the set $S'$ will contain $S$ as a subset. Hence, we have $f_j(S')\ge f_j(S)$ for all $j\in [p]$.

Case II: If $\delta=\lambda$, we prove that $f_j(S')\ge (1-\epsilon)f_j(S)$ for all $j\in[p]$. Let $a_{jk}$ be the smallest utility among the $\lambda$ items with highest utility in $S$ with respect to objective $f_j$, we have 
\[
a_j^Ty+a_j^T\bone_\cU\ge a_j^Tx^*_N+a_j^T\bone_\cU=a_j^Tx^*\ge (1-\epsilon)\OPT \ge (1-\epsilon)\lambda a_{jk}
\]
 On the other hand, since $\overline x$ is obtained by rounding down $y$, and the fact that $y$ has at most $\phi+2p+d+m-1$ fractional components, we have 
$$f_j(S')=a_j^T\overline x =a_j^T\overline x_N+a_j^T\bone_\cU\ge a_j^Ty- a_{jk}(\phi +2p+d+m-1)+a_j^T\bone_\cU$$
Therefore, 
\begin{align*}
f_j(S')\ge  a_j^T x^*-\dfrac{\phi+2p+d+m-1}{\lambda}\lambda a_{jk}\ge & \,\,a_j^Tx^* - \dfrac{\phi+2p+d+m-1}{\lambda(1-\epsilon)}a_j^Tx^*\\
= &\,\,(1-\dfrac{\phi+2p+d+m-1}{\lambda(1-\epsilon)})a_j^T x^*\\
\ge &\,\, (1-\dfrac{\epsilon}{1+\epsilon})a_j^T x^*=\dfrac{a_j^T x^*}{1+\epsilon}
\end{align*}
Note that $a_j^T x^* \ge (1-\epsilon^2)f_j(S)$, we have 
\begin{equation}
\label{eq:f}
f_j(S')\ge \dfrac{a_j^T x^*}{1+\epsilon}\ge \dfrac{1-\epsilon^2}{1+\epsilon}f_j(S)= (1-\epsilon)f_j(S).
\end{equation}
From (\ref{eq:g}) and (\ref{eq:f}), the proof is completed.\end{proof}

\subsection{Binary Quadratically Constrained Quadratic Programming Problems}
We consider the binary quadratically constrained quadratic programming problem which takes the form:
\begin{eqnarray*}
\textsc{(BQCQP)} \qquad & \displaystyle \max\,\, w(x)=x^TQx+q^Tx\label{}\\
\text{subject to}\qquad & \displaystyle x \in X \subseteq\{0,1\}^n,
\end{eqnarray*}
where $Q\in\mathcal{CP}_n^*$, $q\in \mathbb{R}_+^n$, and the feasible region $X$ is defined as in the previous section.


\begin{theorem}
The problem (BQCQP) admits a PTAS.
\end{theorem}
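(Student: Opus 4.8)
## Proof Proposal for the PTAS for (BQCQP)

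The plan is to reduce the quadratically constrained quadratic maximization problem to the already-established PTAS for linear objectives over the same feasible region, by exploiting the low cp-rank structure of the objective matrix $Q$ in exactly the same way we exploit it in the constraints. First I would write $Q = VV^T$ with $V \in \QQ_+^{n\times r}$, $r = \cpr(Q)$, using Corollary~\ref{cp-rank-approx} (deferring the error term $\Delta$ to the standard treatment in Section~\ref{err}). Then for $x \in \{0,1\}^n$ we have $w(x) = \|V^Tx\|_2^2 + q^Tx = \sum_{j=1}^r (p_j^Tx)^2 + q^Tx$, where $p_j$ is the $j$th column of $V^T$. Thus $w$ is a function of the $r+1$ linear forms $p_1^Tx, \ldots, p_r^Tx, q^Tx$, i.e., a \emph{low-rank} function in the sense of the footnote in the introduction, with a fixed number of pieces since $r$ is constant.

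The key step is a guessing/bucketing argument in the spirit of Erlebach et al.\ and of our Algorithm~\ref{Lin-QC-PTAS}. For the optimal solution $x^*$ with value $\OPT = w(x^*)$, I would guess, within a factor $(1+\epsilon)$, the value $v_j^* := p_j^Tx^*$ of each of the $r$ squared linear forms; since each $p_j^Tx^* \le \|V\|_\infty \cdot n$, there are only $O((\log n)/\epsilon)$ choices per coordinate, hence $(\log n / \epsilon)^{O(r)} = \poly(n)$ guesses in total for fixed $r$ and $\epsilon$. Having fixed a guess $(\gamma_1,\ldots,\gamma_r)$, I impose the $2r$ additional linear constraints $\gamma_j \le p_j^Tx \le (1+\epsilon)\gamma_j$ and then \emph{maximize the single linear objective} $q^Tx + 2\sum_j \gamma_j (p_j^Tx)$ — or more simply just $q^Tx$ — over the augmented feasible region $X \cap \{x : \gamma_j \le p_j^Tx \le (1+\epsilon)\gamma_j\,\forall j\}$. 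Crucially, this augmented region is of exactly the same type handled by Theorem~\ref{thm:lin} (and Remark~\ref{rem:extension}): quadratic packing constraints with low cp-rank, plus a constant number of linear packing constraints; the upper bounds $p_j^Tx \le (1+\epsilon)\gamma_j$ are linear packing constraints, and the lower bounds $p_j^Tx \ge \gamma_j$ can be folded in because our convex-relaxation method (Section~\ref{lin-convex}, extended as in Remark~\ref{rem:extension}) can accommodate a fixed number of linear covering constraints of the form $Ax \ge b'$ — or, if one prefers to stay strictly within the stated framework, one replaces the lower bounds by a $\poly(n)$-time search combined with the BFS rounding, noting that rounding down loses at most $\bar r + O(r)$ fractional coordinates of bounded utility. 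Running the linear PTAS on the guess that matches $(v_1^*,\ldots,v_r^*)$ up to $(1+\epsilon)$ returns a feasible $\bar x$ with $q^T\bar x \ge (1-O(\epsilon)) q^Tx^*$ and $p_j^T\bar x \ge (1-O(\epsilon)) v_j^*$ for every $j$ (the slack from rounding is absorbed into the guessed intervals), whence
\[
w(\bar x) = \sum_{j=1}^r (p_j^T\bar x)^2 + q^T\bar x \ge (1-O(\epsilon))\Bigl(\sum_{j=1}^r (v_j^*)^2 + q^Tx^*\Bigr) = (1-O(\epsilon))\OPT.
\]
Taking the best solution over all guesses and rescaling $\epsilon$ yields the PTAS.

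I would organize the write-up as: (1) cp-decomposition of $Q$ and reduction to $r+1$ linear forms; (2) the guessing step and the count of guesses; (3) invoking Theorem~\ref{thm:lin}/Remark~\ref{rem:extension} on the augmented instance and the rounding loss bound; (4) combining the estimates to get $w(\bar x) \ge (1-O(\epsilon))\OPT$; (5) the error-term remark pointing to Section~\ref{err}. The main obstacle I anticipate is handling the \emph{lower-bound} constraints $p_j^Tx \ge \gamma_j$ cleanly: the convex-relaxation machinery of Section~\ref{lin-convex} is phrased for packing (upper-bound) constraints, and although Remark~\ref{rem:extension} allows an extra linear system $Ax \le b$, covering constraints need a small separate argument — either a direct extension of the BFS-rounding analysis showing that rounding down fractional coordinates decreases each $p_j^Tx$ by at most a $(1-O(\epsilon))$ factor given our guess of its magnitude, or absorbing the lower bounds into the guessing itself by noting that it suffices to \emph{certify} $w(\bar x)$ is large rather than to enforce the lower bounds as hard constraints. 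A secondary technical point, already flagged in the paper, is the additive error $\Delta$ in the cp-decomposition of $Q$ appearing now in the \emph{objective}; this is handled exactly as in Section~\ref{err}, since $\|\Delta\|_\infty \le \delta$ perturbs $w$ by at most $\delta n^2$, which can be made $\le \epsilon \cdot \OPT$ by the usual preprocessing that ensures $\OPT$ is not too small.
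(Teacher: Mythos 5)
Your overall strategy is the same as the paper's: factor $Q=\sum_{j=1}^{r}a_ja_j^T$ via Corollary~\ref{cp-rank-approx}, view $w(x)=\sum_j(a_j^Tx)^2+q^Tx$ as a function of $r+1$ nonnegative linear forms, and produce a feasible point at which every form is within a $(1-O(\epsilon))$ factor of its value at the optimum. The paper packages this as a one-line application of the $\epsilon$-approximate Pareto-frontier result (Theorem~\ref{theo:approx-pareto} with $|\cI|=r+1$, $|\cJ|=0$); the geometric-grid guessing of the values $a_j^Tx^*$ that you describe is exactly what Algorithm~\ref{mul-PTAS} does internally via the thresholds $\alpha_{j\lambda_j}=(1-\epsilon^2)^{1-\lambda_j}$, and maximizing the last form subject to lower bounds on the others is its inner convex program (CP2$[\cU]$). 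So the route is essentially identical, just unrolled rather than delegated to the multi-objective machinery.

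However, the obstacle you flag at the end --- surviving the round-down of the BFS while the lower bounds $a_j^Tx\ge\gamma_j$ are in force --- is a genuine gap as written, and neither of your two proposed workarounds is developed enough to close it. The difficulty is that the fractional coordinates dropped by the rounding are only guaranteed to have small \emph{objective} utility if you guessed the top items with respect to the objective; they may carry a large share of $a_j^Tx$ for some other form $j$, so ``the slack from rounding is absorbed into the guessed intervals'' does not follow. The paper's fix, inside Theorem~\ref{theo:approx-pareto}, is to enumerate, \emph{separately for each linear form} $f_j=a_j^Tx$, the set $X_j$ of the $\lambda$ items of largest $a_j$-value in the optimal solution and to exclude (via $Y_j$) all items outside $X_j$ of larger $a_j$-value; then each of the $O(\phi+2p+d+m)$ dropped coordinates contributes at most $\min_{k'\in X_j}a_{jk'}\le a_j^Tx^*/\lambda$ to $a_j^Tx$, so each form loses only a multiplicative $(1-O(\epsilon))$ factor, and the lower bounds need only hold approximately --- which is all the final estimate $w(y)\ge(1-\epsilon)^2\OPT$ requires. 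If you add this per-form top-item enumeration (a $n^{O(r\lambda)}$ blow-up, still polynomial for fixed $r,\epsilon$), your argument goes through; your handling of the decomposition error via Section~\ref{err} is fine.
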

\begin{proof}
Let $I$ be an instance of (BQCQP) and let $\epsilon>0$ be a fixed constant.  Since $p=\text{cp-rank}(Q)$ is fixed, one can find in polynomial time a factorization $Q=\sum_{i=1}^r{a_ja_j^T}$, $a_j\in \RR_+^{n}$. Hence, the objective function can be written in the form $\sum_{j=1}^p(a_j^Tx)^2+a_{p+1}^Tx$, where $a_{p+1}=q$.  Applying the Theorem \ref{theo:approx-pareto} for the multi-objective instance with $|\mathcal{I}|=p+1,|\mathcal{J}|=0$, we obtain $P_{\epsilon}$ as an approximation of the Pareto-optimal frontier. Let $x^*\in X$ be an optimal solution to the instance $I$, there exists a solution $y\in P_{\epsilon}$ such that $a_j^Ty\ge (1-\epsilon)a_j^Tx^*$, for all $j\in[p+1]$. Hence, 
\[
w(y)=\sum_{j=1}^p(a_j^Ty)^2+a_{p+1}^Ty\ge (1-\epsilon)^2\sum_{j=1}^p(a_j^T x^*)^2+(1-\epsilon)a_{p+1}^Tx^*\ge (1-\epsilon)^2\OPT.
\]
Let $y^* =\argmax\{w(x)|~x\in P_{\epsilon}$\} (note that the size of $P_{\epsilon}$ is polynomial in size of the input), we have $w(y^*)\ge w(y)\ge (1-\epsilon)^2\OPT$.\end{proof}

\subsection{Binary Multiplicative Programming Problems}
We consider the following problem where the objective is the product of a number of linear functions.
\begin{eqnarray*}
\textsc{(BMPP)} \qquad & \displaystyle \max\,\, w(x)= (a_1^Tx)\cdot (a_2^Tx)\cdot \ldots \cdot (a_p^Tx)\label{}\\
\text{subject to}\qquad & \displaystyle x \in X \subseteq\{0,1\}^n,
\end{eqnarray*}
where $a_j\in\RR_+^{n}$ for all $j\in [p]$.
\begin{theorem}
If $p$ is constant, the problem (BMPP) admits a PTAS.
\end{theorem}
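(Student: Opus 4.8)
The plan is to reduce (BMPP) to the multi-objective problem of Theorem~\ref{theo:approx-pareto}. First I would regard each linear form $a_j^Tx$, $j\in[p]$, as a separate objective to be \emph{maximized}, giving an instance of (BMOQ) with $\mathcal{I}=[p]$, $\mathcal{J}=\varnothing$, and exactly the same feasible region $X$ (the same quadratic packing constraints of fixed cp-rank together with $Ax\le b$). Fix the target accuracy $\epsilon>0$, set $\epsilon':=\epsilon/(2p)$ (any choice guaranteeing $(1-\epsilon')^p\ge 1-\epsilon$ works, and since $p$ is constant $\epsilon'$ is a constant), and run Algorithm~\ref{mul-PTAS} to obtain an $\epsilon'$-approximate Pareto-optimal frontier $P_{\epsilon'}$. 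By Theorem~\ref{theo:approx-pareto} this set has size polynomial in the input and is computed in polynomial time. The algorithm then outputs $y^*:=\argmax\{w(x)\mid x\in P_{\epsilon'}\}$, which costs only an extra polynomial factor.

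For correctness, let $x^*\in X$ be an optimal solution, so $w(x^*)=\OPT$. Applying the definition of an $\epsilon'$-approximate Pareto-optimal frontier to the solution $x^*$, there is some $y\in P_{\epsilon'}$ with $a_j^Ty\ge (1-\epsilon')\,a_j^Tx^*$ for every $j\in[p]$. Since every $a_j\in\RR_+^n$ and $x\ge 0$, all these quantities are nonnegative, so multiplying the $p$ inequalities yields
\[
w(y)=\prod_{j=1}^p a_j^Ty\ \ge\ (1-\epsilon')^p\prod_{j=1}^p a_j^Tx^*\ =\ (1-\epsilon')^p\,\OPT\ \ge\ (1-\epsilon)\,\OPT .
\]
Because $y^*$ maximizes $w$ over $P_{\epsilon'}\ni y$, we conclude $w(y^*)\ge w(y)\ge(1-\epsilon)\OPT$, i.e.\ $y^*$ is $\epsilon$-optimal; the degenerate cases ($X=\varnothing$, or $\OPT=0$) are trivial.

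The only genuine point to get right is that $w$ is a \emph{product}, not a linear function, so it cannot be plugged directly into the convex-programming relaxation of Section~\ref{lin-convex}; the whole force of the reduction is that a coordinate-wise $(1-\epsilon')$ guarantee on each factor composes — by multiplicativity and nonnegativity of the $a_j^Tx$ — into a $(1-\epsilon')^p$ guarantee on the product, and with $p$ constant this stays bounded away from zero for a constant $\epsilon'$. Everything else (polynomial running time, polynomial size of the frontier) is inherited verbatim from Theorem~\ref{theo:approx-pareto}, so I do not expect any further obstacle; the main care is just in choosing $\epsilon'$ as a function of $\epsilon$ and $p$ and in checking nonnegativity when multiplying the per-objective bounds.
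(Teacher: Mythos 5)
Your proposal is correct and follows essentially the same route as the paper: both reduce (BMPP) to the multi-objective instance with $|\mathcal{I}|=p$, $|\mathcal{J}|=0$, invoke Theorem~\ref{theo:approx-pareto}, multiply the $p$ coordinate-wise guarantees to get a $(1-\epsilon')^p$ bound on the product, and return the maximizer over the frontier. The only cosmetic difference is that you rescale the accuracy to $\epsilon'=\epsilon/(2p)$ up front, whereas the paper keeps the $(1-\epsilon)^p$ factor and appeals to $p$ being constant; these are equivalent.
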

\begin{proof}
Again, applying the Theorem \ref{theo:approx-pareto} for the multi-objective instance with $|\mathcal{I}|=p,|\mathcal{J}|=0$, we obtain an approximation of the Pareto set $P_{\epsilon}$. Let $x^*\in X$ be an optimal solution to the instance of (BMPP), there exists a solution $y\in P_{\epsilon}$ such that $a_j^Ty\ge (1-\epsilon)a_j^Tx^*$, for all $j\in[p]$. Hence, 
\[
w(y)=\prod_{j=1}^pa_j^Ty\ge (1-\epsilon)^p\prod_{j=1}^p(a_j^T x^*)=(1-\epsilon)^p\OPT.
\]
Again, let $y^* =\argmax\{w(x)|~x\in P_{\epsilon}$\}, it follows $w(y^*)\ge (1-\epsilon)^p\OPT$. Note that $p$ is constant, the result follows.\end{proof}

 \subsection{Sum of Ratios}
We consider the binary quadratically constrained programming with a rational objective function. 
\begin{eqnarray*}
\textsc{(Sum-Ratio)} \qquad & \displaystyle \max w(x)= {\sum}_{j=1}^p\dfrac{a_j^Tx}{b_j^Tx}\label{}\\
\text{subject to}\qquad & \displaystyle x \in X \subseteq\{0,1\}^n.
\end{eqnarray*}

This problem belongs to the class of fractional programming which has important applications in several areas such as transportation, finance, engineering, statistics (see the survey paper by \cite{sch-shi:j:fractional-programming-the-sum-of-ratio} and the references therein, for more applications).
\begin{theorem}
Suppose that $p$ is fixed. The problem \textsc{(Sum-Ratio)}  admits a PTAS.
\end{theorem}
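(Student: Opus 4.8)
The plan is to follow the same template used above for (BQCQP) and (BMPP): reduce to the computation of an approximate Pareto-optimal frontier via Theorem~\ref{theo:approx-pareto}, and then pick the best point of that frontier. First I would instantiate a multi-objective problem (BMOQ) over the \emph{same} feasible region $X$, with maximization objectives $f_j(x)=a_j^Tx$, $j\in[p]$, and minimization objectives $g_j(x)=b_j^Tx$, $j\in[p]$; thus $|\mathcal{I}|=|\mathcal{J}|=p$ is constant, and Theorem~\ref{theo:approx-pareto} produces in polynomial time a set $P_{\epsilon'}$ of polynomial size that is an $\epsilon'$-approximate Pareto-optimal frontier, for a parameter $\epsilon'$ to be fixed as $\epsilon'=\epsilon/2$.

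Next, let $x^*\in X$ be an optimal solution to (Sum-Ratio). By the definition of an $\epsilon'$-approximate Pareto-optimal frontier there is some $y\in P_{\epsilon'}$ with $a_j^Ty\ge(1-\epsilon')a_j^Tx^*$ and $b_j^Ty\le b_j^Tx^*/(1-\epsilon')$ for every $j\in[p]$. Since $a_j,b_j\in\RR_+^n$ and $x\in\{0,1\}^n$ all the quantities involved are nonnegative; assuming, as is standard for fractional programming, that $b_j^Tx>0$ for every feasible $x$ (otherwise the instance is ill-posed), we obtain term by term
\[
\frac{a_j^Ty}{b_j^Ty}\ \ge\ \frac{(1-\epsilon')\,a_j^Tx^*}{b_j^Tx^*/(1-\epsilon')}\ =\ (1-\epsilon')^2\,\frac{a_j^Tx^*}{b_j^Tx^*},
\]
and summing over $j$ gives $w(y)\ge(1-\epsilon')^2\OPT\ge(1-\epsilon)\OPT$. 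Finally, outputting $y^*:=\argmax\{w(x)\mid x\in P_{\epsilon'}\}$, which is computable in polynomial time since $|P_{\epsilon'}|$ is polynomial in the input size, yields $w(y^*)\ge w(y)\ge(1-\epsilon)\OPT$; as $p$ is fixed the whole procedure runs in polynomial time, so this is a PTAS.

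The only real subtlety is the treatment of the denominators: one must know that $b_j^Tx$ is bounded away from $0$ over the feasible region, so that the multiplicative slack $(1-\epsilon')^{-1}$ in the minimization objectives translates cleanly into a $(1-\epsilon')^{-1}$ factor inside each ratio and the ratios are well-defined. This is a well-posedness hypothesis on the input rather than a difficulty of the scheme itself; granting it, the argument is essentially identical to the (BMPP) case, with the product of $p$ linear forms replaced by a sum of $p$ ratios, and---crucially---the resulting approximation guarantee $(1-\epsilon')^2$ is \emph{independent} of $p$ because the slack enters each summand multiplicatively and factors out of the sum.
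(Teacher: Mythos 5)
Your proposal is correct and follows essentially the same route as the paper: instantiate Theorem~\ref{theo:approx-pareto} with $|\mathcal{I}|=|\mathcal{J}|=p$, use the Pareto guarantee term by term to get a $(1-\epsilon')^2$ factor on each ratio, sum, and return the best point of the polynomial-size frontier. Your explicit remarks on choosing $\epsilon'=\epsilon/2$ and on the positivity of the denominators $b_j^Tx$ are minor refinements the paper leaves implicit, but the argument is the same.
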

\begin{proof}
applying the Theorem \ref{theo:approx-pareto} for the instance with $|\mathcal{I}|=|\mathcal{J}|=p$, we obtain an approximation of the Pareto set $P_{\epsilon}$ of the instance of multi-objective problem. Let $x^*\in X$ be an optimal solution to an instance of \textsc{(Sum-Ratio)} problem, there exists a solution $y\in P_{\epsilon}$ such that $a_j^Ty\ge (1-\epsilon)a_j^Tx^*$ and $b_j^Ty\le b_j^Tx^*/(1-\epsilon)$, for all $j\in[p]$. Hence, 
\[
w(y)=\sum_{j=1}^p\dfrac{a_j^Ty}{b_j^Ty}\ge (1-\epsilon)^2\sum_{j=1}^p\dfrac{a_j^Tx^*}{b_j^Tx^*}= (1-\epsilon)^2\OPT.
\]
Again, let $y^* =\argmax\{w(x)|~x\in P_{\epsilon}$\}, the proof is completed.
\end{proof}



\section{Conclusion}\label{conc}
In this paper, we have investigated the approximability of binary quadratic programming problems when the cp-rank of the constraints' matrices are bounded by a constant. It turns out that bounding the cp-rank of these matrices makes several interesting variants of the quadratic programming problem tractable. In particular, our results hint that limiting the cp-rank makes the quadratic problem exhibit similar approximability behavior as the linear case, assuming a constant number of packing or covering quadratic constrains.  For the case with any number of quadratic constraints and linear objective function, one can easily obtain a $O(m\sqrt{r})$-approximation algorithm for the packing problem
, where $r=\max_{i\in [m]}r_i$. The first interesting question is whether there exists a (Q)PTAS for the covering problem with a constant number of quadratic constraints.  
Extending our results to the case of low nonnegative ranks, and finding non-trivial inapproximability bounds, parametrized by the cp-rank of the constraints matrices are also another interesting open problems.

\bibliographystyle{alpha} 
\bibliography{reference}

\end{document}